\documentclass[leqno]{amsart}
\usepackage{amsfonts,amssymb,amsmath,amsgen,amsthm}
\usepackage{fancyhdr}
\usepackage{color}
\usepackage[svgnames]{xcolor}
\xdefinecolor{color}{named}{Crimson}

\setcounter{tocdepth}{1}

\theoremstyle{plain}
\newtheorem{theorem}{Theorem}[section]
\newtheorem{definition}[theorem]{Definition}

\newtheorem{lemma}[theorem]{Lemma}
\newtheorem{corollary}[theorem]{Corollary}
\newtheorem{proposition}[theorem]{Proposition}

\theoremstyle{remark}
\newtheorem{remark}[theorem]{Remark}
\newtheorem*{notation}{Notation}

\font\tenms=msbm10
\font\sevenms=msbm7
\font\fivems=msbm5
\newfam\bbfam \textfont\bbfam=\tenms \scriptfont\bbfam=\sevenms
\scriptscriptfont\bbfam=\fivems

\let\vf=\varphi

\def\eps{\varepsilon}

\def\cdotv{\raise 2pt\hbox{,}}

\setlength{\unitlength}{1cm}

\def\C{{\mathbf C}}
\def\R{{\mathbf R}}
\def\N{{\mathbf N}}
\def\Sch{{\mathcal S}}

\def\virgp{\raise 2pt\hbox{,}}

\def\({\left(}
\def\){\right)}
\def\<{\left\langle}
\def\>{\right\rangle}

\def\d{{\partial}}

\pagestyle{fancy}
\fancyhf{}

\setlength{\headheight}{12pt}
\chead[L.HARI]{NL PROP. OF COHERENT STATES THROUGH AVOIDED CROSSING}
\fancyhead[LE,RO]{\thepage}

\numberwithin{equation}{section}

\title{Nonlinear Propagation of coherent states through avoided energy level crossing.}
\author[L. ~Hari]{Lysianne Hari}
\address[L. Hari]{University of Cergy-Pontoise\\
                  UMR CNRS 8088\\
                  F-95000 Cergy-Pontoise.}
\email{Lysianne.Hari@u-cergy.fr}

\setcounter{tocdepth}{1}

\begin{document}
\begin{abstract}
We study the propagation of wave packets for a one-dimensional system of two coupled Schr\"odinger equations with 
a cubic nonlinearity, in the semi-classical limit. Couplings are induced by the nonlinearity and by the potential, 
whose eigenvalues present an ``avoided crossing'': at one given point, the gap between them reduces 
as the semi-classical parameter becomes smaller. 
For data which are coherent states polarized along an eigenvector of the potential, 
we prove that when the wave function propagates through the avoided crossing point, 
there are transitions between the eigenspaces at leading order. 
We analyze the nonlinear effects, which are noticeable away from the crossing point, but see that 
in a small time interval around this point, the nonlinearity's role is negligible at leading order, 
and the transition probabilities can be computed with the linear Landau-Zener formula.
\end{abstract}
\maketitle
\let\thefootnote\relax\footnotetext{This work was supported by ERC Grant DISPEQ.}
\section{Introduction}
In the framework of the Born-Oppenheimer approximation, systems of linear time-dependent Schr\"odinger equations 
have been studied throughout years in order to understand molecular dynamics. 
The notion of adiabaticity and questions about energy level crossing arose. 
In fact, when one considers systems where the electronic energy levels are assumed to be well isolated from each other, 
one can prove that there is an \textit{adiabatic decoupling}. 
The validity of such appoximations has been analyzed, in various settings (see for instance \cite{MartinezSordoni} and 
\cite{SpohnTeufel} and the references given there); however the approximation breaks down in the presence of eigenvalue crossing, 
leading to numerous questions about those situations.
\\
Thus several types of eigenvalue crossing phenomena have been analyzed, 
since they can imply transitions between electronic energy levels.
One of these situations, where the adiabatic approximation breaks down is when one has an \textit{Avoided crossing}, 
as studied in \cite{HJ98}, \cite{HJ99}, \cite{HagClass}, and \cite{Rousse04}.
\begin{definition}
\label{avoided}
 Let $d,n\in \N^{*}$, and $\Omega \subset \R^d$ an open subset of $\R^d$.
We suppose that $V_\delta(x)$ is a family of $n\times n$ symmetric and smooth matrices on $\Omega$ and 
$\delta \in [0,\delta_0)$, for a fixed $\delta_0>0$. 
 We suppose that $V_\delta(x)$ has two eigenvalues $\lambda_\delta^{\mathcal{A}}(x),\lambda_\delta^{\mathcal{B}}(x)$ 
 such that for all $x \in \Omega$, $\lambda_\delta^{\mathcal{A}}(x)\neq \lambda_\delta^{\mathcal{B}}(x)$, for $\delta >0$.
 We consider 
 $\Gamma$ given by
 $$\Gamma = \left\lbrace x \;| \;\lambda_0^{\mathcal{A}}(x)=\lambda_0^{\mathcal{B}}(x)\right\rbrace,$$ and assume that 
 $\Gamma$ is a single point or a non-empty connected proper submanifold of $\Omega$.
\\
 Then, we say that $V_\delta(x)$ has an \textbf{Avoided crossing} on $\Gamma$.
\end{definition}
In this case, two energy levels come close to one another, without crossing and a solution with a data polarized along 
one given mode is not polarized along this mode anymore, when it has propagated through the avoided crossing point. 
In the linear case, it is possible to compute the transition probabilities, 
thanks to the well known \textit{Landau-Zener formula} (\cite{Landau}, \cite{Zener}), 
which was mathematically proved in \cite{Hag91}, and in \cite{J94} with less restrictions. 
Propagation of specific coherent states through avoided crossing has been studied in \cite{HJ98} 
(see also \cite{HagClass} for a classification) and in \cite{Hag94} for ``exact'' crossing, 
where some issues about regularity are added in the study.
For more general data and crossing, the transition is also noticeable when one looks at 
the Wigner transform of the wave function, whose description is performed studying semi-classical measures and is useful 
to understand how the Wigner transform concentrates on trajectories passing through the crossing region 
(see for instance  \cite{FG02} for semi-classical measures describing the Wigner transform in an explicit case and \cite{FG03} 
for more general data). 
The crossing phenomena have been analyzed in the linear case and the reader can find various results on different aspects, 
including \cite{CdV1}, \cite{CdV2} for a classification of results for more general equations. 
For some numerical simulations, that are used in other fields such as Quantum Chemistry, 
we refer the reader to see for instance \cite{FL08} and \cite{LT}.
\\[1mm]
\indent
The aim of this paper is to analyze the nonlinear twin of the situation presented in \cite{HJ98} 
in a simple and explicit case, in order to understand the nonlinear effects combined with the crossing phenomenon: 
we will study an avoided crossing phenomenon, which occurs at one point, for a system of two nonlinear time-dependent 
Sch\"odinger equations with an initial coherent state, in dimension one, with a cubic nonlinearity. 
\\
This problem arises from the description of nonadiabatic transitions when one studies properties of 
binary mixtures of Bose-Einstein Condensates (see \cite{Cornell1}, \cite{Cornell2}, \cite{Cornell3}). 
The nonlinearity induces a coupling between each mode and if the systems do present crossing phenomena, 
we want to understand how effects from both couplings can interact. Some cases with a potential without eigenvalue crossing 
have been studied, in order to analyze the nonlinear effects: in \cite{Adiab} and \cite{Hari1}, for initial coherent states, 
adiabatic theorems and validity of approximations of the wave function at leading order are proved 
for cubic nonlinearities, provided there is a gap assumption. 
Note also that other point of views, such as stationnary problems, are also discussed, for instance in \cite{AftaBlancNier}, 
using Ginzburg-Landau Energies in the context of Binary mixtures of Bose-Einstein Condensates. 
Some results in different frameworks, such as quantum systems with periodic potentials, 
and transitions between Bloch bands, can be found for instance in \cite{NLZper},\cite{ErrNLZper}. 
The result presented here extends to more general nonlinearities, of the form $(a|\psi^\eps_1|^2+b|\psi^\eps_2|^2)\psi^\eps$ 
such as whose studied in \cite{Cornell1}, \cite{Cornell2}, \cite{Cornell3} and references there. 

\subsection{Framework}
We consider the semi-classical limit $\eps\to 0$ for the nonlinear
Schr\"odinger equation 
\begin{equation}
\label{NLS0}
     \left\lbrace \begin{array}{l} 
               i\eps\d_t \psi^\eps+ \dfrac{\eps^2}{2}\d_x^2 \psi^\eps - V_\delta(x)\psi^\eps = \kappa \eps^{3/2}|\psi^\eps|^2\psi^\eps; \\
               \psi^\eps(-T,.)\in \Sch(\R),
\end{array}\right.
\end{equation}
where $\psi^\eps(t,x)=\left(\psi^\eps_1(t,x);\psi^\eps_2(t,x)\right)$, $(t,x)\in \R \times \R$, 
$\kappa \in \R$ is a small coefficient, and 
the quantity $|\psi^\eps|^2$ denotes the square of the
Hermitian norm in $\C^2$ of the vector $\psi^\eps$. 
The initial data $\psi^\eps(-T,.)$ is a \textit{coherent state} or \textit{wave packet}, which concentrates 
at some given point in the phase space.
We consider the following potential $$V_\delta(x)=\begin{pmatrix}
               x & \delta \\
               \delta & -x
              \end{pmatrix}, \textrm{ where } \delta >0,$$
for $\delta = c\sqrt{\eps}$ for a nonnegative constant $c$.
The eigenvalues are given by $$\lambda_\delta^\pm (x) = \pm \sqrt{x^2+\delta^2}, \; \textrm{and}\;  \lambda_0^+(0)=\lambda_0^-(0)= 0.$$ 
One can see that the gap size between the two eigenvalues is at least $\delta>0$, and that it is minimal when $x=0$ 
and so that an avoided crossing phenomenon occurs at $x=0$. 
\\
\noindent
The eigenvectors associated with the eigenvalues of $V_\delta$ are
$$\chi_\delta^\pm(x) = \left( \begin{array}{c}
                                                  \Theta_1^\pm(x)\\
                                                  \Theta_2^\pm(x)
                                                 \end{array}\right),$$                                                  
with                                          
\begin{equation}
\label{theta12}
\left\lbrace 
\begin{array}{cc}
 \Theta_1^+(x) =-\Theta_2^-(x)&= \dfrac{\delta}{\left(2\sqrt{x^2+\delta^2}(\sqrt{x^2+\delta^2}-x)\right)^{1/2}},\\
 \\
 \Theta_2^+(x) =\Theta_1^-(x) &= \dfrac{\sqrt{x^2+\delta^2}-x}{\left(2\sqrt{x^2+\delta^2}(\sqrt{x^2+\delta^2}-x)\right)^{1/2}}.
\end{array}\right.
\end{equation}
One can see that since $\delta >0$, the eigenvectors are $\mathcal{C}^\infty$.
In the following, we will use estimates on their derivatives: writing $\Theta_j^\pm$ as $\Theta_j^\pm(x)=f_j^\pm(x,\delta),$ where $f_j^\pm$ are homogeneous 
of degree 0 and we obtain 
\begin{equation}
 \label{eigenvectors}
 \left|\d_x^\alpha \chi_\delta^\pm (x)\right| \lesssim \delta^{-\alpha}, \quad \forall \alpha \in \N.
\end{equation}
It is also possible to obtain finer bounds in some cases, see Section \ref{eigen} for a deeper discussion about these eigenvectors.
\\[2mm]
Let us first comment on the value of the parameter $\delta$.
In the linear case, studied in \cite{HJ98}, the authors work with a gap size of the order $\sqrt{\eps}$ since 
there can be important transitions between each mode for $\delta$ of this size. The same critical value appears in the study of 
Dirac type equations, studied in \cite{FermanianDirac}. For higher power of $\eps$, 
one can prove that the adiabatic decoupling is still valid. We choose to study the nonlinear propagation in this setting too. 
For asymptotics in linear cases, with other values of $\delta$, which can be independent of $\eps$ we refer the reader to~\cite{Rousse04}.
\\[2mm]
Another notion of criticality appears for the exponent of the nonlinearity. We recall that if we write the nonlinearity 
 $\kappa \eps^{\alpha}|\psi^\eps|^{2\sigma}\psi^\eps$ and denote by $d$ the space dimension, we say that:
\begin{itemize}
 \item the nonlinearity is $L^2-$\textit{subcritical} if $\sigma < 2/d$, $L^2-$\textit{supercritical} otherwise.
 \item for $d\geq 3$, the nonlinearity is $H^1-$\textit{subcritical} if $\sigma < 2/(d-2)$.
\end{itemize}
Here, the nonlinearity is $L^2-$subcritical which is a good point to prove global existence of the solution 
for fixed $\eps$ more easily, and to deal with other technical issues that will be developped later.
\\
The coefficient $\kappa$ can be either negative or nonnegative, but has to be small:
$\exists C>0$ independent of $\eps, \delta$, such that $|\kappa|\leq 1/C.$ For convenience, we will use $|\kappa| \leq 1$ and will 
then make a restriction and take it smaller in the analysis, when it will be needed.
\\
It is also worth pointing out that the nonlinearity is critical for semi-classical wave packets, in the case without crossing: 
if we write the nonlinearity $\kappa \eps^{\alpha}|\psi^\eps|^2\psi$, and introduce $\alpha_c=1+d\sigma/2=3/2$, 
for data which are wave packets, we have
\begin{itemize}
 \item if $\alpha>\alpha_c$, one can linearize the equation at leading order, 
 since the nonlinearity's weight (or in an other point of view, the size of the initial data) is not big enough to have an effect. 
 One can build an approximation which is a linear coherent state (at leading order), in the case of adiabatic regimes.
 \item if $\alpha=\alpha_c$, for this critical situation, the effects of the nonlinearity cannot be neglected: 
 in the absence of crossing points, one can still approach the wave function by a coherent state at leading order, but it will 
 get some nonlinear effects.
\end{itemize}
We refer the reader to \cite{CF-p} for a deeper discussion about this critical exponent, in a scalar case, 
and \cite{Adiab}, \cite{Hari1} for matrix-valued cases. Similar discussions are made for Hartree equations in \cite{APPP11}, 
\cite{CaoCarles}.
\\
 For fixed $\eps$, and for $\delta >0$, since the potential is at most quadratic, in view of \cite{Ca-p}, 
 one can prove global existence and uniqueness of the solution $\psi^\eps$ to \eqref{NLS0}, for any data in 
 $\Sch(\R)$ and for any $\kappa \in \R$. 
\\
Since our aim is to understand the competition between couplings induced by the nonlinearity and those induced by the potential, 
if there is some, we choose a critical power of $\eps$ in front of the nonlinearity.
\subsection{Classical trajectories and actions}
We introduce the following quantities.
\\
\textit{Classical trajectories.} Let $(x^\pm(t), \xi^\pm(t))$ be the solution of the following system: 
  \begin{equation}
   \label{trajectories}
   \left\lbrace \begin{array}{l}
  \dot{x}^\pm(t)=\xi^\pm(t)\\
  \dot{\xi}^\pm(t)=-\partial_x \lambda_\delta^\pm(x^\pm(t))
 \end{array}\right. \qquad \textrm{with} \quad \left\lbrace \begin{array}{l}
  x^\pm(0)=x^\pm_0 \\
  \xi^\pm(0)=\xi^\pm_0 
 \end{array} \right.
  \end{equation}
\begin{remark}
 These trajectories admit a limit when $\delta$ tends to zero:
 $$\left\lbrace \begin{array}{l}
                 \dot{x}^\pm(t)=\xi^\pm(t)\\
                 \dot{\xi}^\pm(t) = \dfrac{\mp x(t)}{|x(t)|},
                \end{array}  \right.$$
Note that these limit trajectories are well-defined (see for instance \cite{FG03}).
\end{remark}
\begin{remark}
\label{rmk:traj}
Since $\delta>0$, the eigenvalues are smooth and so \eqref{trajectories} has a unique, global, $\delta-$dependent and smooth solution. 
Besides, for any $T>0$, using \cite{CF-p} one can write
$$\exists C>0, \; \exists \delta_0>0,  \; \forall t \in [-T,0],\; \forall \delta \in ]0,\delta_0], \; |\xi^\pm(t)| + |x^\pm(t) |\leq C.$$
\end{remark}

\noindent
\textit{Classical action.} $$S^\pm(t) = \int_0^t \dfrac{|\xi^\pm(s)|^2}{2}- \lambda_\delta^\pm(x^\pm(s)) ds.$$
 
\subsection{Initial time and Data} 
We consider the ``$+$'' classical trajectories introduced by \eqref{trajectories}, defined by choosing 
$$x_0=0 \quad ; \quad \xi_0>0, $$
as initial data for them. 
The point of minimal gap is reached at time $t=0$ for $x=0$.
\\
We chose an initial data $\psi^\eps(-T,x)$  which is a localized wave packet, 
polarized along the eigenvector $\chi_\delta^+$:
\begin{equation}
 \label{data}
 \psi^\eps(-T,x) = \eps^{-1/4}a\left(\dfrac{x-x^+(-T)}{\sqrt{\eps}}\right)
e^{\frac{iS^+(-T)}{\eps}+ \frac{i\xi^+(-T).(x-x^+(-T))}{\eps}}\chi^+_\delta(x),
\end{equation}
where the terms $x^+(-T), \xi^+(-T), S^+(-T)$ are the values of the quantities, introduced by 
\eqref{trajectories}, 
at time $t=-T$, and $a \in \Sch(\R)$.
\\
Moreover, we restrain to $T$ such that for $t\in[-T,0]$, the classical trajectory $x^+(t)$ grows to zero.

\subsection{Main result}
We will split the analysis of the propagation of the coherent state into three parts, 
depending on the closeness of the crossing region; we will have to consider different time intervals. 
Each one will lead to a different regime, and the approximations will have to be matched at the 
border of the time intervals. 
\\
It is necessary to consider different regimes because of nonadiabatic transitions: in fact, 
the wave function cannot remain localized in the mode ``$+$'' at leading order beyond a specific time, 
which is a small power of $\eps$. Thus, in order to deal with energy level transitions, 
we have to build different approximations.
\\[1mm]
We first introduce some functions.
\\[1mm]
In the adiabatic region, where $x^+(t) \ll 0$, we consider the function $u_\delta = u_\delta(t,y)$, solution to
\begin{equation}
 \label{profile}
 i\d_t u_\delta+ \dfrac{1}{2} \d_y^2 u_\delta - \dfrac{1}{2} \lambda_\delta^{+(2)}(x^+(t))y^2 u_\delta = 
 \kappa |u_\delta|^2 u_\delta \quad ; \quad
 u_\delta(-T,y)=a(y),
\end{equation}
where $a\in \Sch(\R)$, and 
$\lambda_\delta^{+(2)}(x) = \delta^2 \left(x^2+\delta^2\right)^{-3/2}.$ Existence and properties of $u_\delta$ are discussed in 
Section \ref{Proof thm:profile}.
\\[1mm]
In the crossing region, we introduce $f$, a vector-valued function, solution to
\begin{equation}
 \label{f}
 i\d_s f - \left(  \begin{array}{cc}
            y+s\xi_0 & c \\
            c & -(y+s\xi_0)
           \end{array}
\right)f = 0 ;
\end{equation}
with data
\begin{equation}
 \label{dataf}
  f(-c_0\eps^{-\gamma},y) = u_\delta(-c_0\eps^{1/2-\gamma},y)\left(\begin{array}{c}
                                                                  0 \\1
                                                                 \end{array}
 \right)e^{\frac{i}{\eps}\phi^\eps(y)}
\end{equation}
where $s,y\in \R$, $\gamma \in ]0,1/6[$, and $\phi^\eps(y)$ is a real-valued phase function (see \eqref{phasephi} 
for an explicit formula). Note that the system \eqref{f} presents transitions between each mode.
\\[2mm]
We can now state the main theorem of the paper, 
which gives a valid approximation of the exact solution $\psi^\eps$ at leading order, in the limit $\eps \rightarrow 0$:
\begin{theorem}[Main Theorem]
 \label{main}
 We consider $\psi^\eps (t,x)$ the exact solution to the Cauchy problem~\eqref{NLS0} with data \eqref{data}, and 
 $\delta = c\sqrt{\eps}$ for some $c>0$.
 Then, if $c_0>0$ is independent of $\eps$, and if $\gamma \in ]0,1/6[$:
 \\[2mm]
 $(1)$ For $-T\leq t \leq -c_0 \eps^{\frac{1}{2}-\gamma}$, in the limit $\eps \rightarrow 0$ we have
 $$\psi^\eps(t,x) = \eps^{-1/4}u_\delta\left(t,\dfrac{x-x^+(t)}{\sqrt{\eps}}\right)
 e^{\frac{iS^+(t)}{\eps}+\frac{i\xi^+(t).(x-x^+(t))}{\eps}}\chi_\delta^+(x)+\mathcal{O}(\eps^{\gamma}), \;\textrm{in }L^2,$$
where $u_\delta$ is the profile, solution to \eqref{profile}.
\\[2mm]
 $(2)$ For $-c_0\eps^{\frac{1}{2}-\gamma}\leq t \leq c_0\eps^{\frac{1}{2}-\gamma}$, in the limit $\eps \rightarrow 0$ we have
 $$\psi^\eps(t,x) = \eps^{-1/4}f\left(\dfrac{t}{\sqrt{\eps}},\dfrac{x-t\xi_0}{\sqrt{\eps}}\right)
 e^{\frac{i\xi_0^2t}{2\eps}+\frac{i\xi_0.(x-t\xi_0)}{\eps}} + \mathcal{O}(\eps^{\gamma/2}), \;\textrm{in }L^2$$
 where $f$ is the solution to \eqref{f} and with data \eqref{dataf}.
\end{theorem}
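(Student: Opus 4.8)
The plan is to treat the two time windows successively: on each one I construct an explicit approximate solution $\psi^\eps_{\mathrm{app}}$ — namely the leading term proposed in the corresponding assertion — estimate the residual it leaves in~\eqref{NLS0}, and close the estimate by a nonlinear stability argument, recalling that~\eqref{NLS0} is globally well posed in $\Sch(\R)$ by~\cite{Ca-p}. The part~$(1)$ approximation at $t_1:=-c_0\eps^{1/2-\gamma}$ coincides, up to $\mathcal O(\eps^\gamma)$, with the datum~\eqref{dataf} from which part~$(2)$ is launched, so the two regimes chain together.

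For part~$(1)$ I would pass to the moving frame of the ``$+$'' trajectory, writing $\psi^\eps(t,x)=\eps^{-1/4}e^{\frac i\eps(S^+(t)+\xi^+(t)(x-x^+(t)))}\,w^\eps\!\left(t,\frac{x-x^+(t)}{\sqrt\eps}\right)$, and inserting the ansatz $w^\eps=u_\delta\,\chi_\delta^+(x)$ with $u_\delta$ the solution of~\eqref{profile}. Using the Hamilton equations~\eqref{trajectories}, the identities defining $S^\pm$, the fact that $|\chi_\delta^\pm|\equiv1$, and a Taylor expansion of $\lambda_\delta^+$ about $x^+(t)$, the contributions of orders $\eps^{-1/4}$, $\eps^{1/4}$ and $\eps^{3/4}$ cancel — the order-$\eps^{1/4}$ cancellation, which makes the first-order Taylor term disappear, uses $\dot\xi^+=-\partial_x\lambda_\delta^+$, and the cubic nonlinearity survives precisely in~\eqref{profile} because it is critical for wave packets. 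The leftover residual $\rho^\eps$ is a sum of the nonadiabatic term $\eps\,\xi^+u_\delta\,\partial_x\chi_\delta^+$ (which lies along $\chi_\delta^-$, since $\partial_x\chi_\delta^+\perp\chi_\delta^+$), two dispersive tails $\eps^{3/2}\partial_yu_\delta\,\partial_x\chi_\delta^+$ and $\eps^2u_\delta\,\partial_x^2\chi_\delta^+$, and the cubic Taylor remainder of $\lambda_\delta^+$ acting on $u_\delta\chi_\delta^+$, all carrying the factor $\eps^{-1/4}e^{\frac i\eps(\cdots)}$. The key point is that on $[-T,t_1]$ one has $|x^+(t)|\gtrsim\eps^{1/2-\gamma}\gg\delta$ (the delicate instant being $t\simeq t_1$), so the refined bounds of Section~\ref{eigen} give $|\partial_x^\alpha\chi_\delta^+(x)|\lesssim\delta\,|x^+(t)|^{-\alpha-1}$ on the wave packet's support; then $\eps^{-1}\int_{-T}^{t_1}\|\rho^\eps(t)\|_{L^2}\,dt$ is dominated by the nonadiabatic term and is of size $\sqrt\eps\int_{-T}^{t_1}|x^+(t)|^{-2}\,dt\simeq\sqrt\eps\,|t_1|^{-1}\simeq\eps^\gamma$. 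A nonlinear stability estimate — an $L^2$ energy inequality combined with Strichartz estimates adapted to the semiclassical scaling, used to absorb the difference of cubic terms via $L^2$-subcriticality and the smallness of $\kappa$ — then gives $\|\psi^\eps-\psi^\eps_{\mathrm{app}}\|_{L^\infty([-T,t_1];L^2)}=\mathcal O(\eps^\gamma)$.

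For part~$(2)$ I would use the crossing scaling $s=t/\sqrt\eps$, $y=(x-t\xi_0)/\sqrt\eps$ and the ansatz $\psi^\eps_{\mathrm{app}}(t,x)=\eps^{-1/4}f(s,y)\,e^{\frac i\eps(\xi_0^2 t/2+\xi_0(x-t\xi_0))}$. Since $\delta=c\sqrt\eps$, one has $V_\delta(x)=\sqrt\eps\left(\begin{smallmatrix} s\xi_0+y & c\\ c & -(s\xi_0+y)\end{smallmatrix}\right)$, so the phase, kinetic and potential terms combine and the residual collapses to $e^{\frac i\eps(\cdots)}\eps^{3/4}\!\left(\tfrac12\partial_y^2f-\kappa|f|^2f\right)$: in this window both the dispersion and the cubic term are lower order and the leading dynamics is the \emph{linear} system~\eqref{f}, which is the precise sense in which the transition is governed by the Landau--Zener formula. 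Compatibility at $t_1$ is checked term by term: since $x^+(t_1)\ll-\delta$ one has $\chi_\delta^+(x)\to{}^t(0,1)$ on the support up to $\mathcal O(\delta/|x^+(t_1)|)=\mathcal O(\eps^\gamma)$; the mismatch between $(x-x^+(t_1))/\sqrt\eps$ and $(x-t_1\xi_0)/\sqrt\eps$, equal to $(x^+(t_1)-t_1\xi_0)/\sqrt\eps=\mathcal O(\eps^{1/2-2\gamma})$, is absorbed into the argument of $u_\delta$; and the leftover phase is the affine function $\phi^\eps$ of~\eqref{phasephi}, whose $y$-gradient $(\xi^+(t_1)-\xi_0)/\sqrt\eps\simeq\eps^{-\gamma}$ is a Galilean boost, which one removes by setting $f=e^{i\eta_0y}\tilde f$ so that $\tilde f$ is slowly varying and the $\eta_0^2$-part of $\partial_y^2f$ is recognised as a pure gauge plus a momentum shift in the phase rather than a genuine error. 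One then propagates control on $\tilde f$ through mass conservation and the commutator identities for~\eqref{f} (such as $i\partial_s(\partial_y\tilde f)=M(s,y)\,\partial_y\tilde f+\mathrm{diag}(1,-1)\tilde f$), obtaining algebraic-in-$\eps$ bounds on $\|\tilde f(s)\|_{H^2_y}$ over $|s|\le c_0\eps^{-\gamma}$; feeding the residual into a Duhamel formula and running a Gronwall argument over the short interval $|t|\le c_0\eps^{1/2-\gamma}$ then yields $\|\psi^\eps-\psi^\eps_{\mathrm{app}}\|_{L^\infty L^2}=\mathcal O(\eps^{\gamma/2})$, the restriction $\gamma<1/6$ being exactly what keeps every accumulated exponent of $\eps$ positive.

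The main obstacle is the nonlinear stability step. The nonlinearity being critical for these wave packets, it cannot be treated as a perturbation: after the semiclassical rescaling the linearized cubic coefficient is $\mathcal O(1)$, so closing the estimate for $\psi^\eps-\psi^\eps_{\mathrm{app}}$ requires combining energy estimates with semiclassical Strichartz bounds, exploiting $L^2$-subcriticality and — crucially — the smallness of $\kappa$, inside a bootstrap. Compounding this, every relevant quantity degrades as $\delta=c\sqrt\eps\to0$ — the eigenvectors $\chi_\delta^\pm$, the harmonic frequency $\lambda_\delta^{+(2)}(x^+(t))$ in~\eqref{profile}, and the weighted Sobolev norms of $u_\delta$ and of $f$ — so one must quantify these blow-ups (via Section~\ref{eigen} and the properties of $u_\delta$ established in Section~\ref{Proof thm:profile}) and verify that the two regimes glue at $t=\pm c_0\eps^{1/2-\gamma}$ with interface errors no larger than $\eps^\gamma$.
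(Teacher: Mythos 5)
Your proposal follows essentially the same route as the paper: pass to the moving frame and estimate the residual of $\vf^\eps\chi_\delta^+$ on $[-T,-c_0\eps^{1/2-\gamma}]$, identify the nonadiabatic term $\eps\xi^+u_\delta\,d\chi_\delta^+$ as the dominant source via the refined eigenvector bounds of Section~\ref{eigen}, and integrate to $\sqrt\eps/|t_1|\sim\eps^\gamma$ — this is precisely the content of Lemma~\ref{Leps} and Theorem~\ref{thm:IncomingOuter}; then rescale to the crossing frame, compare $v^\eps$ with $f$, and close by energy and Gronwall estimates, matching the paper's Theorem~\ref{thm:Inner}.

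The one genuine variant is the gauge $f=e^{i\eta_0y}\tilde f$ you introduce in Part~(2) to neutralise the $\mathcal O(\eps^{-\gamma})$ $y$-gradient of the phase $\phi^\eps$. The paper keeps $f$ with the phase built into its data~\eqref{dataf}, tolerates the resulting growth $\|\d_y^k f\|_{L^2}\lesssim\eps^{-k\gamma}$ (Lemma~\ref{lemma:f}), and then checks directly that the accumulated dispersion source $\eps^{1/2-3\gamma}$ is $o(1)$ for $\gamma<1/6$. Your gauge trick is a legitimate reorganisation — since~\eqref{f} contains no $\d_y$, the boost commutes with the evolution and $\tilde f$ does have $\eps$-uniform Sobolev bounds — but it would still force you to convert back at the end (the theorem names the ungauged $f$) and to track the induced time-dependent phase and spatial shift through the matching at $t=\pm c_0\eps^{1/2-\gamma}$; the bookkeeping is comparable either way. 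Two smaller imprecisions: the paper's nonlinear stability step is not really a Strichartz argument (Strichartz enters only in Section~\ref{Proof thm:profile} for the profile via the lens transform), but rather an $L^2$ energy estimate coupled with the weighted $H^1_\eps$ bound on $w^\eps$, a Gagliardo--Nirenberg interpolation for $\|w^\eps\|_{L^\infty}$, and a bootstrap; and the phrase ``on the wave packet's support'' glosses over the split $\{|x|\geq\theta|s|\}\cup\{|x|\leq\theta|s|\}$ and the use of arbitrary moments of $u_\delta$ (Theorem~\ref{thm:profile}) on the inner region, which is exactly where the refined eigenvector bound has to be earned. Neither is a gap, but both are where the argument needs to be quantitative rather than heuristic.
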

\begin{remark}
 For both approximations given by points $(1)$ and $(2)$ from the previous theorem, to deal with the nonlinearity, 
 we also need estimates in some weighted Sobolev spaces of type $H^1$ (see Theorems \ref{thm:IncomingOuter} and \ref{thm:Inner}). 
\end{remark}

\noindent
For time $t \in [-T,c_0\eps^{1/2-\gamma}]$, the exact solution $\psi^\eps$ can be approached, 
at leading order, by a coherent state polarized along the same eigenvector as the initial data. 
The nonlinear effect is noticeable thanks to the profile $u_\delta$, 
whose equation is a nonlinear Schr\"odinger equation \eqref{profile}. 
\\
The transition between energy levels occurs on the time interval 
$$[-t^\eps,t^\eps]=[-c_0\eps^{1/2-\gamma},c_0\eps^{1/2-\gamma}]$$ but the nonlinearity, 
does not affect the phenomenon.
\begin{corollary}[Transition between the modes]
 \label{cor:main}
 We consider $$\psi^\eps (t,x)= \psi^\eps_+(t,x)\chi_\delta^+(x) + \psi^\eps_-(t,x)\chi_\delta^-(x)$$ 
 the exact solution to the Cauchy problem~\eqref{NLS0} with data \eqref{data}, and 
 $\delta = c\sqrt{\eps}$ for some $c>0$.
 Consider $c_0>0$, $\gamma \in ]0,1/6[$, and $t^\eps = c_0\eps^{1/2-\gamma}$, when $\eps \rightarrow 0$
 \begin{align*}
    \forall t \in [-T,-t^\eps], \quad \|\psi^\eps_+(t)\|_{L^2}^2 & = \|a\|_{L^2}^2+o(1)\\
  \|\psi^\eps_-(t)\|_{L^2}^2 &= o(1).
 \end{align*}
and for $p=e^{-\frac{\pi c^2}{\xi_0}}$
\begin{align*}
 \|\psi^\eps_+(t^\eps)\|_{L^2}^2 & = (1-p) \|a\|_{L^2}^2+o(1)\\
  \|\psi^\eps_-(t^\eps)\|_{L^2}^2 &= p\|a\|_{L^2}^2 + o(1).
\end{align*}
\end{corollary}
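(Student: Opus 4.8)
The plan is to derive Corollary~\ref{cor:main} directly from the Main Theorem (Theorem~\ref{main}) together with the linear Landau--Zener analysis of the model system \eqref{f}. The corollary is essentially a ``reading off'' of the leading-order $L^2$-masses in each mode from the two approximations provided by Theorem~\ref{main}, so the work is in computing those masses and matching the constants.

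\textbf{Step 1: the incoming region.} For $t\in[-T,-t^\eps]$ I would use point $(1)$ of Theorem~\ref{main}. Since $\chi_\delta^+(x)$ and $\chi_\delta^-(x)$ form an orthonormal basis of $\C^2$ pointwise, the decomposition $\psi^\eps=\psi^\eps_+\chi_\delta^++\psi^\eps_-\chi_\delta^-$ gives $\psi^\eps_\pm(t,x)=\langle \psi^\eps(t,x),\chi_\delta^\pm(x)\rangle_{\C^2}$. Projecting the approximation from $(1)$ onto $\chi_\delta^+$ and $\chi_\delta^-$, and using $\langle\chi_\delta^+,\chi_\delta^-\rangle=0$, one finds $\psi^\eps_+(t,x)=\eps^{-1/4}u_\delta(t,(x-x^+(t))/\sqrt\eps)e^{i\Phi/\eps}+\mathcal O(\eps^\gamma)$ in $L^2$ while $\psi^\eps_-(t,x)=\mathcal O(\eps^\gamma)$ in $L^2$. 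A change of variables $y=(x-x^+(t))/\sqrt\eps$ removes the $\eps^{-1/4}$ prefactor and the oscillating phase, giving $\|\psi^\eps_+(t)\|_{L^2}^2=\|u_\delta(t)\|_{L^2}^2+o(1)$ and $\|\psi^\eps_-(t)\|_{L^2}^2=o(1)$. Finally, the $L^2$ norm is conserved by the nonlinear Schr\"odinger flow \eqref{profile} (the equation is Hamiltonian with conserved mass), so $\|u_\delta(t)\|_{L^2}=\|u_\delta(-T)\|_{L^2}=\|a\|_{L^2}$, which yields the first pair of identities.

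\textbf{Step 2: the crossing region and Landau--Zener.} For $t=t^\eps$ I would invoke point $(2)$ of Theorem~\ref{main}: $\psi^\eps(t^\eps,x)=\eps^{-1/4}f(t^\eps/\sqrt\eps,(x-t^\eps\xi_0)/\sqrt\eps)e^{i\Psi/\eps}+\mathcal O(\eps^{\gamma/2})$ in $L^2$. After the change of variables the mass of each $\C^2$-component of $\psi^\eps(t^\eps)$ is, up to $o(1)$, the corresponding component-mass of $f(s,\cdot)$ at $s=s^\eps:=t^\eps/\sqrt\eps=c_0\eps^{-\gamma}\to+\infty$. Now the key point is that \eqref{f} is a \emph{linear} $2\times2$ system: writing $f=f_1e_1+f_2e_2$ in the fixed basis, the total mass $\|f(s)\|_{L^2}^2$ is conserved in $s$, and one must identify the asymptotic split between $\|f_1(s)\|_{L^2}^2$ and $\|f_2(s)\|_{L^2}^2$ as $s\to+\infty$. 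This is exactly the Landau--Zener problem for the Hamiltonian $\binom{\ y+s\xi_0\ \ c\ }{\ \ c\ \ -(y+s\xi_0)}$: for each fixed $y$ it is a rotated Landau--Zener matrix with effective time variable $\tau=y/\xi_0+s$ and gap parameter $c$, and the classical Landau--Zener formula (as proved in \cite{Hag91}, \cite{J94}) gives, starting from the lower mode at $s\to-\infty$, a transition probability to the upper mode equal to $e^{-\pi c^2/\xi_0}$ (note the velocity $\dot\tau=1$ in $s$ but the slope in the physical energy is $2\xi_0$, giving the exponent $\pi c^2/\xi_0$ after the standard normalization). Since the data \eqref{dataf} at $s=-s^\eps$ is polarized along $e_2$ (the lower adiabatic mode for $s\ll0$) with amplitude $u_\delta(-c_0\eps^{1/2-\gamma},\cdot)$, whose $L^2$-mass is $\|a\|_{L^2}^2+o(1)$ by Step 1's mass conservation and continuity, I get $\|f_2(s^\eps)\|_{L^2}^2\to(1-p)\|a\|_{L^2}^2$ and $\|f_1(s^\eps)\|_{L^2}^2\to p\|a\|_{L^2}^2$ with $p=e^{-\pi c^2/\xi_0}$. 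Translating $e_1,e_2$ back to $\chi_\delta^\pm$ at the relevant point and time (where $x^+(t^\eps)\approx t^\eps\xi_0$ is small and $\chi_\delta^\pm$ are, up to $o(1)$, the standard basis vectors, matching the outgoing modes) gives the stated values for $\|\psi^\eps_\pm(t^\eps)\|_{L^2}^2$.

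\textbf{Main obstacle.} The delicate point is not the mass bookkeeping but the precise \emph{identification of the modes} and the exact constant in the exponent: one must check that the asymptotic components of $f$ along the fixed basis $e_1,e_2$ as $s\to\pm\infty$ really correspond to the adiabatic eigenmodes $\chi_\delta^\mp$ of the original problem (so that ``mode $-$ at the end'' is the transferred mass), and that the $y$-dependence of the shift $y/\xi_0$ in the effective Landau--Zener time does not alter the transition probability after integration in $y$ — it does not, because the Landau--Zener probability is independent of the crossing location, and the $y$-profile $u_\delta$ is merely carried along. Care is also needed that the $\mathcal O(\eps^{\gamma/2})$ error in $(2)$, being $o(1)$, is harmless for the mass asymptotics, and that replacing $s^\eps=c_0\eps^{-\gamma}$ by $+\infty$ in the Landau--Zener asymptotics only costs $o(1)$ (the convergence to the asymptotic regime is at an algebraic rate in $s$, uniformly on the support of the profile). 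Assembling these pieces, together with the $L^2$-mass conservation for both the nonlinear profile equation \eqref{profile} and the linear system \eqref{f}, completes the proof.
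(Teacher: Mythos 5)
Your overall strategy — project onto $\chi_\delta^\pm$, use part (1) of Theorem~\ref{main} plus $L^2$-conservation of the profile for the incoming region, then use part (2) plus the Landau--Zener analysis of the model system \eqref{f} for the outgoing time — is indeed the paper's strategy. The incoming Step~1 is fine. However, there is a genuine error in Step~2 that propagates to an incorrect conclusion.

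You assert that the data \eqref{dataf} is ``polarized along $e_2$ (the lower adiabatic mode for $s\ll 0$)'', and from this you conclude $\|f_2(s^\eps)\|_{L^2}^2\to(1-p)\|a\|_{L^2}^2$ and $\|f_1(s^\eps)\|_{L^2}^2\to p\,\|a\|_{L^2}^2$. The mode identification is backwards. For the Hamiltonian $\left(\begin{smallmatrix} y+s\xi_0 & c\\ c & -(y+s\xi_0)\end{smallmatrix}\right)$ with $y+s\xi_0\approx -s^\eps\xi_0<0$, the eigenvector for the \emph{upper} eigenvalue $+\sqrt{(y+s\xi_0)^2+c^2}$ is asymptotically $(0,1)^{\mathrm T}=e_2$, not the lower one. (This is consistent with the fact that the original data is polarized along $\chi_\delta^+$, the upper mode, and for $x<0$ small one has $\chi_\delta^+(x)\approx(0,1)^{\mathrm T}$ by \eqref{theta12}.) Starting in the \emph{upper} mode, adiabatic passage (probability $1-p$) sends the state to the upper mode at $s\to+\infty$, which there is $e_1=(1,0)^{\mathrm T}$; the non-adiabatic transition (probability $p$) lands in $e_2$. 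This gives $\|f_1(s^\eps)\|_{L^2}^2\to(1-p)\|a\|_{L^2}^2$ and $\|f_2(s^\eps)\|_{L^2}^2\to p\,\|a\|_{L^2}^2$, exactly as in Corollary~\ref{cor:f} — the opposite of what you wrote. Since $\chi_\delta^+\approx e_1$ and $\chi_\delta^-\approx -e_2$ at $x^+(t^\eps)>0$, your swapped values would yield $\|\psi^\eps_+(t^\eps)\|_{L^2}^2\to p\,\|a\|_{L^2}^2$, which contradicts the statement to be proved; your final sentence simply asserts the correct values without the intermediate steps supporting them.

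Two further points worth noting. First, the paper does not re-derive the Landau--Zener transfer informally; it applies the precise scattering Theorem~\ref{diffuFG} from \cite{FG02} via an explicit change of time variable $\widetilde s=(s\xi_0+y)/\sqrt{\xi_0}$ and a coordinate swap $\widetilde f_{1,2}=u_{2,1}$, tracks the $L^2$-bounded remainders $\rho^{s\pm}$, and invokes dominated convergence to justify that replacing $s^\eps$ by $\pm\infty$ only costs $o(1)$ — the very point you flag but do not argue. You would avoid the sign error above, and this additional work, by simply invoking Corollary~\ref{cor:f} rather than re-running the heuristic. Second, the ``translating $e_1,e_2$ back to $\chi_\delta^\pm$'' step is not a one-line statement: in the paper it is carried out by splitting the $y$-integral into $|y|\lessgtr\tau\xi_0\sqrt{\eps}/t^\eps$ and Taylor-expanding $\arctan\bigl(c/(y-c_0\xi_0\eps^{-\gamma})\bigr)$ to control the $\cos/\sin$ factors up to $\mathcal O(\eps^\gamma)$; this is where the matching of the outgoing modes actually happens and should be spelled out.
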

Note that $p$ is the same transition coefficient as in \cite{HJ98} (the Landau-Zener coefficient). 
\\[2mm]
We are not able to describe the wave function for nonnegative times of order $\mathcal{O}(1)$ because the approximation given by
Theorem \ref{main} is not good enough to be propagated. 
In fact, in simplest situations, as in \cite{Adiab} and \cite{Hari1}, one can treat initial data which are a wave packet up to 
a term of size $\mathcal{O}(\eps^{l})$, in $L^2(\R^d) \cap H^1_\eps(\R^d)$ where $l>d/8$, and can prove 
the validity of the approximation.
\\
In our case, the fact that $0<\gamma<1/6$ yields a technical obstruction if we keep the same approach for $t\in [t^\eps,T]$ 
as we did for $t\in [-T,-t^\eps]$.
\subsection{Organization of the paper}
In Section 2, we give a brief exposition without proofs of all results we need in order to prove the main theorem of the paper. 
We then look more closely at the profile $u_\delta$ in Section 3 before proceeding with the study of the approximation far from 
the crossing region in Section 4. The fifth section is devoted to the analysis of the crossing region and gives the proof of 
the validity of the second approximation. In the last section, we restrict our attention to the transition phenomenon.
\section{Sketch of proof of the main theorem}
\subsection{Approximation away from the crossing point}
Our aim is to approach the exact solution $\psi^\eps$ by a function polarized along the eigenvector $\chi_\delta^+(x)$, 
that we are going to build using the profile $u_\delta$, up to a time $-t^\eps$ of order $\eps^{1/2-\gamma}$, for some $\gamma \in ]0,1/6[$.
near the crossing point.
\\[1mm]
We consider the classical trajectories and action associated with $\lambda_\delta^+(x)$. 
In order to prove the validity of the approximation of point $(1)$ in Theorem \eqref{main}, 
we first need to control $u_\delta$, solution to \eqref{profile}. 
By \cite{Adiab}, we have global existence of $u_\delta$ for any $\delta >0$.
\begin{theorem}[Global existence from \cite{Ca-p}]
Let $\delta >0$ and $\kappa \in \R$. For all $a\in \Sch(\R)$, \eqref{profile} has a unique solution 
 $$u_\delta \in \mathcal{C}(\R, L^2(\R))\cap L_{loc}^{8}(\R, L^4(\R)).$$
 Moreover, its $L^2-$norm is conserved:
 $$\| u_\delta(t)\|_{L^2(\R)}= \|a\|_{L^2(\R)}, \quad \forall t\in \R.$$
\end{theorem}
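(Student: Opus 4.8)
The plan is to run the classical Strichartz-based scheme for $L^2$-subcritical nonlinear Schr\"odinger equations; the only ingredient that is not completely standard is the presence of the time-dependent harmonic confinement $\tfrac12\lambda_\delta^{+(2)}(x^+(t))\,y^2$, which is handled exactly as in \cite{Ca-p}. First I would set up the linear propagator and its dispersive estimates. For $\delta>0$ fixed, the coefficient $t\mapsto\lambda_\delta^{+(2)}(x^+(t))=\delta^2\bigl(x^+(t)^2+\delta^2\bigr)^{-3/2}$ is smooth and bounded (by $\delta^{-1}$, using Remark \ref{rmk:traj} only to know that $x^+$ is continuous), so the operator $i\partial_t+\tfrac12\partial_y^2-\tfrac12\lambda_\delta^{+(2)}(x^+(t))y^2$ generates a two-parameter unitary propagator $U_\delta(t,s)$ on $L^2(\R)$. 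Since the potential is at most quadratic in $y$, the Fujiwara/Kitada construction of its fundamental solution applies and $U_\delta$ satisfies the same local-in-time Strichartz estimates as the free group: for every $S>0$ and every pair of admissible exponents $(q,r),(\tilde q,\tilde r)$ (i.e.\ $\tfrac2q=\tfrac12-\tfrac1r$ in dimension one),
$$\|U_\delta(\cdot,s)\varphi\|_{L^q([-S,S];L^r)}\lesssim_S\|\varphi\|_{L^2},\qquad \Bigl\|\int_s^{\cdot}U_\delta(\cdot,\tau)F(\tau)\,d\tau\Bigr\|_{L^q([-S,S];L^r)}\lesssim_S\|F\|_{L^{\tilde q'}([-S,S];L^{\tilde r'})}.$$

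Next I would prove local well-posedness by a fixed-point argument for the Duhamel map $u\mapsto U_\delta(t,-T)a-i\kappa\int_{-T}^tU_\delta(t,\tau)\bigl(|u|^2u\bigr)(\tau)\,d\tau$ on a ball of $X_S=\mathcal{C}(I_S;L^2)\cap L^8(I_S;L^4)$, with $I_S=[-T,-T+S]$; the pair $(8,4)$ is admissible in dimension one. By H\"older in space $\bigl\||u|^2u(t)\bigr\|_{L^{4/3}}=\|u(t)\|_{L^4}^3$, and by H\"older in time $\bigl\||u|^2u\bigr\|_{L^{8/7}(I_S;L^{4/3})}\le S^{1/2}\,\|u\|_{L^8(I_S;L^4)}^3$, the gain of the positive power $S^{1/2}$ being precisely what $L^2$-subcriticality provides. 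Inserting this into the inhomogeneous Strichartz estimate with output pairs $(\infty,2)$ and $(8,4)$ shows that for $S$ small enough --- depending on the data only through $\|a\|_{L^2}$, together with $|\kappa|$ and the ($\delta$-dependent) Strichartz constants --- the map is a contraction on $X_S$, which yields a unique solution; the same short-interval estimate gives uniqueness in $\mathcal{C}_tL^2\cap L^8_{\mathrm{loc},t}L^4$.

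Then comes conservation of mass and globalization. Pairing \eqref{profile} with $\overline{u_\delta}$ in $L^2$ and taking the imaginary part, the confinement term $-\tfrac12\lambda_\delta^{+(2)}(x^+(t))\langle y^2u_\delta,u_\delta\rangle$, the nonlinear term $-\kappa\langle|u_\delta|^2u_\delta,u_\delta\rangle=-\kappa\|u_\delta\|_{L^4}^4$, and $\langle\partial_y^2u_\delta,u_\delta\rangle=-\|\partial_yu_\delta\|_{L^2}^2$ are all real, so $\tfrac{d}{dt}\|u_\delta(t)\|_{L^2}^2=0$; since the local solution need not lie in $H^2$, this identity is justified by the usual regularization/approximation argument. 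Because the local existence time depends on the data only through $\|a\|_{L^2}$ and that quantity is conserved, the local solution can be continued by equal-length steps to all of $\R$, giving $u_\delta\in\mathcal{C}(\R,L^2)\cap L^8_{\mathrm{loc}}(\R,L^4)$ with $\|u_\delta(t)\|_{L^2}=\|a\|_{L^2}$ for all $t$.

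The main obstacle is the first step: establishing the Strichartz estimates for the time-dependent propagator $U_\delta(t,s)$. Everything downstream is the textbook subcritical scheme and is insensitive to the precise form of the potential, whereas here one must invoke the at-most-quadratic structure (Fujiwara-type parametrix), which is exactly the content of \cite{Ca-p}. The $\delta$-dependence of the Strichartz constants is harmless at this stage since $\delta$ is fixed; it is only later, when $\delta=c\sqrt{\eps}\to0$, that this dependence will have to be tracked with care.
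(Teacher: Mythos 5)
Your proof is correct and follows exactly the route the paper sketches: Fujiwara-type local-in-time Strichartz estimates for the at-most-quadratic time-dependent harmonic potential, a contraction on $\mathcal{C}(I_S;L^2)\cap L^8(I_S;L^4)$ exploiting the $S^{1/2}$ gain from $L^2$-subcriticality, mass conservation, and iteration to global existence. You have simply filled in the standard details that the paper defers to \cite{Ca-p}.
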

\begin{proof}[Sketch of the proof]
 This result is proved in \cite{Ca-p}: for a fixed $\delta>0$, 
 the potential is at most quadratic. Thus, one can use local in time Strichartz estimates (which are available thanks to 
 \cite{Fujiwara79} and \cite{Fujiwara}) and prove local existence of the solution. 
 And since we are in the $L^2-$subcritical case, the mass conservation implies global existence of the profile for any $\kappa$, 
 in the suitable space.
 \end{proof}
 \begin{proposition}[Control of derivatives and momenta far from $t=0$]
For any $T_0 \in ]0,T[$, there exists $C>0$, such that 
 $$\forall t \in[-T,-T_0], \; \forall \alpha, \beta \in \N, \; \alpha+\beta \leq k, 
 \quad \|y^\alpha \d_y^\beta u_\delta(t) \|_{L^2(\R)}\leq C.$$
 \end{proposition}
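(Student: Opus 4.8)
The plan is to reduce the assertion to a bound, \emph{uniform in $\delta$}, on the coefficient of the harmonic part of \eqref{profile} on $[-T,-T_0]$, and then to propagate the regularity of $u_\delta$ in the spaces $\Sigma^m:=\{f\in L^2(\R)\ :\ y^\alpha\d_y^\beta f\in L^2(\R)\text{ for }\alpha+\beta\le m\}$ by energy estimates; I write $N_m(t):=\sum_{\alpha+\beta\le m}\|y^\alpha\d_y^\beta u_\delta(t)\|_{L^2}^2$, so that the statement amounts to $\sup_{[-T,-T_0]}N_k<\infty$ with a constant not depending on $\delta$. The first step is a classical-trajectory estimate: since $x^+(0)=0$ and $x^+(t)<0$ on $[-T,0)$, since $|x^+|+|\xi^+|\le C$ on $[-T,0]$ by Remark~\ref{rmk:traj}, and since the family $(x^+(\cdot))_\delta$ converges as $\delta\to 0$ to the (nonvanishing on $[-T,0)$) limit trajectory, there is $m=m(T_0)>0$, independent of $\delta$ (recall $\delta=c\sqrt\eps\to 0$), with $|x^+(t)|\ge m$ on $[-T,-T_0]$. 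Setting $\omega_\delta(t)^2:=\lambda_\delta^{+(2)}(x^+(t))=\delta^2\,(x^+(t)^2+\delta^2)^{-3/2}$, this gives $0\le\omega_\delta(t)^2\le\delta_0^2 m^{-3}$ and likewise $|\d_t^j\omega_\delta(t)^2|\le C_j$ on $[-T,-T_0]$, uniformly in $\delta$, each $x$-derivative of $\lambda_\delta^{+(2)}$ being $O(\delta^2)$ on $\{|x|\ge m\}$. Thus on $[-T,-T_0]$ equation \eqref{profile} is a one-dimensional cubic Schr\"odinger equation with a smooth, bounded, time-dependent harmonic potential $\tfrac12\omega_\delta(t)^2 y^2$ --- the setting of \cite{Ca-p} and \cite{Adiab}, from which the conclusion can in fact be quoted once this reduction is in hand; I indicate the direct argument below.

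For the energy estimates I would use two structural facts about $\mathcal{L}(t):=\tfrac12\d_y^2-\tfrac12\omega_\delta(t)^2 y^2$: it is symmetric, and $[\mathcal{L}(t),y]=\d_y$, $[\mathcal{L}(t),\d_y]=\omega_\delta(t)^2 y$, so that $\mathrm{span}\{y,\d_y\}$ is stable under $[\mathcal{L}(t),\cdot\,]$ with coefficients bounded on $[-T,-T_0]$. Hence, for $\alpha+\beta\le k$, applying $y^\alpha\d_y^\beta$ to \eqref{profile} gives $i\d_t(y^\alpha\d_y^\beta u_\delta)=\mathcal{L}(t)\,(y^\alpha\d_y^\beta u_\delta)+[y^\alpha\d_y^\beta,\mathcal{L}(t)]u_\delta+\kappa\,y^\alpha\d_y^\beta(|u_\delta|^2 u_\delta)$, where $[y^\alpha\d_y^\beta,\mathcal{L}(t)]u_\delta$ is a combination, with coefficients bounded on $[-T,-T_0]$, of terms $y^{\alpha'}\d_y^{\beta'}u_\delta$ with $\alpha'+\beta'\le k$. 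Pairing with $y^\alpha\d_y^\beta u_\delta$ in $L^2$ and taking real parts, the $\mathcal{L}(t)$-contribution vanishes by symmetry; summing over $\alpha+\beta\le k$ and using a standard regularization to legitimate the computation in $\Sigma^k$, one obtains
\[
\frac{d}{dt}N_k(t)\ \le\ C\,N_k(t)\ +\ 2|\kappa|\sum_{\alpha+\beta\le k}\big\|y^\alpha\d_y^\beta(|u_\delta|^2 u_\delta)\big\|_{L^2}\,\big\|y^\alpha\d_y^\beta u_\delta\big\|_{L^2}.
\]

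To treat the nonlinear sum I would note that multiplication by $y$ passes through the pointwise nonlinearity and $\d_y$ obeys Leibniz' rule, so $y^\alpha\d_y^\beta(|u_\delta|^2 u_\delta)$ is a finite sum of products of three factors of the form $y^a\d_y^b u_\delta$ (or its conjugate) of total order $\le k$; placing the factor of highest order in $L^2$ and the remaining two (of order $\le k-1$) in $L^\infty$, then using the one-dimensional embedding $H^1\hookrightarrow L^\infty$ and Gagliardo--Nirenberg interpolation, each such product paired with $y^\alpha\d_y^\beta u_\delta$ is bounded by $\mathcal{P}(N_{k-1})\,(1+N_k)$ for a polynomial $\mathcal{P}$ --- in particular linear in $N_k$ with a coefficient controlled by lower-order norms. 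The estimate thus becomes $\tfrac{d}{dt}N_k\le C\,\mathcal{P}(N_{k-1})\,(1+N_k)$ on $[-T,-T_0]$, and I conclude by induction on $k$. For $k=0$, $N_0=\|u_\delta\|_{L^2}^2=\|a\|_{L^2}^2$ is conserved. For $k=1$, one bounds $\|\d_y u_\delta\|_{L^2}$ from the energy $E(t)=\tfrac12\|\d_y u_\delta\|_{L^2}^2+\tfrac12\omega_\delta^2\|y u_\delta\|_{L^2}^2+\tfrac\kappa2\|u_\delta\|_{L^4}^4$, whose time-derivative is $\tfrac12\,\d_t(\omega_\delta^2)\,\|y u_\delta\|_{L^2}^2$, using conservation of mass, Gagliardo--Nirenberg, and the smallness of $|\kappa|$ and of $\omega_\delta^2=O(\delta^2)$ (a short bootstrap, since $\tfrac{d}{dt}\|y u_\delta\|_{L^2}\le\|\d_y u_\delta\|_{L^2}$ then also controls $\|y u_\delta\|_{L^2}$); this gives $N_1$ bounded. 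Assuming $N_0,\dots,N_{k-1}$ bounded on $[-T,-T_0]$ uniformly in $\delta$, the inequality reads $\tfrac{d}{dt}N_k\le C(1+N_k)$ with $C$ independent of $\delta$, and Gronwall on the fixed compact interval $[-T,-T_0]$ closes the induction; the final constant depends only on $k,T,T_0,\|a\|_{\Sigma^k},\kappa$ and $\delta_0$.

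The main obstacle is precisely this uniformity in $\delta$, and it is where the restriction $t\le -T_0<0$ is essential: at $t=0$ one has $\lambda_\delta^{+(2)}(x^+(0))=\delta^{-1}\to\infty$, the harmonic coefficient is no longer uniformly bounded, and the scheme above genuinely breaks down near the crossing time (hence the separate rescaled analysis there). A secondary technical point is to arrange the Leibniz expansion of the cubic term so that the Gronwall inequality stays linear in $N_k$; this is routine in one space dimension but needs the smallness of $|\kappa|$ and $\delta$ when $\kappa<0$.
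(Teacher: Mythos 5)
Your reduction --- showing that on $[-T,-T_0]$ the trajectory $x^+(t)$ stays bounded away from $0$ uniformly in $\delta$, so that $\lambda_\delta^{+(2)}(x^+(t))=\mathcal{O}(\delta^2)$ is uniformly bounded and \eqref{profile} becomes a cubic NLS with a uniformly bounded harmonic coefficient to which \cite{Ca-p} applies --- is exactly the paper's one-line argument, and your explicit commutator/energy scheme merely unpacks what the paper leaves to that citation. (A minor over-caution: smallness of $|\kappa|$ is not actually needed for the $k=1$ step, since in dimension one the cubic nonlinearity is $L^2$-subcritical and Gagliardo--Nirenberg gives $\|u\|_{L^4}^4\lesssim\|u\|_{L^2}^3\|\d_y u\|_{L^2}$, so the conserved energy controls $\|\d_y u_\delta\|_{L^2}$ for either sign of $\kappa$.)
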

\begin{proof}[Sketch of the proof]
The only thing we need to observe is that
$$\left|\lambda^{+ (2)}(x(t))\right| = \dfrac{\delta^2}{\left(x^+(t)^2+\delta^2\right)^{3/2}} 
\leq \dfrac{C\delta^2}{\left(\inf_{[-T,-T_0]} x^+(t)\right)^3}\leq \widetilde{C}\delta^2,$$
for $t\in [-T,-T_0]$, since $|x^+(t)|$ is bounded from below by a positive constant independent of $\eps$ far from $t=0$.
 The control of the derivatives and momenta is then a consequence of \cite{Ca-p}.
\end{proof}
 We actually need the profile on a bounded time interval of the form $[-T,0]$, and it requires additionnal work to prove 
 the uniformity of the bound since this interval contains zero. 
 The following result will be proved in Section \ref{Proof thm:profile}:
\begin{theorem}[Behaviour of derivatives and momenta until $t=0$]
\label{thm:profile}
 Let $\kappa \in \R$, and $a\in \Sch(\R)$. We consider $u_\delta$ the solution to the Cauchy problem \eqref{profile},
 and $\delta = c\sqrt{\eps}$ for some $c>0$. 
 Then, there exists $T_0>0$ such that for all $k\in \N$, the following property is satisfied:
 \\
 There exists $C>0$, such that 
 $$\forall \alpha, \beta \in \N, \; \alpha+\beta \leq k, \quad \|y^\alpha \d_y^\beta u_\delta(t) \|_{L^2(\R)}\leq C, 
 \quad \forall t\in [-T_0,0]$$
\end{theorem}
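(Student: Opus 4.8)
The plan is to run energy estimates on the weighted $L^2$-norms $\|y^\alpha\d_y^\beta u_\delta\|_{L^2}$, treating the large, $\delta$-dependent harmonic coefficient in \eqref{profile} as a scalar time-dependent multiplier and exploiting that \emph{it is integrable in time uniformly in $\delta$}. First I would localise near $t=0$: pick $T_0\in\,]0,T[$ and invoke the preceding proposition, which already bounds $\|y^\alpha\d_y^\beta u_\delta(t)\|_{L^2}$ on $[-T,-T_0]$ uniformly in $\eps$; it then suffices to propagate these bounds forward from $t=-T_0$ to $t=0$, shrinking $T_0$ once if necessary.

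The decisive elementary point is this. Since $x^+(0)=0$, $\xi^+(0)=\xi_0>0$ and $|\dot\xi^+|=|\d_x\lambda_\delta^+(x^+)|\le 1$, one has $\xi^+(t)\ge\xi_0/2$ for $|t|\le\xi_0/2$, hence $|x^+(t)|\ge(\xi_0/2)|t|$ on $[-T_0,0]$ once $T_0\le\xi_0/2$, uniformly in $\delta$. Writing $\omega_\delta(t)^2:=\lambda_\delta^{+(2)}(x^+(t))=\delta^2\bigl(x^+(t)^2+\delta^2\bigr)^{-3/2}$, the rescaling $t=(2\delta/\xi_0)s$ gives
$$\int_{-T_0}^0\omega_\delta(t)^2\,dt\ \le\ \int_{\R}\frac{\delta^2\,dt}{\bigl((\xi_0/2)^2t^2+\delta^2\bigr)^{3/2}}\ =\ \frac{2}{\xi_0}\int_{\R}\frac{ds}{(s^2+1)^{3/2}}\ =\ \frac{4}{\xi_0}\,,$$
a bound independent of $\delta$, hence of $\eps$ --- even though $\omega_\delta(t)^2$ itself is of size $\delta^{-1}=(c\sqrt\eps)^{-1}$ near $t=0$.

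Then I would argue by induction on $k$: the case $k=0$ is the conservation of the $L^2$-norm, and $k=1$ a direct bootstrap treated below. For the inductive step with $k\ge 2$, set $E_k(t)=\sum_{\alpha+\beta\le k}\|y^\alpha\d_y^\beta u_\delta(t)\|_{L^2}^2$ and differentiate in $t$ using \eqref{profile}. Since $y^\alpha$ commutes with the multiplication operator $\tfrac12\omega_\delta(t)^2y^2$, the harmonic term only contributes, through $[\d_y^\beta,y^2]$, a real top-order term to $\tfrac{d}{dt}\|y^\alpha\d_y^\beta u_\delta\|_{L^2}^2$ that drops out by self-adjointness, together with $\omega_\delta(t)^2$ times operators of order $\le k$; the kinetic commutators $[\d_y^2,y^\alpha]$ are of order $\le k$ with bounded coefficients; and the Leibniz expansion of $\d_y^{\alpha+\beta}(|u_\delta|^2u_\delta)$ yields a real top-order term plus lower-order products handled by Gagliardo--Nirenberg interpolation and the mass conservation (here $\|u_\delta\|_{L^\infty}^2\lesssim\|u_\delta\|_{L^2}\|\d_y u_\delta\|$ is already bounded, by the $k=1$ step). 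Altogether one obtains
$$\frac{d}{dt}E_k(t)\ \le\ C\bigl(\omega_\delta(t)^2+g_k(t)\bigr)E_k(t)\ +\ C\,\omega_\delta(t)^2\,E_{k-1}(t)\ +\ C\,g_k(t),$$
with $g_k\in L^1([-T_0,0])$ of norm bounded uniformly in $\eps$. Gr\"onwall's lemma, the bound $\int_{-T_0}^0\omega_\delta^2\le 4/\xi_0$, the inductive control of $E_{k-1}$, and the bound on $E_k(-T_0)$ from the preceding proposition, then give $\sup_{[-T_0,0]}E_k\le C$ with $C$ independent of $\eps$.

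The main obstacle --- the only place where more than the scalar NLS toolbox of \cite{Ca-p} enters --- is keeping everything uniform in $\eps$ while the harmonic coefficient degenerates like $\delta^{-1}$ at $t=0$; this is precisely what the scaling computation in the second paragraph resolves, a large coefficient having a uniformly bounded time integral. The remaining, standard difficulty is that the cubic power is critical for semiclassical wave packets, so the $k=1$ inequality is of Riccati type, $\tfrac{d}{dt}E_1\lesssim\omega_\delta(t)^2E_1+|\kappa|\,E_1^{3/2}$; it is closed by a continuity argument on $[-T_0,0]$, uniform in $\delta$ (the factor $e^{C\int\omega_\delta^2}$ is harmless, being uniformly bounded), once $|\kappa|$ and $T_0$ are taken small enough. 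This is the only step forcing $T_0$ to be shrunk, and it fixes $T_0$ once and for all, for every $k$.
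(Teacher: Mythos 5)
Your plan is correct in outline but takes a genuinely different route from the paper, and one step needs more care than you give it. The paper removes the time-dependent harmonic potential via a Lens transform built on the scalar ODE system $\ddot\mu_\delta+\lambda_\delta^{(2)}(x(t))\mu_\delta=0$, $\ddot\nu_\delta+\lambda_\delta^{(2)}(x(t))\nu_\delta=0$, whose solutions are uniformly controlled on $[0,T_0]$ precisely because of the time-integrability of $\lambda_\delta^{(2)}(x(t))$ (Lemma~\ref{key}, which is the same computation as your scaling estimate giving $\int\omega_\delta^2\lesssim 1/\xi_0$, only written via a Taylor expansion of $x^+$); after the transform one is left with the \emph{free} cubic NLS with a uniformly bounded coefficient, and all derivatives and momenta are controlled by free Strichartz estimates and a bootstrap in which the only quantity that must be propagated across subintervals is the $L^2$-norm, which is conserved. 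Your direct weighted energy estimates are more elementary and avoid Strichartz entirely; the two proofs share the same crucial observation about $\int\omega_\delta^2\,dt$, and for $k\ge 2$ your inductive Gr\"onwall argument (with $[\d_y^\beta,y^2]$ and $[\d_y^2,y^\alpha]$ producing only $\omega_\delta^2\cdot$order-$k$ and order-$k$ contributions) is sound.

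The place that needs more justification is exactly the one you flag: the $k=1$ step is of Riccati type $E_1'\lesssim(\omega_\delta^2+1)E_1+|\kappa|E_1^{3/2}$, because unlike the $L^2$-norm, $E_1$ is \emph{not} conserved. The paper's Strichartz bootstrap never meets this: on each small subinterval the bootstrap constant $M$ is tied to $\|a\|_{L^2}$ through mass conservation, so the subinterval length $\tau$ is chosen once and uniformly, and no smallness of $\kappa$ is needed; Theorem~\ref{thm:profile} as stated holds for every $\kappa\in\R$. In your argument, closing the Riccati forces $T_0|\kappa|E_1(-T_0)^{1/2}$ to be small. This introduces a circularity you do not address: $E_1(-T_0)$ comes from the preceding proposition and its bound $C(T_0)$ depends on $T_0$ (through $\inf_{[-T,-T_0]}|x^+|$), and it is not a priori clear that ``shrinking $T_0$ once if necessary'' terminates, since shrinking $T_0$ may worsen $C(T_0)$ as fast as it helps the Riccati. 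Concretely you need either (a) to invoke the standing smallness assumption $|\kappa|\le 1/C$ made in the introduction and keep $T_0$ fixed by the preceding proposition, which gives a weaker theorem than the one stated here but is consistent with the rest of the paper, or (b) to establish that $C(T_0)$ stays bounded as $T_0\to 0$ (plausible, since on $[-T,-T_0]$ the harmonic coefficient is $\O(\delta^2/T_0^3)\to 0$ as $\eps\to 0$, reducing to the free cubic NLS where mass-subcriticality bounds $E_1$ independently of the endpoint --- but that is the content of the very theorem you are proving). Spelling out one of these is required for the argument to be complete.
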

\noindent
As a consequence, we have a control on the whole interval $[-T,0]$.
\\[2mm]
\noindent
We denote by $\varphi^\eps$, the following function associated with $u_\delta$, $x^+$, $\xi^+$, $S^+$:
\begin{equation}
\label{phi}
 \varphi^\eps(t,x) = \eps^{-1/4}u_\delta \left(t, \dfrac{x-x^+(t)}{\sqrt{\eps}}\right)
 e^{\frac{iS^+(t)}{\eps}+\frac{i\xi^+(t).(x-x^+(t))}{\eps}}.
\end{equation}
Using $|\xi^+(t)|\leq C$, for $t\in [-T,0],$ we deduce the following corollary.
\begin{corollary}
 \label{phiLinfty}
Let $T>0$. Let us consider $a\in \Sch(\R)$, and $u_\delta$ solution to \eqref{profile}. We have for $\varphi^\eps$ defined 
by \eqref{phi}
$$\forall \beta \in \N, \quad \|\eps^\beta \d_x^\beta \varphi^\eps(t) \|_{L^\infty(\R)} \lesssim \eps^{-1/4}, 
\; \forall t \in [-T,0].$$
\end{corollary}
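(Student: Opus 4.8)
The plan is to differentiate the explicit formula \eqref{phi} by the Leibniz rule, keep track of the powers of $\eps$, and then reduce the $L^\infty$ estimates to the uniform control of the profile already obtained in Theorem~\ref{thm:profile}.

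First I would fix $t\in[-T,0]$ and write $\varphi^\eps(t,x)=\eps^{-1/4}u_\delta(t,z)\,e^{i\Phi^\eps(t,x)}$ with $z=(x-x^+(t))/\sqrt{\eps}$ and $\Phi^\eps(t,x)=S^+(t)/\eps+\xi^+(t)(x-x^+(t))/\eps$, so that $\d_x z=\eps^{-1/2}$ and $\d_x\Phi^\eps=\xi^+(t)/\eps$. The Leibniz rule then gives
$$\d_x^\beta\varphi^\eps(t,x)=\eps^{-1/4}\sum_{k=0}^\beta\binom{\beta}{k}\eps^{-k/2}\left(\d_y^k u_\delta\right)(t,z)\left(\frac{i\xi^+(t)}{\eps}\right)^{\beta-k}e^{i\Phi^\eps(t,x)}.$$
Multiplying by $\eps^\beta$, the $k$-th summand carries the power $\eps^{\beta-k/2-(\beta-k)}=\eps^{k/2}$: this is the key cancellation, showing that the a priori dangerous factors $\eps^{-1}$ produced by the oscillatory phase are exactly compensated by the prefactor $\eps^\beta$. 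Using $\eps^{k/2}\le 1$ for $\eps\le 1$ and $|\xi^+(t)|\le C$ on $[-T,0]$ (Remark~\ref{rmk:traj}), one gets
$$\left|\eps^\beta\d_x^\beta\varphi^\eps(t,x)\right|\lesssim\eps^{-1/4}\sum_{k=0}^\beta\left\|\d_y^k u_\delta(t)\right\|_{L^\infty(\R)}.$$

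It remains to bound $\|\d_y^k u_\delta(t)\|_{L^\infty(\R)}$, $0\le k\le\beta$, uniformly in $\eps$ and in $t\in[-T,0]$. For this I would invoke the one-dimensional Gagliardo--Nirenberg inequality $\|g\|_{L^\infty(\R)}\lesssim\|g\|_{L^2(\R)}^{1/2}\|\d_y g\|_{L^2(\R)}^{1/2}$ with $g=\d_y^k u_\delta(t)$, which reduces the matter to the $L^2$ bounds on $\d_y^k u_\delta(t)$ and $\d_y^{k+1}u_\delta(t)$. These hold uniformly in $\eps$ and in $t\in[-T,0]$ by the control established after Theorem~\ref{thm:profile} (obtained by combining the Proposition on control of derivatives and momenta far from $t=0$ on $[-T,-T_0]$ with Theorem~\ref{thm:profile} on $[-T_0,0]$), applied with index $\beta+1$. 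Substituting back yields $\|\eps^\beta\d_x^\beta\varphi^\eps(t)\|_{L^\infty(\R)}\lesssim\eps^{-1/4}$, uniformly for $t\in[-T,0]$, as claimed.

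There is no genuine obstacle here: the statement is essentially bookkeeping on top of the already-proved uniform control of the profile. The only two points requiring a little care are the cancellation of the $\eps$-powers after differentiating the phase (which relies on $\xi^+$ staying bounded on all of $[-T,0]$, in particular through $t=0$) and the passage from the $L^2$-type estimates of Theorem~\ref{thm:profile} to $L^\infty$ estimates, which forces one to control one more derivative than the order $\beta$ appearing in the statement — hence the use of Theorem~\ref{thm:profile} at index $\beta+1$.
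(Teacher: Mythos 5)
Your argument is correct and supplies exactly the bookkeeping the paper leaves implicit when it asserts that the corollary follows from the boundedness of $\xi^+$ on $[-T,0]$ and the uniform control of the profile established in Theorem~\ref{thm:profile} (together with the proposition covering $[-T,-T_0]$). The Leibniz expansion with the $\eps^{k/2}$ cancellation and the Gagliardo--Nirenberg step passing from $L^2$ to $L^\infty$ (which, as you note, requires the profile estimates at index $\beta+1$, freely available since Theorem~\ref{thm:profile} holds for every $k$) are precisely the right ingredients.
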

\vspace{2mm}
\noindent
The first point of Theorem \ref{main} is a consequence of the following theorem, that will be proved 
in Section \ref{proof thm:IncomingOuter}.
\begin{theorem}
 \label{thm:IncomingOuter}
Let $T >0$, $\gamma \in [0,1/6]$, $c_0>0$ and $a \in \Sch(\R)$. We consider $\psi^\eps$, 
the exact solution to the Cauchy problem \eqref{NLS0} - \eqref{data}, and $\vf^\eps$, the function given by \eqref{phi}.
Then the difference $$w^\eps(t,x) = \psi^\eps (t,x) - \vf^\eps(t,x)\chi_\delta^+(x),$$
satisfies:
$\exists C>0$, $\exists \eps_0>0$, $\forall \eps \in ]0, \eps_0]$
\begin{equation*} 
\sup_{t \in [-T,-t^\eps]} \| w^\eps (t) \|_{L^2(\R)}+\|\eps\d_x w^\eps(t)\|_{L^2(\R)}\leq C \eps^{\gamma},
\end{equation*}
where $t^\eps = c_0 \eps^{1/2-\gamma}$.
\end{theorem}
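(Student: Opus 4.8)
The scheme is the classical \emph{consistency--stability} argument for semiclassical wave packets, run in an $\eps$--adapted functional framework, with one genuinely new difficulty coming from the shrinking spectral gap. First (consistency), insert $\vf^\eps\chi_\delta^+$ into \eqref{NLS0}. Since $\chi_\delta^+$ is a normalised eigenvector of $V_\delta$ for $\lam_\delta^+$, and by the choice of the action $S^+$, the trajectory $(x^+,\xi^+)$ and the profile equation \eqref{profile}, all terms of order $\eps^0,\eps^{1/2},\eps$ cancel exactly, \emph{including} the cubic term, because $|\vf^\eps\chi_\delta^+|^2=|\vf^\eps|^2$ and $\eps^{3/2}|\vf^\eps|^2=\eps|u_\delta|^2$ after the scaling in \eqref{phi}. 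What survives is a residual $R^\eps$ with exactly two contributions: (i) the cubic Taylor remainder of $\lam_\delta^+$ at $x^+(t)$, schematically $\eps^{3/2}\eps^{-1/4}y^3\lam_\delta^{+(3)}(\theta)u_\delta(t,y)e^{i\cdot/\eps}\chi_\delta^+$; and (ii) the eigenvector--derivative terms $\tfrac{\eps^2}{2}\bigl(2(\d_x\vf^\eps)(\d_x\chi_\delta^+)+\vf^\eps\d_x^2\chi_\delta^+\bigr)$, which, since $\d_x\chi_\delta^+$ is real and orthogonal to $\chi_\delta^+$, point along $\chi_\delta^-$ up to a smaller term along $\chi_\delta^+$. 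On $[-T,-t^\eps]$ one has $|x^+(t)|\gtrsim\eps^{1/2-\gamma}$ (because $x^+(0)=0$, $\dot x^+(0)=\xi_0>0$, and $x^+$ grows to $0$), so by the refined eigenvector bounds of Section~\ref{eigen} (sharper than \eqref{eigenvectors}: $|\d_x\chi_\delta^+|\lesssim\delta|x^+|^{-2}$ and $|\d_x^2\chi_\delta^+|\lesssim\delta|x^+|^{-3}$ near $x^+(t)$) together with Theorem~\ref{thm:profile} (uniform control of $\|y^\alpha\d_y^\beta u_\delta(t)\|_{L^2}$ on $[-T,0]$), one gets $\|R^\eps\|_{L^2}\lesssim\eps^{1/2+2\gamma}$, the Taylor piece being much smaller ($\eps^{3/2+2\gamma}$); the tails $|y|\gtrsim\eps^{-\gamma/2}$ are absorbed by the Schwartz decay of $u_\delta$. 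The same computation with $\eps\d_x$ applied gives the analogous bound for $\eps\d_x R^\eps$.

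For stability, the error $w^\eps=\psi^\eps-\vf^\eps\chi_\delta^+$ solves
$$i\eps\d_t w^\eps+\tfrac{\eps^2}{2}\d_x^2 w^\eps-V_\delta w^\eps=\kappa\eps^{3/2}\bigl(|\psi^\eps|^2\psi^\eps-|\vf^\eps\chi_\delta^+|^2\vf^\eps\chi_\delta^+\bigr)-R^\eps,\qquad w^\eps(-T)=0,$$
the vanishing initial datum being exact since $u_\delta(-T,\cdot)=a$. One runs a continuity argument on the $\eps$--adapted quantity
$$\mathcal N^\eps(t):=\|w^\eps(t)\|_{L^2}+\eps^{-1/2}\bigl\|(\eps\d_x-i\xi^+(t))w^\eps(t)\bigr\|_{L^2}+\eps^{-1/2}\bigl\|(x-x^+(t))w^\eps(t)\bigr\|_{L^2},$$
assuming $\mathcal N^\eps\le K\eps^\gamma$ on a maximal subinterval and improving it to $\tfrac K2\eps^\gamma$ for $\eps$ small, $K$ large. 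The operators $x-x^+(t)$ and $\eps\d_x-i\xi^+(t)$ obey the usual Hagedorn commutation relations with the quadratic--potential semiclassical Schr\"odinger operator (as in \cite{Adiab},\cite{Hari1}), so differentiating the $w^\eps$--equation with them produces only terms of the form $c(t)(x-x^+)w^\eps$, $c(t)(\eps\d_x-i\xi^+)w^\eps$ with bounded $c(t)$, all controlled by $\mathcal N^\eps$ itself, plus the already--estimated $R^\eps$ contributions; the non--quadratic part of $V_\delta$ is exactly the Taylor remainder already in $R^\eps$, and $[\eps\d_x,V_\delta]=\eps\,\d_x V_\delta$ is the bounded constant matrix $\eps\,\mathrm{diag}(1,-1)$, hence harmless.

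The delicate points are the nonlinearity and the residual (ii). Testing against $w^\eps$ and taking imaginary parts kills the self--adjoint linear part; the cubic difference splits as $|\psi^\eps|^2 w^\eps+(|\psi^\eps|^2-|\vf^\eps|^2)\vf^\eps\chi_\delta^+$, the first term contributing $\int|\psi^\eps|^2|w^\eps|^2\in\R$ and hence nothing to the imaginary part, the second being bounded with $\|\vf^\eps\|_{L^\infty}\lesssim\eps^{-1/4}$ (Corollary~\ref{phiLinfty}) and the sharp, oscillation--aware estimate $\|w^\eps\|_{L^\infty}\lesssim\eps^{-1/2}\|w^\eps\|_{L^2}^{1/2}\|(\eps\d_x-i\xi^+(t))w^\eps\|_{L^2}^{1/2}\lesssim\eps^{\gamma-1/4}$ (valid under the bootstrap), so that after dividing by $\eps$ the nonlinear contribution to $\tfrac{d}{dt}\mathcal N^\eps(t)^2$ is only $O(\kappa)(\mathcal N^\eps(t))^2$ up to $o(\eps^{2\gamma})$ terms, and taking $\kappa$ small controls the Gronwall constant. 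The residual (ii), however, is only $O(\eps^{1/2+2\gamma})$ in $L^2$, which after the $1/\eps$ from $i\eps\d_t$ would give a divergent contribution $O(\eps^{-1/2+2\gamma})$ for $\gamma<1/4$, so it \emph{cannot} be absorbed by a direct energy estimate. Instead one decomposes $w^\eps=w^\eps_+\chi_\delta^++w^\eps_-\chi_\delta^-$: the part of (ii) along $\chi_\delta^-$ drives $w^\eps_-$ through an oscillation of ``$+$''--type, whereas $w^\eps_-\chi_\delta^-$ intrinsically carries an oscillation of ``$-$''--type; on $[-T,-t^\eps]$ the two frequencies differ by a quantity comparable to $\tfrac12\xi_0^2-(\lam_\delta^+-\lam_\delta^-)(x)$, which stays bounded away from $0$ for $T$ small (there $\lam_\delta^+-\lam_\delta^-=2\sqrt{x^2+\delta^2}$ is small), so a single integration by parts in $t$ against this non--stationary phase converts the $1/\eps$ loss into an $O(\eps)$ gain and yields $\|w^\eps_-\|_{L^2}\lesssim\eps^{1/2+2\gamma}=o(\eps^\gamma)$ together with the corresponding covariant and weighted bounds. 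This is precisely where the restriction $t\le-t^\eps$ --- staying outside the crossing --- is used; feeding the smallness of $w^\eps_-$ back into the $w^\eps_+$--equation closes the diagonal part as well.

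Collecting everything gives a differential inequality $\tfrac{d}{dt}\mathcal N^\eps(t)^2\lesssim\mathcal N^\eps(t)^2+\eps^{2\gamma}$ on $[-T,-t^\eps]$ with $\mathcal N^\eps(-T)=0$, whence $\mathcal N^\eps(t)\le C\eps^\gamma$ by Gronwall; this improves the bootstrap assumption and, since $\eps\d_x w^\eps=(\eps\d_x-i\xi^+)w^\eps+i\xi^+w^\eps$ with $|\xi^+(t)|\le C$ (Remark~\ref{rmk:traj}), gives the claimed bound on $\|w^\eps(t)\|_{L^2}+\|\eps\d_x w^\eps(t)\|_{L^2}$ uniformly on $[-T,-t^\eps]$. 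The main obstacle is the off--diagonal term: the shrinking gap makes the transition part of the residual too large for a naive estimate, and one must exploit its $\chi_\delta^-$--direction and the non--resonance of the two phases away from the crossing through a non--stationary--phase argument; the accompanying difficulty is that the cubic nonlinearity is critical for wave packets, so the covariant $H^1_\eps$ estimate has to be propagated simultaneously in order to obtain the sharp $L^\infty$ bound on $w^\eps$ that makes the nonlinear term harmless.
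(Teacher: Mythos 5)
Your overall architecture (insert the ansatz, isolate a residual, run a bootstrap on a covariant $H^1_\eps$--type quantity, and handle the critical cubic term via a sharp $L^\infty$ estimate with small $\kappa$) is the right one, and your consistency step agrees with the paper: the residual does consist of the Taylor remainder of $\lam_\delta^+$ and the eigenvector--derivative terms, and the refined bounds $|d\chi_\delta^\pm|\lesssim\delta/(x^2+\delta^2)$, $|d^2\chi_\delta^\pm|\lesssim\delta/(x^2+\delta^2)^{3/2}$ from Section~\ref{eigen} are what one must use. However, there is a genuine error in the central diagnosis, and it invalidates the part of the argument you built around it.

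You assert that the eigenvector--derivative part of the residual is ``only $O(\eps^{1/2+2\gamma})$ in $L^2$'' and that ``after the $1/\eps$ from $i\eps\d_t$'' this gives a divergent $O(\eps^{-1/2+2\gamma})$, so that ``it \emph{cannot} be absorbed by a direct energy estimate.'' This confuses an $L^\infty_t L^2_x$ bound with what Duhamel actually requires, namely an $L^1_t L^2_x$ bound. The dominant source term (after dividing by $\eps$) is $\eps\,\d_x\vf^\eps\cdot d\chi_\delta$, whose pointwise--in--time size is
$$\|\eps\,\d_x\vf^\eps\cdot d\chi_\delta(t)\|_{L^2}\;\approx\;\eps^{-1/4}\,\xi(t)\,\big\|u_\delta(t,\cdot)\,d\chi_\delta(x(t)+\sqrt\eps\,\cdot)\big\|_{L^2}\;\lesssim\;\frac{\delta}{x(t)^2+\delta^2}\;\approx\;\frac{\delta}{t^2}$$
on $[-T,-t^\eps]$ (using $|x(t)|\gtrsim|t|$ and the boundedness of the momenta of $u_\delta$ from Theorem~\ref{thm:profile}, plus splitting $\{|x|\geq\theta|t|\}$ vs.\ $\{|x|\leq\theta|t|\}$ and Schwartz decay). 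This is indeed as large as $\eps^{-1/2+2\gamma}$ at the endpoint $|t|=t^\eps$, but it is \emph{integrable in time}: $\int_{-T}^{-t^\eps}\delta\,s^{-2}\,ds\sim\delta/t^\eps=\eps^\gamma$. The ``direct energy estimate'' therefore works, and this is precisely what the paper does: it proves Lemma~\ref{Leps}, $\|L^\eps\|_{L^1([-T,t],L^2)}+\|\eps\d_x L^\eps\|_{L^1([-T,t],L^2)}\lesssim\sqrt\eps/|t|$, and then runs Duhamel, Gronwall, and a time--dependent bootstrap $\|w^\eps(t)\|_{L^\infty}\le M/|t|$ (plus Gagliardo--Nirenberg) to close. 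No separation into $w^\eps_\pm$ and no oscillation argument appears.

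Because the premise ``a direct estimate fails'' is false, the non--stationary--phase machinery you introduce is not needed, and as stated it is also not obviously correct. The phase mismatch between a $\chi_\delta^-$--polarized forcing carrying $e^{iS^+/\eps}$ and the natural $e^{iS^-/\eps}$ oscillation of $w^\eps_-$ has derivative $(\dot S^+-\dot S^-)/\eps\approx-(\lam_\delta^+-\lam_\delta^-)(x)/\eps$, i.e.\ the gap $2\sqrt{x^2+\delta^2}/\eps$, not $\tfrac12\xi_0^2-(\lam_\delta^+-\lam_\delta^-)$; the $\tfrac12\xi_0^2$ piece is common to both actions and cancels. An integration by parts would thus gain $\eps/\mathrm{gap}\sim\eps/|t|$, not $\eps$, and you would still have to control the boundary term at $t=-t^\eps$ and the full $w^\eps_+$ equation; none of this is spelled out. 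In short: keep your consistency analysis and the idea of a covariant bootstrap, but replace the oscillation argument by the much simpler observation that the singular residual is time--integrable on $[-T,-t^\eps]$, which already yields the $\eps^\gamma$ bound.
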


\subsection{Through the crossing point: comparison with the linear model}
\label{introInnerRegime}
As in the linear case, studied by the authors of \cite{HJ98}, the classical trajectories and action $x^+$, $\xi^+$ and $S^+$ 
are not relevant to build an approximation when one approaches the avoided crossing point: 
the aim is to show that the exact solution does not remain in the energy level ``$+$'' and so, 
that after time $t=0$, the wave function is not utterly localized around $(x^+(t),\xi^+(t))$ in the phase space, at leading order. 
We are going to prove that it has two components on each mode at time $t^\eps = c_0\eps^{1/2-\gamma}$. 
Thus, on this small time interval, we localize the approximation around an
``averaged'' trajectory which is close to both trajectories ``$+$'' and ``$-$'' near zero. 
Indeed both $\pm$ trajectories satisfy 
$$x^\pm(t) = \xi_0t+\mathcal{O}(t^2) \quad ; \quad \xi^\pm(t)= \xi_0 + \mathcal{O}(t),$$
so we introduce the free trajectory, which is an approximation of the previous ones:
 $$ \widetilde{x}(t) = \xi_0t \quad ; \quad \widetilde{\xi}(t)=\xi_0>0.$$
\\[1mm]
In the nonlinear case, the main difficulty that could arise is the presence of the nonlinear contribution at leading order 
on this time interval. We will see that with a critical nonlinearity, at leading order, there won't be any effect and 
the mechanism of transition is guided by the system \eqref{f}, as in the linear case (see Equation (3.91) in \cite{HJ98}).
We introduce the following rescaled variables:
$$\left\lbrace \begin{array}{c}
y = \left(x-t\xi_0\right)/\sqrt{\eps}\\
                s=t/\sqrt{\eps},
                
               \end{array}
   \right. $$
and the rescaled solution $v^\eps$ is given by
\begin{equation}
 \label{rescaled}
 \psi^\eps(t,x)= \eps^{-1/4}v^\eps \left(\dfrac{t}{\sqrt{\eps}}, \dfrac{x-t\xi_0}{\sqrt{\eps}} \right)
e^{\frac{i\xi_0^2t}{2\eps}+\frac{i\xi_0.(x-t\xi_0)}{\eps}}
\end{equation}
with $v^\eps(s,y) \in \C^2$, satisfying the following Schr\"odinger equation
\begin{equation}
 \label{NLS1}
 \left\lbrace \begin{array}{l}
  i\d_s v^\eps + \dfrac{\sqrt{\eps}}{2}\d_x^2 v^\eps - V_{\delta/\sqrt{\eps}}\left(y+s\xi_0\right) v^\eps = 
 \kappa \sqrt{\eps} |v^\eps|^2v^\eps,\\
 v^\eps(-c_0\eps^{-\gamma},y)= v^\eps_{init.}(y)
 \end{array}\right.
\end{equation}
where 
$$V_{\delta/\sqrt{\eps}}\left(y+s\xi_0\right) = \begin{pmatrix}
                                                (y+s\xi_0) & \delta/\sqrt{\eps} \\
                                                \delta/\sqrt{\eps} & -(y+s\xi_0)
                                               \end{pmatrix}.$$
The function $v^\eps_{init.}(y)$ is choosen so that \eqref{rescaled} holds at $t=-c_0\eps^{1/2-\gamma} $ 
and will be given in details in Section \ref{match}.
There exist nonlinear Landau-Zener formulae adapted to nonlinear transition systems; see for instance the following system from 
\cite{CF-LZ} and \cite{BiaoQian}.
$$i\d_t u = \mathcal{H}(\gamma)u,\quad u=(u_1,u_2), $$
with $$ \mathcal{H}(\gamma) = \begin{pmatrix}
                                                                              \gamma(t)+\kappa(|u_2|^2-|u_1|^2)& \delta \\
                                                                              \delta & -(\gamma(t)+\kappa(|u_2|^2-|u_1|^2))
                                                                             \end{pmatrix},$$
where $\delta$ is the coupling constant between the energy levels, $\kappa$ is a parameter for the nonlinear interaction, 
and $\gamma(t)$ is the level separation. The reader can also refer to \cite{KR} or \cite{Sacchetti09} 
for other discussions about the nonlinear versions of the Landau-Zener formula.
However, we will not use these tools in this paper since the nonlinear effects will not be visible in our transition system.
\\
In fact, $v^\eps$ can be approached at leading order by the linear function $f$ solution to \eqref{f}-\eqref{dataf}, on the 
time interval $[-c_0\eps^{1/2-\gamma},c_0\eps^{1/2-\gamma}]$. 
Let us first notice that an explicit form of the solution $f$ is computed in \cite{HJ98}, using \textit{parabolic cylinder functions}, 
whose asymptotics are well known (see \cite{GR80} for details).
We need some results on $f$, how it is ``carried'' on each eigenspace, in order to deduce some information on $v^\eps$. 
Writing $f_{1,2}(s,y) = u_{2,1}(s\sqrt{\xi_0}+y/\sqrt{\xi_0},y)$, we have the following theorem from \cite{FG02}, 
which contains the only results we need on the asymptotics:
\begin{theorem}[Scattering result from Appendix 9 in \cite{FG02}]
 \label{diffuFG}
 We consider the following system
 \begin{equation}
 \label{LZsystem}
  -i\d_s u = \begin{pmatrix}
              s & \eta \\
              \eta & -s 
             \end{pmatrix}u.
 \end{equation}
Then there exist $\left(g_1^{s-},g_2^{s-}\right)$ and $\left(g_1^{s+},g_2^{s+}\right)$ two orthonormal bases of solutions to 
\eqref{LZsystem} such that locally uniformly in $\eta$, the following asymptotics hold:
\begin{equation*}
 \textrm{for }s \rightarrow - \infty, \quad \left\lbrace \begin{array}{cl}
                                                  g_1^{s-}(s,\eta) =& e^{i\Lambda(s,\eta)}  \left( \begin{array}{c}
         1 \\ 0
        \end{array}\right) + o(1)  \\
                                                  g_2^{s-}(s,\eta) =& e^{-i\Lambda(s,\eta)} \left( \begin{array}{c}
         0 \\ 1
        \end{array}\right) + o(1)
                                                 \end{array}
 \right.
\end{equation*}
\begin{equation*}
 \textrm{for }s \rightarrow - \infty, \quad \left\lbrace \begin{array}{cl}
                                                  g_1^{s+}(s,\eta) =& e^{i\Lambda(s,\eta)} \left( \begin{array}{c}
         1 \\ 0
        \end{array}\right)+ o(1)  \\
                                                  g_2^{s+}(s,\eta) =& e^{-i\Lambda(s,\eta)} \left( \begin{array}{c}
         0 \\ 1
        \end{array}\right)+ o(1)
                                                 \end{array}
 \right.
\end{equation*}
 where $$\Lambda(s,\eta) =  \dfrac{s^2}{2}+\dfrac{\eta^2}{2}\log|s|,$$
 and with the following transition rule from components $(\alpha_1,\alpha_2)$ in $\left(g_1^{s-},g_2^{s-}\right)$ to
 $(\beta_1,\beta_2)$ in $\left(g_1^{s+},g_2^{s+}\right)$:
 \begin{equation*}
  \left(\begin{array}{c}
  \beta_1(\eta) \\
  \beta_2(\eta)
  \end{array}\right) = \begin{pmatrix}
                        a(\eta) & -\overline{b}(\eta) \\
                        b(\eta) & a(\eta)
                       \end{pmatrix}  \left(\begin{array}{c}
  \alpha_1(\eta) \\
  \alpha_2(\eta)
  \end{array}\right)
 \end{equation*}
where
$$a(\eta) = e^{-\pi\eta^2/2}, \quad b(\eta) = \dfrac{2i}{\sqrt{\pi}\eta}2^{-i\eta^2/2}e^{-\pi\eta^2/4}
\Gamma\left(1+\dfrac{i\eta^2}{2}\right)\sinh\left(\dfrac{\pi\eta^2}{2}\right),$$
 and $|a(\eta)|^2+|b(\eta)|^2=1.$
\end{theorem}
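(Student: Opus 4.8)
The plan is to reduce \eqref{LZsystem} to the parabolic cylinder (Weber) equation, to construct the two asymptotic bases by a Levinson-type perturbation argument, and to read the transition matrix off the classical linear connection formula for Weber functions. Writing $u=(u_1,u_2)$ and eliminating $u_2$ from \eqref{LZsystem}, a direct computation gives the decoupled scalar equations
\begin{equation*}
\ddot u_1+\bigl(s^2+\eta^2-i\bigr)u_1=0,\qquad \ddot u_2+\bigl(s^2+\eta^2+i\bigr)u_2=0,
\end{equation*}
so that $u_1$ and $\overline{u_2}$ solve the same equation; the substitution $\zeta=\sqrt2\,e^{-i\pi/4}s$ turns it into Weber's equation $D''(\zeta)+\bigl(\nu+\tfrac12-\tfrac{\zeta^2}{4}\bigr)D(\zeta)=0$ with index $\nu=i\eta^2/2$. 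Hence every solution of \eqref{LZsystem} is a linear combination of parabolic cylinder functions $D_\nu(\pm\zeta)$ and $D_{-\nu-1}(\pm i\zeta)$, whose large-$|\zeta|$ asymptotics $D_\nu(\zeta)\sim\zeta^{\nu}e^{-\zeta^2/4}$ reproduce exactly the phase $\Lambda(s,\eta)=\tfrac{s^2}{2}+\tfrac{\eta^2}{2}\log|s|$ up to a constant that depends on $\arg\zeta$, hence differs on the two half-lines $s\to\pm\infty$, which is the source of the transition.

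Next I would establish the existence of the two bases \emph{without} invoking special functions, which also yields the local uniformity in $\eta$. Since the matrix in \eqref{LZsystem} is real symmetric, the flow of \eqref{LZsystem} is unitary on $\C^2$; in particular the Hermitian inner product of any two solutions is constant in $s$. Diagonalising the matrix by its normalised eigenvectors $\chi_\pm(s)$ associated with the eigenvalues $\pm\sqrt{s^2+\eta^2}$, one has $\chi_+(s)\to(1,0)$ and $\chi_-(s)\to(0,1)$ as $s\to+\infty$ (with the analogous limits, up to sign, as $s\to-\infty$), and the non-adiabatic coupling satisfies $\bigl|\langle\chi_+(s),\dot\chi_-(s)\rangle\bigr|=\tfrac{|\eta|}{2(s^2+\eta^2)}=O(s^{-2})$, which is integrable near $\pm\infty$. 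A standard Levinson/Dollard-type argument (variation of parameters and a contraction on $[\,s_0,+\infty)$ or $(-\infty,-s_0]$, locally uniform in $\eta$) then produces, on each side, solutions asymptotic to $e^{\pm i\Lambda(s,\eta)}$ times the corresponding limiting eigenvector: these are the $g^{s-}_j$ and $g^{s+}_j$. That they are orthonormal for \emph{all} $s$, and hence that the change-of-basis matrix between them is unitary, so that $|a(\eta)|^2+|b(\eta)|^2=1$, follows because a constant inner product with a limit at infinity equals that limit and the leading asymptotic terms are orthonormal.

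Finally I would compute the transition matrix $\begin{pmatrix} a(\eta) & -\overline{b}(\eta)\\ b(\eta) & a(\eta)\end{pmatrix}$ explicitly. Expressing $g^{s-}_1$ in the basis $\{g^{s+}_1,g^{s+}_2\}$ amounts, in the $\zeta$-plane, to the standard connection formula
\begin{equation*}
D_\nu(\zeta)=e^{i\pi\nu}D_\nu(-\zeta)+\frac{\sqrt{2\pi}}{\Gamma(-\nu)}\,e^{i\pi(\nu+1)/2}\,D_{-\nu-1}(i\zeta),\qquad \nu=\frac{i\eta^2}{2},
\end{equation*}
evaluated along the rays $\arg s=0$ and $\arg s=\pi$. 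The diagonal entry comes from $|e^{i\pi\nu}|=e^{-\pi\eta^2/2}$, giving $a(\eta)=e^{-\pi\eta^2/2}$; the off-diagonal entry picks up the factor $\sqrt{2\pi}/\Gamma(-\nu)$, which through the reflection formula $\Gamma(-\nu)\Gamma(1+\nu)=-\pi/\sin(\pi\nu)$ and $\sin(i\pi\eta^2/2)=i\sinh(\pi\eta^2/2)$ yields the $\Gamma(1+i\eta^2/2)\sinh(\pi\eta^2/2)$ part of $b(\eta)$, while the rescaling $\zeta=\sqrt2\,e^{-i\pi/4}s$ together with the phases $e^{i\pi(\nu+1)/2}$ and $(\sqrt2)^{\nu}$ assemble into the remaining prefactor $\tfrac{2i}{\sqrt\pi\,\eta}2^{-i\eta^2/2}e^{-\pi\eta^2/4}$; the identity $|\Gamma(1+i\eta^2/2)|^2=\tfrac{\pi\eta^2/2}{\sinh(\pi\eta^2/2)}$ serves as a consistency check against $|a|^2+|b|^2=1$. \textbf{The main obstacle is precisely this last step}: tracking the branch of $\zeta\mapsto\zeta^{\nu}$ and the value of $\arg\zeta$ when passing from one ray to the other, so that not only the moduli but the exact phases of $a(\eta)$ and $b(\eta)$ come out correctly. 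Since these are the classical Landau--Zener connection coefficients, the most economical route, and the one adopted here, is to import from \cite{HJ98} the explicit parabolic cylinder representation of the solution $f$ and then merely verify the stated asymptotics and transition rule, rather than re-deriving the connection formula from scratch.
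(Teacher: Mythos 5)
This theorem is imported, not proved, in the paper: it is explicitly labelled ``Scattering result from Appendix 9 in \cite{FG02}'' and is used as a black box, so there is no in-paper argument to compare against. Your sketch is a correct outline of the standard proof. The elimination giving $\ddot u_1+(s^2+\eta^2-i)u_1=0$ and $\ddot u_2+(s^2+\eta^2+i)u_2=0$ checks out, the substitution $\zeta=\sqrt2\,e^{-i\pi/4}s$ does yield Weber's equation with index $\nu=i\eta^2/2$, and the separate Levinson argument (integrable non-adiabatic coupling $|\langle\chi_+,\dot\chi_-\rangle|=\eta/2(s^2+\eta^2)$, unitarity of the flow giving orthonormality of the limiting bases, hence $|a|^2+|b|^2=1$) is a genuine gain: it furnishes the existence of $g_j^{s\pm}$ and the uniformity in $\eta$ without special functions, which the citation alone does not display. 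Your consistency check via $|\Gamma(1+iy)|^2=\pi y/\sinh(\pi y)$ is correct and closes the identity $|a|^2+|b|^2=1$. The one genuine difficulty you flag -- tracking $\arg\zeta$ and the branch of $\zeta^\nu$ across the two rays so that the exact phases of $a$ and $b$, not just their moduli, come out right -- is indeed where a from-scratch derivation is prone to sign/phase errors, and your fallback of importing the explicit parabolic cylinder representation from \cite{HJ98} is exactly parallel to what the paper does by citing \cite{FG02}. One small remark: as printed, both displayed asymptotics in the theorem read ``for $s\to-\infty$''; the second block, for the $g_j^{s+}$, should of course read $s\to+\infty$, and your sketch tacitly uses the corrected version.
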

As a consequence, for $f$ we obtain
\begin{corollary}
 \label{cor:f}
 Let $\gamma \in ]0,1/6[$. 
 We consider $f=(f_1,f_2)$ the solution to \eqref{f}-\eqref{dataf} on $[-c_0\eps^{-\gamma},c_0\eps^{-\gamma}] $.
 Then, 
 $$\|f_1(-c_0\eps^{-\gamma})\|_{L^2}^2 = 0 \quad \textrm{and} 
 \quad \|f_2(-c_0\eps^{-\gamma})\|_{L^2}^2 = \|a\|_{L^2}^2+o(1), $$
and 
\begin{align*}
 \|f_1(c_0\eps^{-\gamma})\|_{L^2}^2 &= (1-e^{-\frac{\pi c^2}{\xi_0}})\|a\|_{L^2}^2 +\;o(1)\\
 \|f_2(c_0\eps^{-\gamma})\|_{L^2}^2 &= e^{-\frac{\pi c^2}{\xi_0}}\|a\|_{L^2}^2+\;o(1).
\end{align*}
\end{corollary}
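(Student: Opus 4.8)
The plan is to turn the Cauchy problem \eqref{f}--\eqref{dataf} into the Landau--Zener model \eqref{LZsystem} and then read off the two $L^2$-masses directly from the scattering description of Theorem~\ref{diffuFG}, using the mass conservation of $u_\delta$. With $\sigma=s\sqrt{\xi_0}+y/\sqrt{\xi_0}$, $\eta=-c/\sqrt{\xi_0}$, and $f_1(s,y)=u_2(\sigma,y)$, $f_2(s,y)=u_1(\sigma,y)$, a direct computation shows that \eqref{f} becomes exactly \eqref{LZsystem}: one only needs $\sqrt{\xi_0}\,\sigma=y+s\xi_0$ and $-\sqrt{\xi_0}\,\eta=c$, i.e.\ $\eta^2=c^2/\xi_0$. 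At $s=\mp c_0\eps^{-\gamma}$ the new time is $\sigma^{\mp}(y):=\mp c_0\sqrt{\xi_0}\,\eps^{-\gamma}+y/\sqrt{\xi_0}$, so $\sigma^-(y)\to-\infty$ and $\sigma^+(y)\to+\infty$ as $\eps\to0$, uniformly for $y$ in bounded sets. Since the matrix in \eqref{LZsystem} is real symmetric, the $\C^2$-norm $|u(\sigma,y)|$ is constant in $\sigma$, hence $|f(s,y)|\equiv|u_\delta(-c_0\eps^{1/2-\gamma},y)|$ for every $s$; together with the datum ($f_1=0$, $f_2=u_\delta\,e^{i\phi^\eps/\eps}$ at $s=-c_0\eps^{-\gamma}$) and $\|u_\delta(t)\|_{L^2}=\|a\|_{L^2}$ this already proves the first two identities of the corollary and provides the pointwise bound $|f(s,y)|\le|u_\delta(-c_0\eps^{1/2-\gamma},y)|$ used below.

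For the outgoing masses I would decompose $u(\cdot,y)=\alpha_1(y)g_1^{s-}+\alpha_2(y)g_2^{s-}$ in the orthonormal basis of Theorem~\ref{diffuFG}, so that $|\alpha_1(y)|^2+|\alpha_2(y)|^2=|u(\sigma^-(y),y)|^2=|u_\delta(-c_0\eps^{1/2-\gamma},y)|^2$ and, $\phi^\eps$ being real, only this modulus enters the bookkeeping. Evaluating the incoming asymptotics at $\sigma^-(y)\to-\infty$ and using $u_2(\sigma^-(y),y)=0$ forces $|\alpha_2(y)|\le o(1)|\alpha_1(y)|$ for $|y|\le R$; combined with the uniform tail control $\int_{|y|>R}|u_\delta(-c_0\eps^{1/2-\gamma},y)|^2\,dy\lesssim R^{-2}$ coming from the momentum bound of Theorem~\ref{thm:profile}, this yields $\|\alpha_2\|_{L^2}=o(1)$ and $\|\alpha_1\|_{L^2}^2=\|a\|_{L^2}^2+o(1)$. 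The transition rule of Theorem~\ref{diffuFG} then gives the outgoing components $\beta_1=a(\eta)\alpha_1-\overline{b(\eta)}\alpha_2$, $\beta_2=b(\eta)\alpha_1+a(\eta)\alpha_2$, whence $\|\beta_1\|_{L^2}^2=|a(\eta)|^2\|a\|_{L^2}^2+o(1)$ and $\|\beta_2\|_{L^2}^2=|b(\eta)|^2\|a\|_{L^2}^2+o(1)$. Reading the outgoing basis asymptotics at $\sigma^+(y)\to+\infty$ and repeating the cut at $|y|=R$ (domination by $|u_\delta(-c_0\eps^{1/2-\gamma},\cdot)|$) gives $\|f_2(c_0\eps^{-\gamma})\|_{L^2}^2=\|\beta_1\|_{L^2}^2+o(1)$ and $\|f_1(c_0\eps^{-\gamma})\|_{L^2}^2=\|\beta_2\|_{L^2}^2+o(1)$; since $\eta^2=c^2/\xi_0$ we have $|a(\eta)|^2=e^{-\pi c^2/\xi_0}$ and $|b(\eta)|^2=1-e^{-\pi c^2/\xi_0}$, which is exactly the claim.

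The genuinely delicate point is not the algebra but legitimizing the passage from the \emph{pointwise-in-$y$} scattering asymptotics of Theorem~\ref{diffuFG} (stated, for fixed endpoints at $\pm\infty$, locally uniformly in the parameter) to \emph{$L^2(dy)$} identities at the \emph{$\eps$-dependent} endpoints $\sigma^\pm(y)=\mp c_0\sqrt{\xi_0}\,\eps^{-\gamma}+y/\sqrt{\xi_0}$. I would handle it by a standard cut at $|y|=R$: on $|y|\le R$ the endpoints tend to $\mp\infty$ uniformly, so the asymptotics apply there; the tail $|y|>R$ is made small uniformly in $\eps$ by the tightness of $u_\delta(-c_0\eps^{1/2-\gamma},\cdot)$ (Theorem~\ref{thm:profile}) combined with the conservation $|u(\sigma,y)|=|u_\delta(-c_0\eps^{1/2-\gamma},y)|$; one then lets $\eps\to0$ and afterwards $R\to\infty$. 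A minor secondary point is to check that the change of variables sends the time $-c_0\eps^{1/2-\gamma}$ into the interval $[-T_0,0]$ on which Theorem~\ref{thm:profile} applies, which holds for $\eps$ small because $\gamma<1/6<1/2$.
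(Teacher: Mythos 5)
Your proof is correct and follows essentially the same route as the paper: change variables $\sigma=(y+s\xi_0)/\sqrt{\xi_0}$, swap components to reduce \eqref{f} to the Landau--Zener system \eqref{LZsystem} with $\eta=-c/\sqrt{\xi_0}$, decompose in the incoming/outgoing scattering bases of Theorem~\ref{diffuFG}, and apply the transition matrix. The one place where you are a bit more careful than the paper is in upgrading the pointwise (locally uniform in $\eta$) asymptotics at the $\eps$-dependent endpoints to $L^2(dy)$ statements: the paper concludes ``$\alpha_2\equiv 0$'' and disposes of the $\rho^{s\pm}$ remainders by mass conservation plus dominated convergence, which is a little terse since $u_\delta(-c_0\eps^{1/2-\gamma},\cdot)$ itself varies with $\eps$; your cut at $|y|=R$ combined with the uniform momentum bound of Theorem~\ref{thm:profile} (giving a tail $\lesssim R^{-2}$ uniformly in $\eps$) and the pointwise domination $|f(s,y)|\equiv|u_\delta(-c_0\eps^{1/2-\gamma},y)|$ makes the same step fully explicit, and the Pythagoras identity $|\alpha_1|^2+|\alpha_2|^2=|u_\delta|^2$ spares you from having to track the explicit phase $\theta^\eps$ that the paper writes out.
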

\noindent
Here again, the behaviour of $f$ and its derivatives is important.
\begin{lemma}
 \label{lemma:f}
 Let $\gamma \in ]0,1/6[$ and $f$ be the solution to \eqref{f} with an initial data given by \eqref{dataf}. 
 Then if $s \in [-c_0\eps^{-\gamma}, c_0\eps^{-\gamma}]$, 
 we have:
 $$\forall k \in \N,\;  \exists C_k>0, \quad \|\d_y^k f(s)\|_{L^2}\leq C_k \; \eps^{-k\gamma}.$$
\end{lemma}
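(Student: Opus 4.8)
The plan is to establish the estimate by induction on $k$, combining the conservation of the $L^2$-norm under the flow of \eqref{f} with an energy estimate for the equation satisfied by $\d_y^k f$; each differentiation will cost one power $\eps^{-\gamma}$, produced by the length $2c_0\eps^{-\gamma}$ of the time interval on which we work.

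For the base case $k=0$, I observe that the matrix
\[
 M(s,y)=\begin{pmatrix} y+s\xi_0 & c\\ c & -(y+s\xi_0)\end{pmatrix}
\]
appearing in \eqref{f} is real symmetric, so the propagator is unitary on $L^2(\R;\C^2)$ and $\|f(s)\|_{L^2}=\|f(-c_0\eps^{-\gamma})\|_{L^2}$ for all $s$ in the interval. By the data \eqref{dataf} and the conservation of the $L^2$-norm of $u_\delta$, the right-hand side equals $\|u_\delta(-c_0\eps^{1/2-\gamma})\|_{L^2}=\|a\|_{L^2}$, which gives the claim with $C_0=\|a\|_{L^2}$.

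For $k\ge 1$, I would differentiate \eqref{f} in $y$. The key structural point is that $\d_y M$ is the \emph{constant} matrix $J=\mathrm{diag}(1,-1)$ and $\d_y^j M=0$ for $j\ge 2$, so the Leibniz rule yields the inhomogeneous linear equation
\[
 i\d_s\d_y^k f - M\,\d_y^k f = k\,J\,\d_y^{k-1}f .
\]
Since $M=M^{*}$ and $J$ is orthogonal, pairing with $\d_y^k f$ in $L^2$ and taking real parts gives $\frac{d}{ds}\|\d_y^k f(s)\|_{L^2}\le k\,\|\d_y^{k-1}f(s)\|_{L^2}$ (made rigorous via the regularization $(\|\cdot\|_{L^2}^2+\eta)^{1/2}$, $\eta\downarrow 0$, legitimate because \eqref{f} is a linear Schr\"odinger equation with a smooth potential at most linear in $y$, so $f(s)$ stays in $\Sch(\R)$). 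Integrating from $-c_0\eps^{-\gamma}$ and using that the interval has length at most $2c_0\eps^{-\gamma}$,
\[
 \|\d_y^k f(s)\|_{L^2}\le \|\d_y^k f(-c_0\eps^{-\gamma})\|_{L^2}+2kc_0\,\eps^{-\gamma}\sup_{|\tau|\le c_0\eps^{-\gamma}}\|\d_y^{k-1}f(\tau)\|_{L^2}.
\]
Granting the $(k-1)$-st estimate, the last term is $\mathcal{O}(\eps^{-k\gamma})$, so the induction step reduces to bounding the derivatives of the initial data by $\eps^{-k\gamma}$.

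I expect this last reduction to be the main obstacle, since it is where the entire $\eps$-dependence is concentrated. I would treat it with the Leibniz rule applied to \eqref{dataf}: Theorem \ref{thm:profile} provides $\|\d_y^j u_\delta(-c_0\eps^{1/2-\gamma})\|_{L^2}\le C$ uniformly in $\eps$ (admissible because $-c_0\eps^{1/2-\gamma}\in[-T_0,0]$ for $\eps$ small, as $\gamma<1/2$), while from the explicit formula \eqref{phasephi} for the real phase one should verify $\|\d_y^j(\eps^{-1}\phi^\eps)\|_{L^\infty}\lesssim \eps^{-j\gamma}$, whence $\|\d_y^\ell(e^{i\phi^\eps/\eps})\|_{L^\infty}\lesssim \eps^{-\ell\gamma}$ by the Fa\`a di Bruno formula. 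Distributing $\d_y^k$ across the two factors and bounding the oscillatory one in $L^\infty$ then gives $\|\d_y^k f(-c_0\eps^{-\gamma})\|_{L^2}\lesssim \sum_{j=0}^k\eps^{-(k-j)\gamma}\lesssim \eps^{-k\gamma}$, closing the induction. The energy estimate itself is routine precisely because $\d_y M$ is constant: no powers of $y$ are generated, so one never needs weighted (momentum) bounds on $f$, and only the $\eps$-uniformity of the constants at the matching time requires care.
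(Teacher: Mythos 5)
Your proposal is correct and follows essentially the same route as the paper: $L^2$-conservation for $k=0$ (since $M$ is real symmetric and enters as $i\d_s f = Mf$), an energy estimate for $\d_y^k f$ exploiting that $\d_y M = \mathrm{diag}(1,-1)$ is constant, integration over the $\mathcal{O}(\eps^{-\gamma})$-length interval, and Leibniz on the initial data \eqref{dataf} combined with the phase bound \eqref{phase2} and Theorem \ref{thm:profile}. The only cosmetic difference is that the paper points out that $\phi^\eps$ is linear in $y$, so the Fa\`a di Bruno expansion you invoke collapses to the single monomial $(\d_y(\phi^\eps/\eps))^\ell$; your more general phrasing is harmless since the $j\ge 2$ derivatives of $\phi^\eps$ vanish and the asserted bound $\|\d_y^j(\eps^{-1}\phi^\eps)\|_{L^\infty}\lesssim\eps^{-j\gamma}$ holds trivially for those $j$.
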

\noindent
Then, we state the following theorem which implies the second point of Theorem \ref{main} and is proved in Section \ref{inner}.
\begin{theorem}
 \label{thm:Inner}
 Let $a \in \Sch (\R)$, $\gamma \in ]0,1/6[$ and $c_0>0$. We consider $v^\eps$ the solution to \eqref{NLS1}, 
 on the time interval $I_\eps = [-c_0\eps^{-\gamma},c_0\eps^{-\gamma}]$, 
 and $f$ the solution to the linear ODE  \eqref{f}-\eqref{dataf} on the same time interval. 
 Then, the function $$r^\eps(s,y)=v^\eps(s,y)-f(s,y)$$ satisfies: 
 $\exists C_1, C_2>0$, $\exists \kappa_0>0$, $\forall \kappa \in ]0,\kappa_0]$, 
 $\exists \eps_0>0$, $\forall \eps \in ]0, \eps_0]$, 
 
 $$ \sup_{s \in I_\eps}\left\|r^\eps(s) \right\|_{L^2} \leq C_1\eps^{\gamma/2} \quad; \quad  
 \sup_{s \in I_\eps} \left\|\sqrt{\eps}\d_yr^\eps(s) \right\|_{L^2} = C_2\eps^{\gamma}.$$
\end{theorem}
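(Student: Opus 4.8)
The plan is to prove both estimates at once by a bootstrap (continuity) argument on $I_\eps$ for the difference $r^\eps = v^\eps - f$. Subtracting \eqref{f} from \eqref{NLS1}, and using that for $\delta = c\sqrt\eps$ the matrix $V_{\delta/\sqrt\eps}(y+s\xi_0)$ is exactly the matrix $V_c(y+s\xi_0)$ occurring in \eqref{f}, one finds
\begin{equation}\label{eq:req}
 i\d_s r^\eps + \frac{\sqrt\eps}{2}\d_y^2 r^\eps - V_c(y+s\xi_0)\,r^\eps
 = \kappa\sqrt\eps\,|v^\eps|^2 v^\eps \;-\; \frac{\sqrt\eps}{2}\d_y^2 f \;=:\; G^\eps ,
\end{equation}
the Laplacian of $f$ appearing because \eqref{f} carries no $\d_y^2$ term. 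At the left endpoint, $r^\eps(-c_0\eps^{-\gamma}) = v^\eps_{init.} - f(-c_0\eps^{-\gamma})$ is small: by the very choice of $v^\eps_{init.}$ made in Section \ref{match} — it is obtained by inserting the incoming outer approximation $\vf^\eps\chi_\delta^+$ into \eqref{rescaled} at $t = -t^\eps$, the phase $\phi^\eps$ in \eqref{dataf} being designed to reconcile the outer phase $S^+/\eps + \xi^+\!\cdot(x-x^+)/\eps$ with the inner phase $\xi_0^2 t/(2\eps) + \xi_0(x-t\xi_0)/\eps$ — together with Theorem \ref{thm:IncomingOuter}, one gets $\|r^\eps(-c_0\eps^{-\gamma})\|_{L^2} + \|\sqrt\eps\d_y r^\eps(-c_0\eps^{-\gamma})\|_{L^2} \lesssim \eps^{\gamma/2}$.

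I would then set $N(s) = \|r^\eps(s)\|_{L^2} + \|\sqrt\eps\d_y r^\eps(s)\|_{L^2}$ and work on the maximal subinterval of $I_\eps$ where $N(s) \le K\eps^{\gamma/2}$, with $K$ fixed large at the end. Since $-\tfrac{\sqrt\eps}{2}\d_y^2 + V_c(y+s\xi_0)$ is self-adjoint, the $L^2$ energy identity applied to \eqref{eq:req} — and to the equation obtained by applying $\sqrt\eps\d_y$, whose only new source term is the bounded commutator $\sqrt\eps(\d_y V_c)r^\eps$ — gives
\begin{equation}\label{eq:energy}
 \frac{d}{ds}\|r^\eps\|_{L^2} \le \|G^\eps\|_{L^2}, \qquad
 \frac{d}{ds}\|\sqrt\eps\d_y r^\eps\|_{L^2} \le \|\sqrt\eps\d_y G^\eps\|_{L^2} + C\sqrt\eps\,\|r^\eps\|_{L^2}.
\end{equation}
Under the bootstrap hypothesis, mass conservation for \eqref{NLS1} gives $\|v^\eps(s)\|_{L^2} = \|a\|_{L^2}$; pointwise conservation of $|f|$ under \eqref{f} together with Theorem \ref{thm:profile} gives $\|f(s)\|_{L^2} = \|a\|_{L^2}$ and $\|f(s)\|_{L^\infty} = \|u_\delta(-c_0\eps^{1/2-\gamma})\|_{L^\infty} \lesssim 1$; and a Gagliardo--Nirenberg inequality then yields an a priori control of $\|v^\eps\|_{L^\infty}$ (and, after one more derivative, of $\|\d_y v^\eps\|_{L^2}$) by a fixed small negative power of $\eps$. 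Alternatively one uses the semiclassical Strichartz estimates for the propagator of $i\d_s = -\tfrac{\sqrt\eps}{2}\d_y^2 + V_c(y+s\xi_0)$ (available as in the global existence theorem cited above, with semiclassical parameter $\hbar = \sqrt\eps$), covering $I_\eps$ by $\mathcal{O}(\eps^{-\gamma})$ intervals of unit length. Either way, the cubic terms in \eqref{eq:energy} split as $\kappa\sqrt\eps(|v^\eps|^2v^\eps - |f|^2f) + \kappa\sqrt\eps|f|^2f$: the first is absorbed as a Gronwall term, its accumulated weight over $I_\eps$ being $|\kappa|$ times a small power of $\eps$, which produces a multiplicative factor $1+o(1)$; the genuine source $\kappa\sqrt\eps|f|^2f$ has $L^2$ norm $\lesssim|\kappa|\sqrt\eps$ and so contributes $\lesssim|\kappa|\eps^{1/2-\gamma}$ after integration; the analogous terms $\kappa\eps\,\d_y(|v^\eps|^2v^\eps)$ in the second line of \eqref{eq:energy} are handled the same way. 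The $L^2$-subcriticality, the $\sqrt\eps$ weight and the smallness of $\kappa$ are exactly what make this part close.

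The remaining ingredient is the contribution of the \emph{missing Laplacian} $\tfrac{\sqrt\eps}{2}\d_y^2 f$. By Lemma \ref{lemma:f} one has $\|\tfrac{\sqrt\eps}{2}\d_y^2 f(s)\|_{L^2} \lesssim \eps^{1/2-2\gamma}$ and $\|\tfrac{\eps}{2}\d_y^3 f(s)\|_{L^2} \lesssim \eps^{1-3\gamma}$; the latter integrates to $\mathcal{O}(\eps^{1-4\gamma}) = o(\eps^{\gamma})$ on $I_\eps$ since $\gamma < 1/6$, so the $\sqrt\eps\d_y r^\eps$ estimate closes. The delicate point is the former in the plain $L^2$ estimate: integrating crudely, $\int_{I_\eps}\|\tfrac{\sqrt\eps}{2}\d_y^2 f\|_{L^2}\,ds = \mathcal{O}(\eps^{1/2-3\gamma})$, which is $o(\eps^{\gamma/2})$ only for $\gamma \le 1/7$. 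To cover the full range $]0,1/6[$ one must exploit that $\d_y^2 f$ is this large only through the fast phase of $f$ (from \eqref{dataf}, $\phi^\eps/\eps$ is essentially linear in $y$ with slope $\sim\eps^{-\gamma}$, and this slope persists under \eqref{f}): writing this contribution in Duhamel form $\tfrac{i\sqrt\eps}{2}\int_{-c_0\eps^{-\gamma}}^{s}W^\eps(s,\sigma)\,\d_y^2 f(\sigma)\,d\sigma$, with $W^\eps$ the propagator above, and using the explicit parabolic-cylinder form of the solutions of \eqref{f} (cf.\ \cite{HJ98}, \cite{GR80}) to integrate by parts against that phase, the accumulated contribution drops to $\mathcal{O}(\eps^{1/2-\gamma})$ — the genuine dispersive spreading over the long rescaled time $\eps^{-\gamma} \ll \eps^{-1/2}$ — which is $o(\eps^{\gamma/2})$ throughout $]0,1/6[$. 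This oscillatory bookkeeping for the missing-Laplacian term is where I expect the real work to lie. Once it is in hand, \eqref{eq:energy} gives $N(s) \lesssim \big(\|r^\eps(-c_0\eps^{-\gamma})\|_{L^2} + \eps^{1/2-\gamma} + |\kappa|\eps^{1/2-\gamma}\big)(1+o(1)) \le \tfrac12 K\eps^{\gamma/2}$ on $I_\eps$ for $K$ large, $\kappa \le \kappa_0$ and $\eps \le \eps_0$, the continuity argument closes, and one reads off the two stated bounds (the weighted one even as $\mathcal{O}(\eps^\gamma)$).
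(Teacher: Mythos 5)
Your overall route matches the paper's: set up the equation for $r^\eps$ with source $\tfrac{\sqrt\eps}{2}\partial_y^2 f + \kappa\sqrt\eps|v^\eps|^2v^\eps$, observe that the Hermitian matrix potential drops out of the $L^2$ energy identity, close by a bootstrap on an a~priori $L^\infty$ bound for $r^\eps$ via Gagliardo--Nirenberg (the paper's hypothesis is $\|r^\eps\|_{L^\infty}^2\le\eps^{-(1-3\gamma)/2}$ rather than a cap on $N(s)$, but this is the same content), and feed in Lemma~\ref{lemma:f} and Proposition~\ref{prop:match} for $f$ and the matching data. You are slightly sharper in one spot: since \eqref{f} is an ODE in $s$ with a Hermitian matrix, $|f(s,y)|$ is conserved pointwise and $\|f\|_{L^\infty}\lesssim 1$, whereas the paper settles for the cruder $\|f\|_{L^\infty}\lesssim\eps^{-\gamma/2}$ from Gagliardo--Nirenberg and Lemma~\ref{lemma:f}; either suffices here.

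The genuine divergence concerns the missing-Laplacian term, where your concern is well-founded but your proposed fix is not what the paper does. The paper performs exactly the crude integration you describe: Lemma~\ref{lemma:f} gives $\sqrt\eps\|\partial_y^2 f\|_{L^2}\lesssim\eps^{1/2-2\gamma}$, integration over $|I_\eps|\sim\eps^{-\gamma}$ produces $\eps^{1/2-3\gamma}$, and then the paper writes $\|r^\eps(s)\|_{L^2}\le C_1\eps^{\gamma}+C\eps^{1/2-3\gamma}+C\kappa\eps^{\gamma/2}\le C\kappa\eps^{\gamma/2}$, invoking ``we only keep the worst contributions.'' But $\eps^{1/2-3\gamma}$ dominates $\eps^{\gamma/2}$ precisely when $\gamma>1/7$, so as written the paper's chain of inequalities is justified only on $(0,1/7]$, not the stated $(0,1/6)$. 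There is no oscillatory or stationary-phase argument in the paper; the refinement you sketch (integrating by parts against the phase $\Lambda$ through the propagator $W^\eps$) is an additional idea of yours, and by your own admission you have not carried it out. So either the theorem's range of $\gamma$ has to be read as $(0,1/7]$ to match the paper's proof, or a genuine improvement of the $\eps^{1/2-3\gamma}$ bound along the lines you suggest (or otherwise) is needed — but your proposal, like the paper's, stops short of delivering the full stated range.
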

Let us mention that Corollary \ref{cor:main} is a consequence of Corollary \ref{cor:f} and Theorem \ref{thm:Inner} as we shall see in 
Section \ref{conclusion}.

\section{Properties of the profile $u_\delta$ in the adiabatic region}
\label{Proof thm:profile}
In this section, we prove Theorem \ref{thm:profile} and study the profile, solution to the Cauchy problem \eqref{profile} 
when $t=0$. 
There are two difficulties. The first one is linked with the small size of the gap $\delta$ since the function 
$\lambda_\delta^{(2)} (x(t))= \mathcal{O}(\delta^{-1})$ when $t$ tends to zero. 
The second one is that we have no Strichartz estimates 
for the operator $$-\dfrac{1}{2}\d_y^2 + \dfrac{1}{2}\lambda_\delta^{(2)} (x(t))y^2.$$
\\
In order to avoid dealing with the $\delta-$dependent potential, we use a Lens transform that allows to drop the potential 
and to use ``free Strichartz estimates''.

\subsection{Preliminary results}
We introduce the tools and some results that we need in our proofs.
\\
In many computations in this paper, the function $\lambda_\delta^{+(2)}(x(t))$ has to be controlled.
We will use the following lemma.
\begin{lemma}[From Proposition 2.2 of \cite{HJ98}]
\label{key}
Let $T>0$. Then, there exists $\sigma>0$, such that
 $$\exists \delta_0>0, \; \forall \delta \in ]0,\delta_0], \quad 
 \int_{0}^t \left|\lambda_\delta^{(2)}(x(s))\right| ds \leq \sigma(1+T), \quad \forall \;0<t\leq T.$$
\end{lemma}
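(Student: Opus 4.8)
The goal is to bound $\int_0^t |\lambda_\delta^{(2)}(x(s))|\,ds$ uniformly in $\delta$, where $\lambda_\delta^{+(2)}(x) = \delta^2(x^2+\delta^2)^{-3/2}$ and $x(s) = x^+(s)$ is the classical trajectory through $(0,\xi_0)$ with $\xi_0>0$. The key structural fact is that near $s=0$ the trajectory behaves like $x(s) = \xi_0 s + \mathcal{O}(s^2)$ with $\xi_0>0$, so $x$ is a strictly monotone function of $s$ on a fixed neighborhood of $0$, with derivative bounded below: there is $c_1>0$ and $s_1>0$ (both independent of $\delta$, by Remark \ref{rmk:traj}) such that $\dot x(s) = \xi^+(s) \geq c_1$ for $|s|\leq s_1$. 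The plan is to split $[0,t]$ into the region $|s|\leq s_1$, where we change variables $u = x(s)$, and the region $s_1 \leq s \leq T$, where $|x(s)|$ is bounded below by a positive constant independent of $\delta$ so the integrand is $\mathcal{O}(\delta^2) = o(1)$.

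First I would treat the outer region: for $s \in [s_1, T]$ (assuming $t > s_1$; otherwise this piece is empty), since $x^+$ is continuous, $\delta$-uniformly bounded, and $|x^+(s)|$ stays away from $0$ on $[s_1,T]$ by a constant $a_1>0$ independent of $\delta$ (this uses that the limit trajectory satisfies $x(s)>0$ for $s>0$ and the $\delta$-uniform convergence of trajectories on compact time intervals), we get
\begin{equation*}
\int_{s_1}^{t} \left|\lambda_\delta^{(2)}(x(s))\right|\,ds \leq \int_{s_1}^{T} \frac{\delta^2}{a_1^3}\,ds \leq \frac{\delta^2 T}{a_1^3} \leq 1
\end{equation*}
for $\delta$ small enough. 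For the inner region $|s| \leq \min(t, s_1)$, I change variables $u = x(s)$, $du = \dot x(s)\,ds = \xi^+(s)\,ds$, using $\xi^+(s) \geq c_1 > 0$ there, so that $ds = du/\xi^+(s) \leq du/c_1$. This gives
\begin{equation*}
\int_{0}^{\min(t,s_1)} \frac{\delta^2\,ds}{(x(s)^2+\delta^2)^{3/2}} \leq \frac{1}{c_1}\int_{\R} \frac{\delta^2\,du}{(u^2+\delta^2)^{3/2}} = \frac{1}{c_1}\int_{\R}\frac{dv}{(v^2+1)^{3/2}} = \frac{2}{c_1},
\end{equation*}
after the further rescaling $u = \delta v$, which makes the $\delta$-dependence disappear entirely — the integral $\int_\R (v^2+1)^{-3/2}\,dv = 2$ is a finite universal constant. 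Adding the two contributions and taking $\sigma = \max(2/c_1, 1)$ (or any constant dominating the sum), one obtains the bound $\sigma(1+T)$; the linear dependence on $T$ is generous, since really the outer piece is $o(1)$ and the inner piece is $\mathcal{O}(1)$.

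The main obstacle is making the splitting point $s_1$ and the lower bounds $c_1$ (on $\xi^+$ near $0$) and $a_1$ (on $|x^+|$ away from $0$) genuinely uniform in $\delta \in ]0,\delta_0]$. This is where one invokes the $\delta$-uniform estimates and convergence of classical trajectories on compact time intervals recorded in Remark \ref{rmk:traj} and the remark preceding it: the $\delta$-trajectories converge, as $\delta\to 0$, to the well-defined limit trajectory with $\dot\xi = -x/|x|$, uniformly on $[-T,0]$ (and similarly on $[0,T]$), so $\xi^+(0) = \xi_0 > 0$ forces $\xi^+(s)$ to stay above $\xi_0/2$ on a $\delta$-independent interval around $0$, and $x^+(s)$ to stay above a $\delta$-independent positive constant on the complementary compact set. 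Everything else is the elementary change of variables and rescaling above. (This is, of course, exactly Proposition 2.2 of \cite{HJ98}, adapted to the present trajectory; the argument is reproduced here for completeness.)
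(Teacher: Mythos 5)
Your proof is correct, and it follows a genuinely different route from the paper's. The paper's argument Taylor-expands the trajectory, $x(s)=\xi_0 s+\mathcal O(s^2)$, and shows that $\lambda_\delta^{(2)}(x(s))$ equals the model integrand $\delta^2(\xi_0^2s^2+\delta^2)^{-3/2}$ plus a uniformly bounded error, then integrates this explicit expression (the antiderivative $s/\sqrt{s^2+\delta^2}$ is bounded by $1$, contributing the ``$1$'' in $\sigma(1+T)$, while the $\mathcal O(1)$ error contributes the ``$T$''). You instead split the time interval, use monotonicity of $x(\cdot)$ near $s=0$ to change variables $u=x(s)$ with Jacobian bounded below by $c_1=\inf\xi^+>0$, and reduce the inner piece to the universal integral $\int_\R \delta^2(u^2+\delta^2)^{-3/2}\,du=2$. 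The change-of-variables route is arguably cleaner, since it sidesteps the delicate justification that the Taylor error term $\mathcal O\bigl(t^3/(t^2\xi_0^2+\delta^2)\bigr)$ times $\delta^2(\xi_0^2 t^2+\delta^2)^{-3/2}$ is uniformly $\mathcal O(1)$; the paper's route avoids the explicit inner/outer decomposition but is really doing the same splitting implicitly, as its Taylor control is only genuinely quantitative for $|s|$ small.

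Two small points worth tightening. First, you appeal to ``$\delta$-uniform convergence of trajectories on compact time intervals,'' but Remark \ref{rmk:traj} only records a uniform \emph{bound} on $(x^+,\xi^+)$, not convergence. The lower bound $c_1$ on $\xi^+$ near $s=0$ is better obtained directly: since $|\dot\xi^+|=|x^+|/\sqrt{(x^+)^2+\delta^2}\le 1$ uniformly, $\xi^+(s)\ge \xi_0-|s|\ge \xi_0/2$ for $|s|\le s_1:=\xi_0/2$, no limiting argument required. Second, the existence of the outer lower bound $a_1$ with $|x^+(s)|\ge a_1$ on $[s_1,T]$ uniformly in $\delta$ does need some argument (energy conservation plus monotonicity, or a smallness restriction on $T$ so the trajectory does not return to $0$); this restriction is also hidden in the paper's proof, which later invokes ``for $T$ sufficiently small, $C_1|t|\le|x(t)|\le C_2|t|$,'' so you are not losing anything relative to the source — but if you present your argument independently you should make the restriction on $T$ explicit rather than lean on an unstated convergence statement.
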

From now on, we shall take $\delta \in ]0,\delta_0]$.
\begin{proof}[Proof from \cite{HJ98}]
Asymptotic estimates when $t\rightarrow 0$ and $\delta \rightarrow 0$:
 \\
 We write $x(t) = x(0)+t.\dot{x}(0) + \mathcal{O}(t^2) = t.\xi_0 + \mathcal{O}(t^2)$. Then 
$x(t)^2=t^2\xi_0^2+\mathcal{O}(t^3)$. This gives for $\lambda_\delta^{(2)}(x(t))$:
\begin{align}
 \notag \lambda_\delta^{(2)}(x(t)) & = \dfrac{\delta^2}{\left(x(t)^2+\delta^2\right)^{3/2}}\\
 \notag & = \dfrac{\delta^2}{\left(\xi_0^2t^2+\delta^2\right)^{3/2}} \left(1+\mathcal{O}\left(\dfrac{t^3}{t^2\xi_0^2+\delta^2}\right) \right)\\
 \label{asymptotic}
 & = \dfrac{\delta^2}{\left(\xi_0^2t^2+\delta^2\right)^{3/2}}+ \mathcal{O}(1).
\end{align}
Then we use \eqref{asymptotic} to write
\begin{align*}
 \int_{0}^t \left|\lambda_\delta^{(2)}(x(s))\right| ds & \leq \sigma \int_{0}^t \dfrac{\delta^2}{(s^2+\delta^2)^{3/2}}+1 \; ds \\
 & \leq \sigma \int_{0}^{T} \dfrac{\delta^2}{(s^2+\delta^2)^{3/2}}+1 \; ds\\
 & \leq \sigma \left[ \dfrac{s}{(s^2+\delta^2)^{1/2}} +s\right]^{T}_0 \leq \sigma(1+T).
\end{align*}
\end{proof}
\noindent
We now introduce $\mu_\delta, \nu_\delta$, solutions to 
 \begin{equation}
 \label{eq:munu}
     \begin{cases}
                \ddot{\mu_\delta}+\lambda_\delta^{(2)}(x(t))\mu_\delta=0; \quad  \mu_\delta(0)=0; & \dot{\mu_\delta}(0)=1,\\
                \ddot{\nu_\delta}+\lambda_\delta^{(2)}(x(t))\nu_\delta=0; \quad  \nu_\delta(0)=1; & \dot{\nu_\delta}(0)=0.
     \end{cases}
\end{equation}
Let us notice that for fixed $\delta>0$, there exists a unique couple of solutions $(\mu_\delta,\nu_\delta)$ satisfying \eqref{eq:munu} 
on some maximal interval of existence $[0,T_\delta]$.
\begin{proposition}
  \label{munu}
 We consider $\mu_\delta,\nu_\delta$ solutions to \eqref{eq:munu}. 
 There exist $T_0>0$ such that $T_\delta \leq T_0,$ independently of $\delta$,
 and $C>0$ such that 
 $$\forall \delta \in ]0,\delta_0], \forall t \in [0,T_0], \; |\mu_\delta(t)|+|\nu_\delta(t)|+ |\dot{\nu}_\delta(t)|+ \left|\dfrac{1}{\nu_\delta(t)}\right|
 \leq C.$$
 Moreover, $\displaystyle t\mapsto\dfrac{\mu_\delta(t)}{\nu_\delta(t)}$ is an increasing function on $[0,T_0]$.
 \end{proposition}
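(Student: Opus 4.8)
The plan is to recast the second-order system~\eqref{eq:munu} as a system of Volterra integral equations by variation of parameters against the free operator $\d_t^2$, and then to run a Gr\"onwall argument with the weight $\big|\lambda_\delta^{(2)}(x(s))\big|$, whose time integral is bounded \emph{uniformly in $\delta$} by Lemma~\ref{key}. On the interval of existence one has

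\begin{align*}
 \mu_\delta(t) &= t-\int_0^t(t-s)\,\lambda_\delta^{(2)}(x(s))\,\mu_\delta(s)\,ds, &
 \dot\mu_\delta(t) &= 1-\int_0^t\lambda_\delta^{(2)}(x(s))\,\mu_\delta(s)\,ds,\\
 \nu_\delta(t) &= 1-\int_0^t(t-s)\,\lambda_\delta^{(2)}(x(s))\,\nu_\delta(s)\,ds, &
 \dot\nu_\delta(t) &= -\int_0^t\lambda_\delta^{(2)}(x(s))\,\nu_\delta(s)\,ds.
\end{align*}

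The pointwise blow-up $\lambda_\delta^{(2)}(x(0))=1/\delta$ is harmless: only $\int_0^t\big|\lambda_\delta^{(2)}(x(s))\big|\,ds\le\sigma(1+T)$ ever enters the estimates.

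\emph{Uniform bounds.} For $t\in[0,T_0]$, with $T_0$ to be fixed, I use $0\le t-s\le T_0$ in the first and third identities; the integral form of Gr\"onwall's lemma then yields
$$|\mu_\delta(t)|\le T_0\,e^{T_0\sigma(1+T_0)},\qquad |\nu_\delta(t)|\le e^{T_0\sigma(1+T_0)},\qquad t\in[0,T_0],$$
and inserting the second bound into the formula for $\dot\nu_\delta$ gives $|\dot\nu_\delta(t)|\le\sigma(1+T_0)\,e^{T_0\sigma(1+T_0)}$ — all constants independent of $\delta$. For the lower bound, the third identity gives $|\nu_\delta(t)-1|\le T_0\,\sigma(1+T_0)\,e^{T_0\sigma(1+T_0)}=:h(T_0)$; since $h$ is continuous with $h(0)=0$, I fix $T_0>0$ once and for all so that $h(T_0)\le 1/2$, whence $\nu_\delta(t)\ge 1/2$ on $[0,T_0]$ and $|1/\nu_\delta(t)|\le 2$. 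In particular $\nu_\delta$ does not vanish on $[0,T_0]$, so this interval lies in the domain of existence uniformly in $\delta$ and the Lens transform of Section~\ref{Proof thm:profile} is valid there.

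\emph{Monotonicity of $\mu_\delta/\nu_\delta$.} Since~\eqref{eq:munu} has no first-order term, the Wronskian $w(t):=\dot\mu_\delta(t)\nu_\delta(t)-\mu_\delta(t)\dot\nu_\delta(t)$ obeys $\dot w=\ddot\mu_\delta\nu_\delta-\mu_\delta\ddot\nu_\delta=-\lambda_\delta^{(2)}(x(t))\big(\mu_\delta\nu_\delta-\mu_\delta\nu_\delta\big)=0$, so $w\equiv w(0)=1$; dividing by $\nu_\delta^2>0$ gives $\frac{d}{dt}\big(\mu_\delta/\nu_\delta\big)=1/\nu_\delta^2>0$ on $[0,T_0]$, which is the asserted strict monotonicity.

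\emph{Main point.} There is no substantial obstacle here: global existence for fixed $\delta>0$ is automatic (linear ODE with $\lambda_\delta^{(2)}(x(\cdot))$ continuous, since $x(\cdot)$ is the smooth $\delta$-dependent trajectory of Remark~\ref{rmk:traj}), and the entire content of Proposition~\ref{munu} is the \emph{$\delta$-uniformity} of the constants, which is exactly what Lemma~\ref{key} delivers. The only mild subtlety is the apparent circularity in choosing $T_0$ — the Gr\"onwall constants themselves depend on $T_0$ — but all quantities are continuous in $T_0$ and $h(T_0)$ vanishes at $T_0=0$, so a single smallness choice closes the argument; everything else is bookkeeping.
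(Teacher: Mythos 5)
Your proposal is correct and follows essentially the same route as the paper: write the solutions as integral equations, control the convolution kernel via the uniform-in-$\delta$ bound $\int_0^t|\lambda_\delta^{(2)}(x(s))|\,ds\le\sigma(1+T)$ from Lemma~\ref{key}, shrink $T_0$ once to keep $\nu_\delta$ bounded away from $0$, and deduce monotonicity of $\mu_\delta/\nu_\delta$ from constancy of the Wronskian. The only cosmetic difference is that the paper closes the estimates by a direct $\sup$-norm absorption ($\sigma(1+T_0)\,|t|\le 1/2$) on the first-order system, while you apply Gr\"onwall to the second-order Volterra form with kernel $(t-s)$; both yield $\delta$-uniform constants for the same reason.
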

In the rest of this paper, in order to simplify the notations, we will not write the dependence on $\delta$ of these functions.
Lemma \ref{key} is a crucial tool to obtain the preceding proposition.
 \begin{proof}
Let us first study $\nu$ and $\dot{\nu}$.
\\
For $ 0\leq t \leq T_0$, we write:
$$\dfrac{d}{dt} \left( \begin{array}{c}
                        \nu \\
                        \dot{\nu}
                       \end{array}
\right) = \begin{pmatrix}
           0 & 1 \\
           - \lambda_\delta^{(2)}(x(t)) & 0
          \end{pmatrix} \left( \begin{array}{c}
                        \nu \\
                        \dot{\nu}
                       \end{array}\right) \quad ; \quad  \left( \begin{array}{c}
                        \nu(0) \\
                        \dot{\nu}(0)
                       \end{array}\right) =  \left( \begin{array}{c}
                        1 \\
                        0
                       \end{array}\right).
 $$
We deduce

$$\left( \begin{array}{c}
                        \nu(t) \\
                        \dot{\nu}(t)
                       \end{array}
\right)=\left( \begin{array}{c}
                        1 \\
                        0
                       \end{array}\right) + \int_{0}^t \left( \begin{array}{c}
                        \dot{\nu}(s) \\
                        - \lambda_\delta^{(2)}(x(s))\nu(s)
                       \end{array}\right) ds. $$
\noindent                       
We introduce the following notation:
$$|A|_\infty = \sup_{s \in [0,T_0]} |A(s)|.$$
\noindent
We have
 \begin{align*}
  \left|\int_{0}^t \dot{\nu}(s) ds \right| \leq |\dot{\nu}|_\infty |t|, \\
 \left|\int_{0}^t - \lambda_\delta^{(2)}(x(s)) \nu(s) ds \right| &\leq |\nu|_\infty \int_{0}^{T_0} |\lambda_\delta^{(2)}(x(s))| ds 
 \leq \sigma(1+T_0) |\nu|_\infty. 
 \end{align*}
It gives
\begin{align*}
|\nu|_\infty & \leq 1+ |t| \; |\dot{\nu}|_\infty\\
|\dot{\nu}|_\infty & \leq \sigma(1+T_0) |\nu|_\infty, 
\end{align*}
and then
\begin{align*}
|\nu|_\infty & \leq 1+ \sigma(1+T_0) \;|t| \; |\nu|_\infty \\
 |\dot{\nu}|_\infty & \leq \sigma(1+T_0) |\nu|_\infty. 
\end{align*}
We choose $T_0$ small enough at the beginning, such that for all $ 0\leq t \leq T_0$, we have
\begin{equation*}
 \sigma(1+T_0)\; |t| \leq \dfrac{1}{2}, \quad \textrm{where } T_0 \leq \dfrac{\sqrt{1+2/\sigma}-1}{2},
\end{equation*}
then
\begin{equation*}
|\nu|_\infty  \leq C_1 \quad ; \quad |\dot{\nu}|_\infty  \leq C_2,
\end{equation*}
where $C_1, C_2$ are independent of $\delta$.
\\[2mm]
We now study $\left(\nu(t)\right)^{-1}$. Since $\nu(0)=1$ we have
$$\left|\nu(t)-1\right|_\infty \leq |t| \left|\dot{\nu}(t) \right|_\infty. $$
We deduce $1- C_2 |t| \leq \left|\nu(t)\right|_\infty$, which gives for $t$ small enough 
$\left|\nu(t)\right|_\infty \leq \dfrac{1}{2}$ and so $\left(\nu(t)\right)^{-1} \leq 2$.
Using the same arguments, we finally compute for $\mu$:
\begin{align*}
|\mu|_\infty & \leq |t|+\sigma|t|(1+T_0)|\mu|_\infty \\
 |\dot{\mu}|_\infty & \leq 1+\sigma(1+T_0)|\mu|_\infty. 
\end{align*}
And so, choosing $T_0$ small enough, as we have already done, we find
\begin{equation*}
 |\mu|_\infty \leq 2 |t| \leq 2T_0\leq C_4,
\end{equation*}
and taking $C= \max \left\lbrace C_1,C_2,C_3,C_4 \right\rbrace $, the desired estimates are proved.
\\[2mm]
It is then easy to see that $t \mapsto \dfrac{\mu(t)}{\nu(t)}$ is an increasing function on our time interval: 
in fact $$\dfrac{d}{dt} \left(\dot{\mu}\nu-\dot{\nu}\mu\right) = \ddot{\mu}\nu-\ddot{\nu}\mu=0, $$
and so $\dot{\mu}\nu-\dot{\nu}\mu = 1$ for all $t \in [0,T_0]$. We deduce that
$$\left(\dfrac{\mu}{\nu}\right)' = \dfrac{\dot{\mu}\nu-\dot{\nu}\mu}{\nu^2} = \dfrac{1}{\nu^2},$$ is nonnegative on $[0,T_0]$.
\end{proof}
\subsection{Lens transform and Free Strichartz estimates}
In order to remove the potential, we now introduce the Lens transform, from \cite{Ca-p}, which relies upon the functions $\mu,\nu$:
\begin{proposition}
 \label{Lens}
Let $a\in \Sch(\R)$ and $T_0>0$, $\mu,\nu$ as in Proposition \ref{munu}. We set $$s=\dfrac{\mu(t)}{\nu(t)}, \quad t\in [0,T_0],$$
and $$\mathcal{H}(s)=\mathcal{H}_\delta(s)=\nu\left(\left(\dfrac{\mu}{\nu}\right)^{-1}(s)\right)\kappa, \quad 
s\in \left[0,\dfrac{\mu(T_0)}{\nu(T_0)}\right]. $$
We consider $v$ the solution to:
 \begin{equation}
  \label{v}
  i\d_s v+ \dfrac{1}{2} \d_x^2 v = \mathcal{H}(s) |v|^2 v \quad ; \quad
 v(0,x)=u_\delta(-T_0,x).
 \end{equation}
Set
 $$u_\delta(t-T_0,x)= \dfrac{1}{\sqrt{\nu(t)}} v\left( \dfrac{\mu(t)}{\nu(t)}, \dfrac{x}{\nu(t)}\right) 
e^{\frac{i\dot{\nu}(t)}{\nu(t)}.\frac{x^2}{2}}, \quad \forall t\in[0,T_0],$$ then $u_\delta$ is the solution to \eqref{profile}:
\begin{equation*}
 i\d_t u+ \dfrac{1}{2} \d_x^2 u - \dfrac{1}{2} \lambda_\delta^{(2)}(x(t))x^2u = \kappa|u|^2 u \quad ; \quad
 u(-T,x)=a(x).
\end{equation*}
\end{proposition}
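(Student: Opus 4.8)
The statement is an instance of the Lens (or pseudo-conformal) transform associated with the time-dependent harmonic oscillator $-\tfrac12\partial_x^2+\tfrac12\lambda_\delta^{(2)}(x(t))\,x^2$, and the plan is to prove it by following \cite{Ca-p}: here the two solutions $\mu,\nu$ of \eqref{eq:munu} play exactly the role that $\sin$ and $\cos$ play for the standard harmonic potential, the change of time being $s=\mu(t)/\nu(t)$ in place of $s=\tan t$. Since the formula for $u_\delta(t-T_0,\cdot)$ only concerns $t\in[0,T_0]$, i.e. times in $[-T_0,0]$, the claim is to be read as the identification, on $[-T_0,0]$, of $u_\delta$ with the transformed function; the complementary interval $[-T,-T_0]$ is already handled by the earlier statements on $u_\delta$, and the two pieces match at $-T_0$ precisely because the datum for \eqref{v} is $v(0,\cdot)=u_\delta(-T_0,\cdot)$. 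Thus there are two things to check: that the transformed function solves the equation in \eqref{profile}, and that it takes the value $u_\delta(-T_0,\cdot)$ at $t=-T_0$.

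First I would verify that every object is well defined on the relevant interval. By Proposition \ref{munu}, on $[0,T_0]$ the function $t\mapsto \mu(t)/\nu(t)$ is strictly increasing (its derivative is $1/\nu^2>0$) and $\nu,\dot\nu,1/\nu$ are bounded; hence $s=\mu(t)/\nu(t)$ is a genuine change of variables onto $[0,\mu(T_0)/\nu(T_0)]$, its inverse $(\mu/\nu)^{-1}$ is smooth, $\mathcal{H}(s)=\kappa\,\nu\big((\mu/\nu)^{-1}(s)\big)$ is smooth and bounded, and the map $v\mapsto u$ involves only the bounded smooth factors $\nu^{-1/2}$, $x/\nu$ and $e^{i\dot\nu x^2/(2\nu)}$. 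Since $\mathcal{H}$ is bounded and the nonlinearity is $L^2$-subcritical, equation \eqref{v} has, by the global existence result quoted above (from \cite{Ca-p}), a unique solution $v\in\mathcal C(\R,L^2)\cap L^8_{loc}(\R,L^4)$ with conserved mass; since the datum $u_\delta(-T_0,\cdot)$ lies in $\Sch(\R)$ and $\mathcal{H}$ is smooth, persistence of regularity gives $v\in\mathcal C(\R,H^k)$ for all $k$, so $v$ and the transformed function are smooth enough for the pointwise computation below.

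The core of the argument is then a direct substitution. Writing $u(t,x)=\nu(t)^{-1/2}\,v\big(\mu(t)/\nu(t),\,x/\nu(t)\big)\,e^{i\dot\nu(t)x^2/(2\nu(t))}$, I would compute $\partial_t u$ and $\partial_x^2 u$ by the chain rule, using the two identities $\tfrac{d}{dt}\big(\mu/\nu\big)=\dfrac{\dot\mu\nu-\dot\nu\mu}{\nu^2}=\dfrac1{\nu^2}$ (the Wronskian $\dot\mu\nu-\dot\nu\mu\equiv1$ obtained in the proof of Proposition \ref{munu}) and $\tfrac{d}{dt}\big(\dot\nu/\nu\big)=\dfrac{\ddot\nu}{\nu}-\big(\dot\nu/\nu\big)^2=-\lambda_\delta^{(2)}(x(t))-\big(\dot\nu/\nu\big)^2$, the last equality being exactly \eqref{eq:munu}. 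Plugging these into $i\partial_t u+\tfrac12\partial_x^2u$ and collecting terms, the contributions proportional to $x^2$ collapse to $\tfrac12\lambda_\delta^{(2)}(x(t))x^2\,u$ thanks to the expression for $\tfrac{d}{dt}(\dot\nu/\nu)$, the first-order terms in $x$ cancel, the terms carrying $\partial_s v$ and $\partial_y^2v$ reproduce---after factoring out $\nu^{-1/2}e^{i\dot\nu x^2/(2\nu)}$ and using $ds/dt=1/\nu^2$---the left-hand side of \eqref{v} evaluated at $(\mu/\nu,x/\nu)$, and the nonlinear term $\mathcal{H}(s)|v|^2v$ turns into $\kappa|u|^2u$: indeed $|u|^2u=\nu^{-3/2}|v|^2v$, the change of time and the prefactors contribute the remaining powers of $\nu$, and since $d=1$ and the nonlinearity is cubic the exponent $2-d\sigma$ equals $1$, so $\mathcal{H}=\kappa\,\nu=\kappa\,\nu^{2-d\sigma}$ is precisely the power needed for these factors to combine to the constant $\kappa$ (this is the reason for the extra factor $\nu$ in the definition of $\mathcal{H}$). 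Hence $u$ solves the equation in \eqref{profile}. Finally, at $t=0$ we have $\mu(0)=0$, $\nu(0)=1$, $\dot\nu(0)=0$, so $s=0$, the Gaussian phase is $1$ and the scaling is trivial; therefore $u(0,x)=v(0,x)=u_\delta(-T_0,x)$, i.e. the transformed function and $u_\delta$ agree at $t=-T_0$, and by uniqueness for \eqref{profile} they coincide on $[-T_0,0]$.

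The only delicate point is the bookkeeping in the substitution: one must track the three sources of $x$-dependence---the argument $x/\nu$ of $v$, the quadratic phase, and the prefactor $\nu^{-1/2}$---and check that all the parasitic first- and second-order terms in $x$ cancel using the two ODEs in \eqref{eq:munu} and the Wronskian relation. Everything else is routine, and the boundedness of $\nu$, $1/\nu$ and $\dot\nu$ supplied by Proposition \ref{munu} ensures that no factor in the transform degenerates on $[0,T_0]$.
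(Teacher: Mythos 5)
Your proof is correct and is precisely the direct verification that the paper delegates to \cite{Ca-p} (the paper states Proposition~\ref{Lens} without a proof, and the following remark explicitly attributes it, in a more general form, to that reference). The three ingredients you isolate are exactly the ones needed: the Wronskian identity $\dot\mu\nu-\dot\nu\mu\equiv 1$ gives $ds/dt=1/\nu^2$; the ODE $\ddot\nu+\lambda_\delta^{(2)}(x(t))\nu=0$ gives $\tfrac{d}{dt}(\dot\nu/\nu)=-\lambda_\delta^{(2)}-(\dot\nu/\nu)^2$, which is what makes the $x^2$-terms collapse to $\tfrac12\lambda_\delta^{(2)}(x(t))x^2 u$; and the first-order-in-$x$ terms and the zeroth-order $v$-terms cancel pairwise. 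Your bookkeeping of the nonlinear factor is also right: matching $\nu^{-1/2-2}\mathcal{H}(s)|v|^2v$ against $\kappa\nu^{-3/2}|v|^2v$ forces $\mathcal{H}=\kappa\nu$, consistent with the general rule $\mathcal{H}=\kappa\nu^{2-d\sigma}$ with $d=\sigma=1$. Together with $\mu(0)=0$, $\nu(0)=1$, $\dot\nu(0)=0$ (so $s=0$, $y=x$, trivial phase at $t=0$) and uniqueness for \eqref{profile}, this identifies the transformed function with $u_\delta$ on $[-T_0,0]$, which is exactly the asserted content.
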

\begin{remark}
 In \cite{Ca-p}, the statement is given in a more general case, for higher dimensions, with potentials
 of the form $\left\langle \Omega(t)x|x\right\rangle $, with $\Omega \in \mathcal{C}(\R,\R)$ and isotropic. 
 More general nonlinearities, with time-dependent coefficients, are also allowed under specific conditions.
\end{remark}
\begin{remark}
\label{H}
 Using Proposition \ref{munu}, it is easy to see that $\dfrac{\mu(T_0)}{\nu(T_0)}$ is bounded independently of $\delta$ and so is 
 $\mathcal{H}(s)$ on $\left[0,\dfrac{\mu(T_0)}{\nu(T_0)}\right]$.
\end{remark}
 The conservation of mass of the profile $u$ and Proposition \ref{munu} allow us to infer existence of the solution $v$ 
 for fixed $\delta$, on the bounded time interval $\left[0,\dfrac{\mu(T_0)}{\nu(T_0)}\right]$. 
 The next step is to study the derivatives and momenta of $v$, solution to \eqref{v}. 
 We introduce \textit{free Strichartz estimates}, from \cite{GV95}, \cite{Seg76} and \cite{Strichartz}.
\begin{definition}[Admissible pairs]
\label{def:adm}
 A pair $(p,q)$ is {\bf admissible} if $2\leq q
  \leq \infty$  
  and 
$$\frac{2}{p}= \left( \frac{1}{2}-\frac{1}{q}\right).$$
\end{definition}
We now introduce the following notation to state Strichartz estimates.
\begin{notation}
 For $1\leq p \leq +\infty$, we denote by $p'$, the dual exponent:
$$\dfrac{1}{p}+\dfrac{1}{p'}=1.$$
\end{notation}
\begin{theorem}
 \label{Strichartz} 
Let $(p,q)$, $(p_1, q_1)$, $(p_2, q_2)$ be admissible pairs and let $I$ be a finite time interval.
Let us introduce $$u(t)=e^{-i\frac{t}{2}\d_x^2} u_0 \qquad \textrm{and} \qquad
 v(t)= \int_{I\cap \left\lbrace s \leq t \right\rbrace}  e^{i \frac{t-\tau}{2}\d_x^2}f(\tau) d\tau.$$
\noindent
$-$ There exists $C=C(q)$ such that for all $u_0 \in L^2(\R)$, we have for all $s \in I$
\begin{equation}
\label{Strichartz1}
\|u \|_{L^p (I, L^q(\R))} \leq C \|u(s)\|_{L^2(\R)}=C \|u_0\|_{L^2(\R)}.
\end{equation}
$-$ There exists $C = C(q_1, q_2)$ such that for all 
\\
$f \in L^{p'_2} (I, L^{q'_2}(\R))$ we have
\begin{equation}
\label{Strichartz2}
\|v \|_{L^{p_1} (I, L^{q_1}(\R))} \leq C \|f \|_{L^{p'_2} (I, L^{q'_2}(\R))}.
\end{equation}
\end{theorem}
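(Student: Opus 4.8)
\emph{Overview.} This is the classical set of Strichartz estimates for the one-dimensional free Schr\"odinger group, so the plan is to recall the standard argument: a dispersive estimate, a $TT^*$ duality, and the Hardy--Littlewood--Sobolev inequality, the admissibility relation $\frac{2}{p}=\frac{1}{2}-\frac{1}{q}$ being exactly what balances exponents at the last step. First I would use the explicit convolution kernel $(2\pi it)^{-1/2}e^{i|x-y|^{2}/(2t)}$ of $e^{-i\frac{t}{2}\d_x^2}$ to obtain the bound $\|e^{-i\frac{t}{2}\d_x^2}u_0\|_{L^\infty(\R)}\lesssim|t|^{-1/2}\|u_0\|_{L^1(\R)}$ for $t\neq0$, and interpolate (Riesz--Thorin) against the $L^2$ isometry $\|e^{-i\frac{t}{2}\d_x^2}u_0\|_{L^2}=\|u_0\|_{L^2}$ to get, for all $q\in[2,\infty]$,
\begin{equation*}
 \|e^{-i\frac{t}{2}\d_x^2}u_0\|_{L^q(\R)}\ \lesssim\ |t|^{-(\frac{1}{2}-\frac{1}{q})}\,\|u_0\|_{L^{q'}(\R)},\qquad t\neq0.
\end{equation*}

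\emph{Homogeneous estimate \eqref{Strichartz1}.} Fixing an admissible pair $(p,q)$ and regarding $Tu_0=e^{-i\frac{t}{2}\d_x^2}u_0$ as a map $L^2(\R)\to L^p(I,L^q(\R))$, I would note that \eqref{Strichartz1} is equivalent, by duality, to the boundedness of $TT^*f(t)=\int_I e^{-i\frac{t-\tau}{2}\d_x^2}f(\tau)\,d\tau$ from $L^{p'}(I,L^{q'}(\R))$ into $L^p(I,L^q(\R))$. Applying the displayed dispersive bound inside the integral and Minkowski's inequality gives
\begin{equation*}
 \|TT^*f(t)\|_{L^q(\R)}\ \lesssim\ \int_I |t-\tau|^{-(\frac{1}{2}-\frac{1}{q})}\,\|f(\tau)\|_{L^{q'}(\R)}\,d\tau,
\end{equation*}
after which the one-dimensional Hardy--Littlewood--Sobolev inequality --- with fractional order $\alpha=\frac{1}{2}-\frac{1}{q}\in(0,1)$ and input exponent $p'$, the identity $\frac{1}{p'}=\frac{1}{p}+1-\alpha$ being precisely the admissibility condition --- closes the estimate with a constant depending only on $q$ and independent of the finite interval $I$. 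The borderline case $q=2$, $p=\infty$ needs nothing, since \eqref{Strichartz1} then reduces to mass conservation.

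\emph{Inhomogeneous estimate \eqref{Strichartz2}.} I would first handle the \emph{non-retarded} operator $f\mapsto\int_I e^{i\frac{t-\tau}{2}\d_x^2}f(\tau)\,d\tau$, which factors through $L^2(\R)$ as $T_1T_2^*$ with $T_j$ the homogeneous operator attached to $(p_j,q_j)$; composing the $L^2\to L^{p_1}(I,L^{q_1})$ bound for $T_1$ with the dual $L^{p_2'}(I,L^{q_2'})\to L^2$ bound for $T_2^*$ yields \eqref{Strichartz2} without the truncation. To insert the cutoff $\{\tau\le t\}$ that actually appears in the statement, I would invoke the Christ--Kiselev lemma, legitimate because $p_1>p_2'$ for any two admissible pairs in dimension one (indeed $p_1\ge4>4/3\ge p_2'$), the residual case $p_1=\infty$ being immediate by duality directly from the homogeneous bound.

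\emph{Main difficulty.} The only places where something could go wrong --- the endpoint of the $TT^*$/Hardy--Littlewood--Sobolev step and the Christ--Kiselev bookkeeping --- are in fact benign in one dimension, where there is no true endpoint: the extreme admissible pair $(4,\infty)$ still has $\alpha=\frac{1}{2}\in(0,1)$ and $p',p\in(1,\infty)$, so both Hardy--Littlewood--Sobolev and Christ--Kiselev apply unmodified and no Keel--Tao endpoint argument is needed. Carrying the whole scheme out on a finite interval $I$ keeps every time integral finite and every constant $I$-independent, which is exactly the form in which the estimates are used in the rest of the paper.
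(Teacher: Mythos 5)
Your argument is correct; note, however, that the paper does not prove this theorem at all: it is recorded as background and attributed directly to \cite{GV95}, \cite{Seg76} and \cite{Strichartz}, so there is no ``paper's proof'' to compare against. What you have written is the standard derivation one would find in those references or any modern text: the $|t|^{-1/2}$ dispersive bound from the explicit kernel, Riesz--Thorin interpolation with the $L^2$ isometry to get the $L^{q'}\to L^q$ decay with exponent $\tfrac12-\tfrac1q$, the $TT^*$ reduction, Hardy--Littlewood--Sobolev in time (the admissibility relation $\tfrac2p=\tfrac12-\tfrac1q$ is exactly the scaling that makes the convolution exponent work out), factoring the non-retarded inhomogeneous operator through $L^2$ as $T_1T_2^*$, and Christ--Kiselev to insert the time truncation. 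Your remark that the 1D case avoids the Keel--Tao endpoint is accurate (the extremal pair is $(4,\infty)$, so $p\in[4,\infty]$, $p'\in[1,4/3]$, hence $p_1>p_2'$ always and HLS stays away from its endpoints); the only tiny point worth flagging is that the homogeneous case $q=2$, $p=\infty$ must indeed be set aside before invoking HLS, which you do. The proof is complete and matches what the cited sources establish.
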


\subsection{Control of $v$}
Using the previous results, we can study the derivatives and momenta of $v$, given by \eqref{v}.
\begin{proposition}
 \label{growthV}
 Let $a \in \Sch(\R)$ and $T_0>0$ the fixed time given by Proposition~\ref{munu}. 
 Let $v$ be the solution to the Cauchy problem \eqref{v}, with 
 $$\mathcal{H}(s) = \nu \left(\left(\dfrac{\mu}{\nu}\right)^{-1} (s)\right)\kappa.$$
Then, for all $k\in \N$, there exists $C_k>0$ such that for all admissible pairs $(p,q)$ and for all $\delta \in ]0,\delta_0]$
\begin{equation}
\label{ControlV}
\forall \alpha, \beta \in \N, \; \alpha+\beta = k, \quad 
\|x^\alpha \d_x^\beta v\|_{L^p\left(\left[0,\frac{\mu(T_0)}{\nu(T_0)}\right],L^q\right)} \leq C_k.
\end{equation}
\end{proposition}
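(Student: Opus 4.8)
The plan is to prove \eqref{ControlV} by induction on $k$, exploiting that the Lens transform has reduced the problem to the free Schrödinger operator $\frac12\d_x^2$, for which the operators $x$, $\d_x$, and the Galilei boost $J(s) = x + is\d_x$ commute nicely with the free flow, and to close the estimates by a bootstrap argument using the Strichartz inequalities of Theorem \ref{Strichartz} together with the fact that $\mathcal{H}(s)$ is bounded on the (bounded, $\delta$-independent) interval $[0,\mu(T_0)/\nu(T_0)]$ (Remark \ref{H}). Throughout, write $J=\mu(T_0)/\nu(T_0)$ and $I_0=[0,J]$; the key point is that all constants will depend only on $\|\mathcal{H}\|_{L^\infty(I_0)}$, on $|I_0|$, and on finitely many seminorms of $a$, hence are uniform in $\delta\in\,]0,\delta_0]$.

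\emph{Base case $k=0$.} This is exactly the mass conservation for \eqref{v} combined with the homogeneous Strichartz estimate \eqref{Strichartz1}: applying Duhamel's formula $v(s) = e^{i\frac{s}{2}\d_x^2}v(0) - i\int_0^s e^{i\frac{s-\tau}{2}\d_x^2}\mathcal{H}(\tau)|v|^2v(\tau)\,d\tau$, estimating the nonlinear term with \eqref{Strichartz2} in the $L^2$-subcritical pair (the admissible pair $(8,4)$, after Hölder $\||v|^2v\|_{L^{q'}}\lesssim \|v\|_{L^4}^3$ and a Hölder in time using $|I_0|<\infty$), and absorbing the resulting term into the left-hand side after subdividing $I_0$ into finitely many subintervals on which the $L^8L^4$ norm of $v$ is small (possible since that norm is finite and $|I_0|$ is bounded). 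The number of subintervals and all constants are controlled by $\|\mathcal{H}\|_{L^\infty}$ and $\|a\|_{L^2}$, hence uniform in $\delta$.

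\emph{Inductive step.} Suppose \eqref{ControlV} holds for all orders $\le k-1$ and all admissible pairs. For a multi-index of total order $k$, apply the operators $x$ and/or $\d_x$ — or more efficiently the boost $J(s)=x+is\d_x$ and $\d_x$, which are the operators that commute with $i\d_s+\frac12\d_x^2$ — to \eqref{v}. Each such operator hitting the nonlinearity $\mathcal{H}(s)|v|^2v$ produces, by the Leibniz rule, one term with the operator landing on a single factor of $v$ (the ``top-order'' term, of the same form $\mathcal{H}(s)\cdot(\text{quadratic in }v, \bar v)\cdot(\text{order-}k\text{ derivative of }v)$) plus finitely many ``lower-order'' terms in which the total number of derivatives is distributed so that each factor carries strictly fewer than $k$ derivatives. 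The lower-order terms are estimated directly by \eqref{Strichartz2}, Hölder in space and time, and the induction hypothesis (here one uses that $L^\infty$-in-space bounds on lower-order quantities follow from $L^pL^q$ bounds via Gagliardo–Nirenberg / Sobolev embedding in one dimension, again at orders $<k$). The top-order term is handled exactly as in the base case: estimate it by \eqref{Strichartz2}, pull out $\|v\|_{L^8L^4}^2$ by Hölder, and absorb it into the left-hand side after subdividing $I_0$ into the same finitely many subintervals; crucially $J(s)v = e^{i\frac{s}{2}\d_x^2}(x\,v(0))$ has $L^pL^q$ norm controlled by $\|x\,a\|_{L^2}$ via \eqref{Strichartz1}, so the homogeneous part is uniformly bounded, and one converts the $J(s)$-bounds back to $x^\alpha\d_x^\beta$-bounds using $|s|\le J$ bounded and the already-established lower-order derivative estimates. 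This closes the induction.

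\emph{Main obstacle.} The delicate point is the uniformity in $\delta$: a priori the profile $u_\delta$ sees the singular potential $\lambda_\delta^{(2)}(x(t))=\mathcal O(\delta^{-1})$ near $t=0$, so one must be sure that nothing in the Lens-transformed picture reintroduces $\delta$-dependence. This is guaranteed precisely by Proposition \ref{munu} and Lemma \ref{key}: the change of variables $s=\mu(t)/\nu(t)$ is a diffeomorphism onto a $\delta$-independent bounded interval, the Jacobian factors $\nu(t), \dot\nu(t), 1/\nu(t)$ are bounded uniformly in $\delta$, and hence $\mathcal{H}(s)=\kappa\,\nu((\mu/\nu)^{-1}(s))$ is bounded on $I_0$ uniformly in $\delta$ (Remark \ref{H}). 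Once this is in hand, the entire Strichartz bootstrap is performed for the \emph{free} Schrödinger equation with a bounded time-dependent coefficient in front of the cubic term, and no trace of $\delta$ remains; the constants $C_k$ depend only on $k$, $\|\mathcal H\|_{L^\infty(I_0)}$, $|I_0|$ and the Schwartz seminorms of $a$. A secondary technical nuisance is bookkeeping the combinatorics of the Leibniz expansion and checking that every lower-order factor can indeed be placed in an $L^\infty_xL^p_t$ or $L^q_xL^p_t$ space to which the induction hypothesis applies — routine in dimension one but needs to be written with care.
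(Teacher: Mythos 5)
Your proposal is correct and its core engine -- induction on $k$, Duhamel plus free Strichartz estimates, subdivision of the (uniformly bounded) interval $\left[0,\mu(T_0)/\nu(T_0)\right]$ into finitely many pieces whose number is controlled by $\|\mathcal H\|_{L^\infty}$ and $\|a\|_{L^2}$, and the observation that all $\delta$-dependence has been absorbed into $\mathcal H$ (bounded uniformly by Proposition~\ref{munu} and Remark~\ref{H}) -- is exactly the paper's. The one genuine stylistic divergence is in the inductive step: where the paper applies $\d_x^k$ and $x^k$ directly to \eqref{v}, estimates the Leibniz expansion of $\d_x^k(|v|^2v)$ by triple H\"older in $L^4$ (no $L^\infty$ bounds or Sobolev embedding needed), and handles the explicit commutator $\left[\d_x^2,x^k\right]v = C_1x^{k-1}\d_xv + C_2x^{k-2}v$ by hand, you instead promote the Galilei boost $J(s)=x+is\d_x$ to the primary vector field. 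This is a standard and entirely legitimate alternative: $J(s)$ commutes with $i\d_s+\tfrac12\d_x^2$ and, being conjugate to $is\d_x$ by the quadratic phase $e^{ix^2/(2s)}$, respects the gauge-invariant cubic nonlinearity $\mathcal H(s)|v|^2v$, so it behaves like a derivative in the Leibniz expansion and the commutator disappears by design. One then recovers the $x^\alpha\d_x^\beta$ bounds from $J(s)^\alpha\d_x^\beta$ bounds using $x=J(s)-is\d_x$ and $|s|\lesssim 1$, as you note. What the paper's route buys is that it works verbatim at $s=0$ and needs no conversion step; what your route buys is that the commutator never appears. Both rely in the end on the same Triebel-type equivalence of weighted norms to avoid tracking genuinely mixed $x^\alpha\d_x^\beta$ terms. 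Two minor imprecisions worth correcting if you write this out: the interval endpoint $\mu(T_0)/\nu(T_0)$ does depend on $\delta$ (only its length is uniformly bounded, which is all that matters), and in the Duhamel estimate for $J(s)v$ you must also treat the inhomogeneous term $J(s)\bigl(\mathcal H|v|^2v\bigr)$ inside the integral, not just the homogeneous part $e^{i\frac{s}{2}\d_x^2}(xa)$ -- you do this implicitly when you say ``handled exactly as in the base case,'' but it should be stated.
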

\begin{proof}
 We proceed by induction on $k$.
 \\
 \textbf{Case $k=0$:}
 Thanks to the conservation of the $L^2-$norm of $u$, and Proposition \ref{munu} we deduce that \eqref{ControlV} is true for the pair 
 $(\infty,2)$. We know prove \eqref{ControlV} for an other admissible pair. 
 We consider an interval $I=[s, s+\tau]$, where $s\in \left[0,\dfrac{\mu(T_0)}{\nu(T_0)}\right[ $ and $\tau>0$.
 Using Strichartz estimates given by Theorem \ref{Strichartz}, we write
 \begin{align*}
 ||v||_{L^8(I,L^4)} & \leq \widetilde{C} \left[||v(s)||_{L^2}+ \left|\left|\mathcal{H}(s)|v|^2v\right|\right|_{L^{8/7}(I,L^{4/3})}\right] \\
 & \leq {C} \left[||a||_{L^2}+ ||v||^2_{L^{8/3}(I,L^4)} ||v||_{L^8(I,L^4)}\right]\\
 & \leq {C} \left[||a||_{L^2}+ \sqrt{\tau}||v||^2_{L^8(I,L^4)} ||v||_{L^8(I,L^4)}\right]\\
 & \leq {C} \left[||a||_{L^2}+ \sqrt{\tau}||v||^3_{L^8(I,L^4)}\right],
\end{align*}
where we have used Remark \ref{H}, H\"older inequalities and that $||v(s)||_{L^2}$ is bounded independently of $\delta$.
\\
\noindent
We use the following Bootstrap argument to prove that $||v||_{L^8(I,L^4)}$ is bounded:
\\
We suppose $||v||_{L^8(I,L^4)}\leq M$, for $M>0$. Then, the previous estimate gives
\begin{equation*}
 ||v||_{L^8(I,L^4)}\leq \dfrac{M}{2},
\end{equation*}
if \begin{align*}
C\|a\|_{L^2} \leq \dfrac{M}{4} \quad &\textrm{and} \quad C\sqrt{\tau} \|v\|^3_{L^8(I,L^4)} \leq \dfrac{M}{4}  \\
C\|a\|_{L^2} \leq \dfrac{M}{4} \quad &\textrm{and} \quad \tau \leq \left(\dfrac{1}{4CM^2}\right)^2.
   \end{align*}
We define:
$$M= 4C\|a\|_{L^2}+1 \quad ; \quad \tau_0 =\dfrac{1}{(4CM^2)^2}.$$
Finally, we have for $I=[s,s+\tau]$, with $\tau \leq \tau_0$
\begin{equation*}
  ||v||_{L^8(I,L^4)} \leq C \left(\|a\|_{L^2}+{M} \right).
\end{equation*}
Since we only have a finite number of intervals of size $\tau_0$ in the whole considered interval, we can deduce
$$||v||_{L^8\left(\left[0,\frac{\mu(T_0)}{\nu(T_0)}\right],L^4\right)} \leq C, $$
where $C$ is independent of $\delta$. The other pairs are easily obtained, using Strichartz estimates and the previous bound.
\\[2mm] 
\textbf{Case $k \geq 1$: }We now consider some $k$; and assume that 
$$\forall \alpha, \beta \in \N, \; \alpha+\beta = k-1, \quad 
\|x^\alpha \d_x^\beta v\|_{L^p\left(\left[0,\frac{\mu(T_0)}{\nu(T_0)}\right],L^q\right)} \leq C_{k-1} $$ 
The equations for the $k-$derivative and momentum are:
\begin{align*}
 i\d_s(\d_x^k v) + \dfrac{1}{2} (\d_x^k v) &= \mathcal{H}(s) \d_x^k(|v|^2v)\\
 i\d_s(x^k v) + \dfrac{1}{2} (x^k v) &= \mathcal{H}(s) |v|^2(x^k v)+ \dfrac{1}{2} \left[\d_x^2,x^k \right]v. 
\end{align*}
We recall the following result, which is a consequence of \cite{Triebel}, 
and that will be useful to drop all ``mixed'' norms of derivatives and momenta:
\begin{equation*}
 \sum_{|\alpha|+|\beta| \leq k} \|x^\alpha \d_x^\beta v \|_{L^q} \leq C 
\left[ \|(1+|x|)^{k}v \|_{L^q} + \sum_{|\alpha|\leq k} \|\d_x^\alpha v\|_{L^q} \right].
\end{equation*}
We consider $I=[s,s+\tau]$, $\tau>0$ and thanks to Strichartz estimates and Remark \ref{H}:
\begin{align*}
 \|\d_x^k v\|_{L^\infty(I,L^2)}+ \|\d_x^k v\|_{L^8(I,L^4)} & \lesssim \|\d_x^k v(s)\|_{L^2} +  \|\d_x^k (|v|^2v)\|_{L^{8/7}(I,L^{4/3})}\\
 \|x^k v\|_{L^\infty(I,L^2)}+ \|x^k v\|_{L^8(I,L^4)} & \lesssim \|x^k v(s)\|_{L^2} + \| |v|^2(x^k v)\|_{L^{8/7}(I,L^{4/3})}\\
 & \quad+ \|\left[\d_x^2,x^k \right]v \|_{L^1(I,L^2)}.
\end{align*}
We write \begin{align*}
          \left|\d_x^k (|v|^2v) \right| & \lesssim |v|^2 |\d_x^kv| + \sum_{j\in J} |w_{j_1}| |w_{j_2}| |w_{j_3}|, \quad 
          J \subset \N,\\
          \left[\d_x^2,x^k \right]v & = C_1 x^{k-1} \d_x v + C_2 x^{k-2}v,
         \end{align*}
where $w_{j_l}$ is a derivative of $u$ or $\bar{u}$ of order lower than $k-1$. 
So
\begin{align*}
 \|\d_x^k v\|_{L^\infty(I,L^2)}+ \|\d_x^k v\|_{L^8(I,L^4)} & \lesssim \|\d_x^k v(s)\|_{L^2} 
 + ||v||^2_{L^{8/3}(I,L^4)}\|\d_x^k v\|_{L^8(I,L^4)} \\
 + &\sum_{j\in J} \|w_{j_1}\|_{L^8(I,L^4)} \|w_{j_2}\|_{L^8(I,L^4)} \|w_{j_3}\|_{L^8(I,L^4)}\\
 & \lesssim K \left[ \|\d_x^k v(s)\|_{L^2} + \tau^{1/2}\|\d_x^k v\|_{L^8(I,L^4)} + \tau^{1/2}C_{k-1}  \right], 
\end{align*}
where we have used H\"older, the fact that we are in the one-dimensional case, the estimate obtained for $k=0$
and the induction hypothesis. Then, we choose 
$\tau$ such that $K\tau^{1/2} \leq 1/2$ and we obtain
\begin{equation*}
 \|\d_x^k v\|_{L^\infty(I,L^2)}+ \|\d_x^k v\|_{L^8(I,L^4)} \lesssim \|\d_x^k v(s)\|_{L^2} +C.
\end{equation*}
Then, we analyze the other term, with the same tools:
\begin{align*}
\|x^k v\|_{L^\infty(I,L^2)}+ \|x^k v\|_{L^8(I,L^4)} & \lesssim \|x^k v(s)\|_{L^2} + ||v||^2_{L^{8/3}(I,L^4)}\|x^k v\|_{L^8(I,L^4)} \\
 & \quad + \tau\|x^{k-1} \d_x v \|_{L^\infty(I,L^2)}+\tau\|x^{k-2}v\|_{L^\infty(I,L^2)}
 \\
 & \lesssim \|x^k v(s)\|_{L^2} + \tau^{1/2}\|x^k v\|_{L^8(I,L^4)} +C_{k-2}\\
 & \quad + \tau \left[ \|x^k v \|_{L^\infty(I,L^2)}+\|\d_x^k v \|_{L^\infty(I,L^2)}+C_{\leq k-1}\right]\\
 & \leq K \left[ \|x^k v(s)\|_{L^2} + \tau^{1/2}\|x^k v\|_{L^8(I,L^4)}\right. \\
 & \left.\quad + \tau\|x^k v \|_{L^\infty(I,L^2)}+\tau\|\d_x^k v \|_{L^\infty(I,L^2)} + C \right].
\end{align*}
We choose $\tau$ small: $K\tau^{1/2}\leq 1/2$ and $K\tau \leq 1/2$. 
\\
\noindent
For $\tau$ such that $\tau \leq \inf \left(1/(2K); 1/(4K^2) \right)$, we have
\begin{align*}
\|x^k v\|_{L^\infty(I,L^2)}+ \|x^k v\|_{L^8(I,L^4)} & \lesssim \|x^k v(s)\|_{L^2} +\|\d_x^k v(s)\|_{L^2} +C.
\end{align*}
Since we work on a finite time interval, and $\tau$ is bounded by a constant independent of $s, \delta$, 
we have a finite number of intervals of size $\tau$ in it:
\begin{equation*}
 \|x^k v(s)\|_{L^2} +\|\d_x^k v(s)\|_{L^2} \lesssim 1, \quad \forall s \in \left[0,\dfrac{\mu(T_0)}{\nu(T_0)}\right].
\end{equation*}
Bound for other admissible pairs are easily obtained using Strichartz estimates and the previous estimate.
\end{proof}

\subsection{Control of the profile}
\label{subsec:profile}
We will now prove Theorem \ref{thm:profile}. We recall:
$$u_\delta(t-T_0,x)=\dfrac{1}{\sqrt{\nu(t)}} v\left( \dfrac{\mu(t)}{\nu(t)}, \dfrac{x}{\nu(t)}\right) 
e^{\frac{i\dot{\nu}(t)}{\nu(t)}.\frac{x^2}{2}}.$$
The mass-conservation of the profile gives the result for $k=0$. 
For $k \geq 1$, we still analyze full derivatives and momenta of order $k$ only, 
and not all ``mixed'' norms, thanks to \cite{Triebel}. Using the previous formula, we can compute for all $t\in[0,T_0]$:
\begin{align*}
 \| x^k u_\delta(t-T_0)\|_{L^2}^2 & =\nu(t)^{2k} \left\| y^k v\left(\dfrac{\mu(t)}{\nu(t)} \right)\right\|_{L^2}^2 \leq C \\
 \| \d_x^k u_\delta(t-T_0)\|_{L^2}^2 & \leq \int \dfrac{1}{\nu(t)} \sum_{\alpha\leq k}\left|\nu(t)^{-\alpha}
 \d_y^{\alpha}v\left(\dfrac{\mu(t)}{\nu(t)} \right)\right|^2 \times 
 \left(\d_y^{k-\alpha}e^{\frac{i\dot{\nu}(t)}{\nu(t)}.\frac{x^2}{2}} \right)^2 dx
\end{align*}
We use Faa di Bruno's formula to deal with the derivative of the exponential and notice that 
we obtain a sum of terms of the form $\nu^{-2\gamma}\d_x^\gamma v \times \dfrac{\dot{\nu}^\gamma_1}{\nu^{\gamma_2}}y^{\gamma_3}v$, 
with exponents smaller than $2k$. Using Propositions \ref{munu} and \ref{growthV} we deduce that all these terms are bounded 
by constants independent of $\delta$ and $t$; we finally obtain for $t$ instead of $t-T_0$:
$$\| x^k u(t)\|_{L^2}+ \| \d_x^k u(t)\|_{L^2} \leq C, \quad \forall t \in [-T_0,0]. $$
\\[2mm]
\noindent
From this study of the profile, we easily deduce the result on $\varphi^\eps$ stated in Corollary~\ref{phiLinfty}.
\\[2mm]
The task is now to prove the validity of the approximation given by \eqref{phi}.

\section{Convergence of the approximation in the adiabatic region.}
\label{proof thm:IncomingOuter}
In this section, we prove Theorem \eqref{thm:IncomingOuter}. 
To simplify notation, we will drop the sign ``$+$'' when no confusion can arise. 
\subsection{Strategy of the proof}
We recall the equation satisfied by $\psi^\eps$:
$$\left\lbrace \begin{array}{l} 
               i\eps\d_t \psi^\eps+ \dfrac{\eps^2}{2}\d_x^2 \psi^\eps - V_\delta(x)\psi^\eps = \kappa \eps^{3/2}|\psi^\eps|^2\psi^\eps; \\
               \psi^\eps(-T,x) \; \textrm{given by \eqref{data}},
\end{array}\right.$$
and the definition \eqref{phi} of $\varphi^\eps$,
$$\varphi^\eps(t,x) = \eps^{-1/4}u_\delta \left(t, \dfrac{x-x^+(t)}{\sqrt{\eps}}\right)
 e^{\frac{iS^+(t)}{\eps}+\frac{i\xi^+(t).(x-x^+(t))}{\eps}}. $$
We first notice that $\varphi^\eps$ satisfies:
$$ \left\lbrace \begin{array}{l} 
               i\eps\d_t \varphi^\eps+ \dfrac{\eps^2}{2}\d_x^2 \varphi^\eps - \lambda_\delta(x)\varphi^\eps = 
               \kappa \eps^{3/2}|\varphi^\eps|^2\varphi^\eps - \mathcal{R}^\eps_\delta(t,x)\varphi^\eps; \\
               \varphi^\eps(-T,x) =
\eps^{-1/4}a\left(t, \dfrac{x-x(-T)}{\sqrt{\eps}}\right)e^{\frac{iS(-T)}{\eps}+ \frac{i\xi(-T).(x-x(-T))}{\eps}}
\end{array}\right.$$
where 
\begin{equation}
 \label{reste}
 \mathcal{R}^\eps_\delta(t,x) =  \dfrac{\lambda^{(3)}(g(t,x))}{6}(x-x(t))^3, \quad \textrm{with } 
 g(t,x) = \tau x + (1-\tau)x(t), \; \tau \in ]0,1[.
\end{equation}
We want to study the term $w^\eps$ given by $w^\eps = \psi^\eps - \varphi^\eps \chi_\delta$; its equation is
\begin{equation}
\label{w}
\left\lbrace \begin{array}{l} 
               i\eps\d_t w^\eps+ \dfrac{\eps^2}{2}\d_x^2 w^\eps - V_\delta(x)w^\eps = \eps {NL}^\eps+ \eps {L}^\eps; \\
               w^\eps(-T,x)= 0
\end{array}\right.
\end{equation}
where
\begin{align*}
{NL}^\eps & = \kappa \eps^{1/2} \left(|\psi^\eps|^2\psi^\eps - |\varphi^\eps|^2 \varphi^\eps \chi_\delta \right) \\
{L}^\eps & = \eps^{-1}\mathcal{R}^\eps_\delta(t,x)\varphi^\eps \chi_\delta + \eps \d_x \varphi^\eps. d\chi_\delta 
 + \dfrac{\eps}{2} \varphi^\eps d^2 \chi_\delta.
\end{align*}
Using \eqref{eigenvectors}, it is easy to see that some terms are too big (in $L^2$) and will present an obstacle to prove that 
the remainder is small. Thus, we need a finer analysis of the eigenvectors.
\\
Besides, a rough estimate suggests that the second term in ${L}^\eps$ presents an $\mathcal{O}(1)$ contribution. 
Nevertheless after a careful analysis of the eigenvectors to find finer estimates, presented in the next subsection, 
and using the minimal gap size $\delta$, as it is done in \cite{HJ98} we will be able to study all linear terms and their derivatives 
in $L^2$. The last subsection will be devoted to the nonlinear terms, which will be handled by a bootstrap argument, 
and this will complete the proof of Theorem \ref{thm:IncomingOuter}.
\subsection{About the eigenvectors}
\label{eigen}
We need more refined estimates than \eqref{eigenvectors}. We recall that the eigenvectors of the potential are of the form:
$$\chi_\delta^\pm(x) = \left( \begin{array}{c}
                                                  \Theta_1^\pm(x)\\
                                                  \Theta_2^\pm(x)
                                                 \end{array}\right),$$ 
                                                 with
                                                 
\begin{align*}
 \Theta_1^+(x) &= \dfrac{\delta}{\left(2\sqrt{x^2+\delta^2}(\sqrt{x^2+\delta^2}-x)\right)^{1/2}}= -\Theta_2^-(x) ;\\
 \Theta_2^+(x)& = \dfrac{\sqrt{x^2+\delta^2}-x}{\left(2\sqrt{x^2+\delta^2}(\sqrt{x^2+\delta^2}-x)\right)^{1/2}}=\Theta_1^-(x),
\end{align*}
We can rewrite the coordinates:
\begin{equation*}
 \Theta_1^+(x)=F\left(\dfrac{x}{\delta}\right)=-\Theta_2^-(x) \quad ; 
 \quad \Theta_1^-(x) = G\left(\dfrac{x}{\delta}\right)=\Theta_2^+(x), 
\end{equation*}
with 
\begin{align*}
 F(y) = \cos \left(\dfrac{1}{2}\arctan\left(\dfrac{1}{y}\right)\right) \quad &; \quad 
 G(y) = \sin \left(\dfrac{1}{2}\arctan\left(\dfrac{1}{y}\right)\right), \quad \textrm{for }x \geq 0 \\
 F(y) = -\sin \left(\dfrac{1}{2}\arctan\left(\dfrac{1}{y}\right)\right) \quad &; \quad 
 G(y) = \cos \left(\dfrac{1}{2}\arctan\left(\dfrac{1}{y}\right)\right), \quad \textrm{for }x \leq 0.
\end{align*}
We find for $y\neq0$:
\begin{align*}
 \left[\cos \left(\dfrac{1}{2}\arctan\left(\dfrac{1}{y}\right)\right)\right]'
& = \dfrac{-1}{2} \dfrac{1}{1+y^2} \sin \left(\dfrac{1}{2} \arctan \left(\dfrac{1}{y} \right)\right)\\
 \left[ \sin \left(\dfrac{1}{2}\arctan\left(\dfrac{1}{y}\right)\right)\right]'
 &= \dfrac{1}{2} \dfrac{1}{1+y^2} \cos \left(\dfrac{1}{2} \arctan \left(\dfrac{1}{y} \right)\right)\\
 \\
 \left[\cos \left(\dfrac{1}{2}\arctan\left(\dfrac{1}{y}\right)\right)\right]''
 &=\dfrac{-y}{(1+y^2)^2}\sin \left(\dfrac{1}{2} \arctan \left(\dfrac{1}{y} \right)\right) \\
 &\qquad \qquad \qquad - \dfrac{1}{4} \dfrac{1}{(1+y^2)^2}\cos \left(\dfrac{1}{2} \arctan \left(\dfrac{1}{y} \right)\right)\\
 \\
  \left[\sin \left(\dfrac{1}{2}\arctan\left(\dfrac{1}{y}\right)\right)\right]''
  &= \dfrac{y}{(1+y^2)^2}\cos \left(\dfrac{1}{2} \arctan \left(\dfrac{1}{y} \right)\right) \\
 &\qquad \qquad \qquad - \dfrac{1}{4} \dfrac{1}{(1+y^2)^2}\sin \left(\dfrac{1}{2} \arctan \left(\dfrac{1}{y} \right)\right)
\end{align*}
\\[2mm]
So we deduce
\begin{align}
 \label{eigenvectors1}
 |d\chi_\delta^\pm(x)| &\lesssim \dfrac{\delta}{x^2+\delta^2}\lesssim \dfrac{1}{\delta}\dfrac{1}{\left(\dfrac{x}{\delta}\right)^2+1}\\
 \label{eigenvectors2}
 |d^2\chi_\delta^\pm(x)| &\lesssim \dfrac{\delta}{\left(x^2+\delta^2\right)^{3/2}} 
 \lesssim \dfrac{1}{\delta^2}\dfrac{1}{\left(\left(\dfrac{x}{\delta}\right)^2+1\right)^{3/2}}.
\end{align}

\subsection{Analysis of the linear terms}
We recall the form of $L^\eps$ which stands for the linear term of \eqref{w}:
$$L^\eps = \eps^{-1}\mathcal{R}^\eps_\delta\varphi^\eps \chi_\delta + \eps \d_x \varphi^\eps. d\chi_\delta 
 + \dfrac{\eps}{2} \varphi^\eps d^2 \chi_\delta.$$
 In this subsection, we will prove the following lemma:
 \begin{lemma}
  \label{Leps}
  Let $a \in \Sch(\R)$ and $\Lambda>0$ be a fixed constant.  
  Then,  there exists a constant $C_L>0$ independent of $\eps$, such that for all $t \in [-T,-\Lambda \sqrt{\eps}]$, we have
  $$\|L^\eps(t) \|_{L^2} + \|\eps \d_x L^\eps(t)  \|_{L^2} \leq C_L \dfrac{\sqrt{\eps}}{|t|} $$
 \end{lemma}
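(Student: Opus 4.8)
The plan is to prove Lemma~\ref{Leps} by estimating in $L^2$ each of the three summands of
$$L^\eps = \eps^{-1}\mathcal R^\eps_\delta\,\varphi^\eps\chi_\delta \;+\; \eps\,(\d_x\varphi^\eps)\,d\chi_\delta \;+\; \tfrac{\eps}{2}\,\varphi^\eps\,d^2\chi_\delta ,$$
and, separately, their images under $\eps\d_x$. The ingredients I would use are: the refined eigenvector estimates \eqref{eigenvectors1}--\eqref{eigenvectors2} together with the analogous $|d^3\chi_\delta^\pm(x)|\lesssim \delta(x^2+\delta^2)^{-2}$ obtained by differentiating once more; the bound $|\lambda_\delta^{+(3)}(x)|\lesssim \delta^2(x^2+\delta^2)^{-2}$, which follows at once from differentiating $\lambda_\delta^{+(2)}(x)=\delta^2(x^2+\delta^2)^{-3/2}$; the uniform-in-$\delta$ profile estimates $\|y^\alpha\d_y^\beta u_\delta(t)\|_{L^2}\le C$ on $[-T,0]$ (Theorem~\ref{thm:profile} combined with the Proposition preceding it); and the classical lower bound $|x^+(t)|\gtrsim|t|$ on $[-T,0]$, valid since $x^+(t)=\xi_0 t+\mathcal O(t^2)$ near $0$ and $|x^+|$ is bounded below away from $0$. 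Throughout I use $\delta=c\sqrt\eps$ and $|t|\ge\Lambda\sqrt\eps$.

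First I would record the semiclassical bookkeeping for $\varphi^\eps$: differentiating \eqref{phi} and using that the phase has $x$-derivative $\xi^+(t)$ gives $\eps\d_x\varphi^\eps = i\xi^+(t)\varphi^\eps + \sqrt\eps\,\widetilde\varphi^\eps$, where $\widetilde\varphi^\eps$ is $\eps^{-1/4}$ times the oscillation times $(\d_y u_\delta)(t,(x-x^+(t))/\sqrt\eps)$, so that after the change of variables $x=x^+(t)+\sqrt\eps\,y$ it has the $L^2$-norm of $\d_y u_\delta(t)$, i.e. $\mathcal O(1)$; more generally each $\eps\d_x$ either falls on the profile, leaving an $\mathcal O(1)$ (in $L^2$) rescaled profile derivative, or on the $\eps$-oscillation, producing the bounded factor $\xi^+(t)$. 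Hence each of the three terms above, as well as each of its $\eps\d_x$-derivatives, is of the form $(\text{bounded in }t)\times m_\delta(x)\,\Phi^\eps(t,x)$, where $m_\delta$ is a product of the singular factors $\lambda_\delta^{+(3)},d\chi_\delta,d^2\chi_\delta,d^3\chi_\delta$ --- possibly multiplied by a monomial $(x-x^+(t))^j$ coming from $\mathcal R^\eps_\delta$ --- and $\Phi^\eps$ is $\eps^{-1/4}$ times a Schwartz rescaled profile, obeying after rescaling the uniform bounds of Theorem~\ref{thm:profile}.

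The engine of the proof is a near/far split. Fix $t\in[-T,-\Lambda\sqrt\eps]$ and set $x=x^+(t)+\sqrt\eps\,y$; the square of each $L^2$-norm to be bounded then reads $\int |P(y)\,\d_y^\beta u_\delta(t,y)|^2\,|m_\delta(x^+(t)+\sqrt\eps\,y)|^2\,dy$ for a fixed polynomial $P$, and I would split it at $|y|=|x^+(t)|/(2\sqrt\eps)$. On $|y|\le|x^+(t)|/(2\sqrt\eps)$ one has $|x^+(t)+\sqrt\eps\,y|\ge\tfrac12|x^+(t)|\gtrsim|t|\gtrsim\delta$, so $(x^+(t)+\sqrt\eps\,y)^2+\delta^2\gtrsim t^2$ --- and the same for the Taylor point $g$ in $\mathcal R^\eps_\delta$ --- hence $|m_\delta|$ is bounded there by a fixed $\delta^{a}|t|^{-b}$ and the residual $y$-integral is $\le\|P\,\d_y^\beta u_\delta(t)\|_{L^2}^2=\mathcal O(1)$; on $|y|>|x^+(t)|/(2\sqrt\eps)\gtrsim|t|/\sqrt\eps$ one bounds $|m_\delta|$ crudely by its global supremum $\mathcal O(\delta^{-k})$, while Chebyshev's inequality and $\|y^N\d_y^\beta u_\delta(t)\|_{L^2}\lesssim1$ give, for every $N$, a gain $(\sqrt\eps/|t|)^{2N}$ that absorbs any fixed negative power of $\delta=c\sqrt\eps$, so that region is negligible. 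Term by term this yields contributions $\lesssim\eps^{3/2}|t|^{-4}$ from $\eps^{-1}\mathcal R^\eps_\delta\varphi^\eps\chi_\delta$ (the weight $(x-x^+)^3=\eps^{3/2}y^3$ and $|\lambda_\delta^{+(3)}|\lesssim\delta^2|t|^{-4}$ entering), $\lesssim\eps^{3/2}|t|^{-3}$ from $\tfrac\eps2\varphi^\eps d^2\chi_\delta$ (using $|d^2\chi_\delta|\lesssim\delta|t|^{-3}$), and, from the dominant derivative-coupling term $\eps(\d_x\varphi^\eps)d\chi_\delta$, a contribution $\lesssim\delta|t|^{-2}\sim\sqrt\eps\,|t|^{-2}$ (using $|d\chi_\delta|\lesssim\delta(x^2+\delta^2)^{-1}$ and that $\eps\d_x\varphi^\eps$ has an $\mathcal O(1)$ $L^2$-component); since $|t|\ge\Lambda\sqrt\eps$ the first two are $\mathcal O(\Lambda^{-1})$ times the last, which is the bound of Lemma~\ref{Leps}. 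The estimate for $\eps\d_x L^\eps$ is carried out in exactly the same way.

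The main obstacle will be the derivative-coupling term $\eps(\d_x\varphi^\eps)d\chi_\delta$, and it is precisely for it that the refinement of Section~\ref{eigen} is indispensable: a brute-force bound $\|\eps\d_x\varphi^\eps\|_{L^\infty}\|d\chi_\delta\|_{L^2}$ or $\|\eps\d_x\varphi^\eps\|_{L^2}\|d\chi_\delta\|_{L^\infty}$ only gives $\mathcal O(\eps^{-1/2})$, because $d\chi_\delta$ has an $L^\infty$-spike of height $\sim\delta^{-1}$ and $L^2$-mass $\sim\delta^{-1/2}$ concentrated within $\mathcal O(\delta)$ of $x=0$, which is fatal against $\|\varphi^\eps\|_{L^\infty}\sim\eps^{-1/4}$. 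The gain is purely geometric: this spike sits a distance $\gtrsim|t|\gg\sqrt\eps$ from the $\mathcal O(\sqrt\eps)$-support of the wave packet $\varphi^\eps$, so on the overlap $d\chi_\delta$ is only of moderate size $\lesssim\delta\,|t|^{-2}$; quantifying the overlap is exactly the near/far split, and it is there that both the minimal-gap scaling $\delta=c\sqrt\eps$ and the lower bound $|x^+(t)|\gtrsim|t|$ are used. The same remark covers $\eps\d_x L^\eps$: an extra $\eps\d_x$ on an eigenvector factor replaces $d^k\chi_\delta$ by $\eps\,d^{k+1}\chi_\delta$, which on the support of $\varphi^\eps$ is smaller by a factor $\eps/|t|\le\Lambda^{-1}\sqrt\eps$, so no estimate is degraded.
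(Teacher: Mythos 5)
Your overall approach matches the paper's: both estimates rest on the refined eigenvector bounds of Section~\ref{eigen}, the uniform-in-$\delta$ control of the profile's derivatives and moments from Theorem~\ref{thm:profile}, and a dichotomy between the region near the wave packet (where the eigenvector singularity is already tame, of size $\delta^a|t|^{-b}$) and the far region (where arbitrarily high moments of the profile absorb the worst-case $\delta^{-k}$). The paper splits in $x$ at $|x|=\theta|s|$ (centering on the singularity at the origin), while you split at $|x-x^+(t)|=|x^+(t)|/2$ (centering on the wave packet); these are geometrically equivalent and yield the same exponents.

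There is, however, a genuine gap at the very end. Your near/far computation produces the pointwise-in-time bounds
$\|\eps^{-1}\mathcal R^\eps_\delta\varphi^\eps\chi_\delta(t)\|_{L^2}\lesssim\eps^{3/2}|t|^{-4}$,
$\|\tfrac\eps2\varphi^\eps d^2\chi_\delta(t)\|_{L^2}\lesssim\eps^{3/2}|t|^{-3}$ and
$\|\eps\d_x\varphi^\eps\cdot d\chi_\delta(t)\|_{L^2}\lesssim\sqrt\eps\,|t|^{-2}$,
which are all correct, and you then assert that the dominant one, $\sqrt\eps\,|t|^{-2}$, \emph{is} the bound of Lemma~\ref{Leps}. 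It is not: at $|t|\sim\Lambda\sqrt\eps$ one has $\sqrt\eps/|t|^2\sim(\Lambda^2\sqrt\eps)^{-1}\to\infty$, whereas $C_L\sqrt\eps/|t|\sim C_L/\Lambda$ stays bounded. So your pointwise estimate exceeds the claimed right-hand side by a factor $1/|t|$, which is unbounded on $[-T,-\Lambda\sqrt\eps]$; read literally, the lemma's pointwise inequality is in fact false, and your computation confirms this. What the lemma actually needs to deliver --- and what the paper's proof in Section~\ref{proof thm:IncomingOuter} establishes and then feeds into the Duhamel estimate --- is the \emph{time-integrated} bound
$\int_{-T}^t\|L^\eps(s)\|_{L^2}\,ds\lesssim\sqrt\eps/|t|$,
together with its analogue for $\eps\d_x L^\eps$. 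Your proof is missing this final integration step: one should integrate the pointwise majorant $\sqrt\eps/|s|^2$ over $[-T,t]$, getting $\sqrt\eps\left(\tfrac1{|t|}-\tfrac1T\right)\lesssim\sqrt\eps/|t|$ (and similarly the subdominant terms $\eps^{3/2}/|s|^3$, $\eps^{3/2}/|s|^4$ integrate to $\eps^{3/2}/|t|^2$, $\eps^{3/2}/|t|^3$, both $\lesssim\sqrt\eps/|t|$ on the relevant range). Once that step is added, the argument closes and coincides in substance with the paper's.
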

\noindent
In order to prove this lemma, we will analyze each term and its derivative in $L^2$.
\\[2mm]
\textbf{$\bullet$ Analysis of $\dfrac{\eps}{2} \varphi^\eps d^2 \chi_\delta$}:
\\[1mm]
We see that $\left\|\dfrac{\eps}{2} \; \varphi^\eps. d^2\chi_\delta(t)\right\|_{L^2}  \lesssim \eps |d^2\chi_\delta|_\infty 
 \| \varphi^\eps(t)\|_{L^2} \lesssim \dfrac{\eps}{\delta^2},$
         \\
where we have used \eqref{eigenvectors}. This rough estimate is not good enough: it presents an $\mathcal{O}(1)$ contribution 
for $\delta = c \sqrt{\eps}$. So, using \eqref{eigenvectors2}, we find for $-T\leq t \leq -\Lambda \sqrt{\eps}$:
\begin{align*}
\|\eps \; \varphi^\eps. d^2\chi_\delta(s)\|_{L^1([-T;t],L^2)} & \lesssim
 \int_{-T}^{t} \left[\int_{x\in \R}\eps^2\eps^{-1/2}|d^2 \chi_\delta(x)|^2 |u(s,...)|^2 dx \right]^{1/2} ds \\
& \lesssim \int_{-T}^{t} \left[\int_{|x|\geq \theta|s|} \ldots + 
\int_{|x|\leq \theta |s|} \ldots\right]^{1/2}ds\\
&\lesssim \int_{-T}^{t} \left[I^{1/2}+II^{1/2}\right]ds.
\end{align*}
\vspace{1mm}
For $I$, since $|x| \geq \theta|s|$, we write $\left|\dfrac{x}{\delta}\right|^{-1} \leq \dfrac{\delta}{\theta|s|},$
and obtain
\begin{align*}
 I &\lesssim  \int_{|x|\geq \theta |s|} \eps^2 \eps^{-1/2} \dfrac{1}{\delta^4}\times \dfrac{\delta^6}{|s|^6}|u(s,...)|^2 dx\\
 & \lesssim \dfrac{\eps^2\delta^2}{|s|^6},
\end{align*}
where we have used mass-conservation of the profile. For $II$, we write
\begin{equation*}
 II \lesssim \int_{|x|\leq \theta |s|} \eps^2 \eps^{-1/2}\dfrac{1}{\delta^4} 
 \dfrac{1}{\left(1+\left|\dfrac{x}{\delta} \right|^2 \right)^3} \times 
 \dfrac{\sqrt{\eps}^{2M}}{|x-x(s)|^{2M}}\dfrac{|x-x(s)|^{2M}}{\sqrt{\eps}^{2M}} |u(s,...)|^2 dx \\
\end{equation*}
where $M$ is an integer, $M>1$. 
\\
\noindent
We notice that since $|x|\leq \theta|s|$, we have $|x-x(s)|\geq \xi_0 |s| - \theta |s|$. 
For $\theta$ small enough, we can write $|x-x(s)|\geq \xi_0 |s|/2,$ and so
\begin{equation*}
 II \lesssim \int_{y\in \R} \eps^2 \dfrac{1}{\delta^4} \dfrac{\sqrt{\eps}^{2M}}{|s|^{2M}}\times y^{2M}|u(s,y)|^2 dy \lesssim \dfrac{\eps^2\sqrt{\eps}^{2M}}{\delta^4 |s|^{2M}},
\end{equation*}
where we have used that all momenta of the profile are bounded by constants independent of $\eps$ and $t$. Finally, we compute
\begin{align}
\notag
 \|\eps \; \varphi^\eps. d^2\chi_\delta(s)\|_{L^1([-T;t],L^2)} & \lesssim
 \int_{-\infty}^{t} \dfrac{\eps\delta}{|s|^3} + \dfrac{\eps\sqrt{\eps}^M}{\delta^2 |s|^M} ds \\
 \notag
 & \lesssim \dfrac{\eps\delta}{t^2} + \dfrac{\eps \sqrt{\eps}^M}{\delta^2} \dfrac{1}{t^{M-1}} \\
 \label{lin1}
 & \lesssim \dfrac{\eps\sqrt{\eps}}{t^2}
\end{align}
with $\delta = c \sqrt{\eps}$, and where we have only kept the worst contribution. In fact, thanks to Theorem \ref{thm:profile}, 
all momenta of the profile are bounded on the time interval we consider. So we can choose $M$ as big as we want.  
\\[2mm]
We estimate the derivative in the same fashion: 
\begin{multline*}
 \eps \d_x \left(\eps \; \varphi^\eps. d^2\chi_\delta\right) = \eps^2 \varphi^\eps.d^3\chi_\delta + \\
\eps^{-1/4}e^{\frac{iS(t)}{\eps}+\frac{i\xi(t).(x-x(t))}{\eps}}
\left(i\eps\xi(t)u(\ldots)+\eps\sqrt{\eps}u(\ldots)\right)d^2\chi_\delta.
\end{multline*}
It is sufficient to use \eqref{eigenvectors} to deal with the following terms:
\begin{align*}
 \|\eps^2 \varphi^\eps.d^3\chi_\delta\|_{L^1([-T;t],L^2)}&\lesssim \dfrac{\eps^2}{\delta^3} \lesssim \eps^{1/2}\\
 \|\eps^{-1/4}\eps\sqrt{\eps}u(\ldots)d^2\chi_\delta\|_{L^1([-T;t],L^2)}& \lesssim \dfrac{\eps \sqrt{\eps}}{\delta^2}
 \lesssim \eps^{1/2},
\end{align*}
and we notice that since $|\xi(t)|$ is bounded (Remark \ref{rmk:traj}), we have thanks to \eqref{lin1}
\begin{align*}
 \|\eps^{-1/4}i\eps\xi(s)u(s,\dfrac{x-x(s)}{\sqrt{\eps}})d^2\chi_\delta\|_{L^1([-T;t],L^2)} 
 &\simeq \|\eps \; \varphi^\eps. d^2\chi_\delta(s)\|_{L^1([-T;t],L^2)} \\
 &\lesssim \dfrac{\eps\sqrt{\eps}}{t^2}.
\end{align*}
So, we obtain
\begin{equation}
\label{dlin1}
 \|\eps \d_x \left(\eps \; \varphi^\eps. d^2\chi_\delta\right)\|_{L^1([-T;t],L^2)} \lesssim 
 \sqrt{\eps}+ \dfrac{\eps\sqrt{\eps}}{t^2}.
\end{equation}
\\[3mm]
\textbf{$\bullet$ Analysis of $\eps \d_x \varphi^\eps. d\chi_\delta$}
\\[1mm]
We have $\; \eps \d_x \varphi^\eps. d\chi_\delta = \eps^{-1/4}e^{...}\left[\sqrt{\eps}\d_x u + i\xi(t)u\right] d\chi_\delta,\;$
and we study successively each part. We use \eqref{eigenvectors1} and compute
\\[1mm]
 $\|\eps^{-1/4}\sqrt{\eps}\d_x u.d\chi_\delta(s) \|_{L^1([-T,t];L^2)} $
 \begin{align*}
 & \lesssim \int_{-T}^{t}
 \left[ \int_{x\in \R} \eps \eps^{-1/2} |\d_xu(...)|^2 \dfrac{1}{\delta^2}
 \dfrac{1}{\left(1+\left|\dfrac{x}{\delta}\right|^2\right)^2}dx\right]^{1/2}ds\\
 & \lesssim \int_{-T}^{t} \left( \int_{|x|\geq \theta|s|} \ldots + \int_{|x|\leq \theta|s|} \ldots\right)^{1/2}ds \\
 & \lesssim \int_{-T}^{t} I^{1/2}+II^{1/2}.
\end{align*}
For $I$, since $|x|\geq \theta|s|$, we have 
\begin{equation*}
 I  \lesssim \int_{|x|\geq \theta|s|} \eps \eps^{-1/2} |\d_xu(...)|^2 \dfrac{1}{\delta^2}\dfrac{\delta^4}{|s|^4}dx \lesssim \dfrac{\eps\delta^2}{|s|^4},
\end{equation*}
where we have used that the $H^1-$norm of the profile is bounded. Integrating in time, we obtain 
\begin{equation*}
 \int_{-T}^{t}I^{1/2}ds \lesssim \dfrac{\eps}{|t|}
\end{equation*}
We then write:
\begin{equation*}
II \lesssim \int_{|x|\leq \theta |s|} \eps \eps^{-1/2}\dfrac{1}{\delta^2} 
 \dfrac{1}{\left(1+\left|\dfrac{x}{\delta} \right|^2\right)^2} \times 
 \dfrac{\sqrt{\eps}^{2M}}{|x-x(s)|^{2M}}\dfrac{|x-x(s)|^{2M}}{\sqrt{\eps}^{2M}} |\d_xu(s,...)|^2 dx,
\end{equation*}
and arguing as in the previous analysis, using the control of all momenta of the profile, and integrating in time, we find
\begin{equation*}
 \int_{-T}^{t}II^{1/2}dt \lesssim \dfrac{\sqrt{\eps}^M}{|t|^{M-1}},
\end{equation*}
where $M>1$ can be choosen as big as we want. So, keeping only the worst contribution, we obtain:
\begin{equation*}
 \|\eps^{-1/4}\sqrt{\eps}\d_x u.d\chi_\delta \|_{L^1([-T,t];L^2)} \lesssim \dfrac{\eps}{|t|}.
\end{equation*}
\\[1mm]
We now need to estimate the second contribution of $\eps \d_x \varphi^\eps. d\chi_\delta$, we write 
$$\left| i \eps^{-1/4}\xi(t) u(\ldots).d\chi_\delta\right| \lesssim \left|\dfrac{\delta}{\delta^2+x^2}u(\ldots)\right|. $$
So: $\| i \eps^{-1/4}\xi(t) u(\ldots).d\chi_\delta\|_{L^1([-T,t];L^2)}$
\begin{align*}
 &\lesssim 
 \int_{-T}^{t}\left[ \int_{x\in \R} \eps^{-1/2}\dfrac{\delta^2}{\left(x^2+\delta^2\right)^2}|u(s,\ldots)|^2 dx\right]^{1/2}ds\\
 & \lesssim \int_{-T}^{t}
 \left[ \int_{y\in \R} \dfrac{\delta^2}{\left(\left(x(s)+\sqrt{\eps}y\right)^2+\delta^2\right)^2}|u(s,y)|^2 dy\right]^{1/2}ds\\
 &\lesssim  \int_{-T}^{t} \left(\int_{|y|\geq \theta \frac{|s|}{\sqrt{\eps}}}\ldots\right)^{1/2}
 +\left(\int_{|y|\leq \theta \frac{|s|}{\sqrt{\eps}}}\ldots\right)^{1/2} ds \\
 &\lesssim \int_{-T}^{t} A^{1/2}+B^{1/2},
\end{align*}
with $c>0$. For $A$, we write $|y|\geq \theta \frac{|s|}{\sqrt{\eps}}$ and so $|y|^{-1} \leq \dfrac{\sqrt{\eps}}{\theta|s|}$ and then, 
for an integer $M>1,$ we have
\begin{align*}
 A & \lesssim \int_{|y|\geq \theta \frac{|s|}{\sqrt{\eps}}} \dfrac{\delta^2}{\left(\left(x(s)+\sqrt{\eps}y\right)^2+\delta^2\right)^2}
 \; \dfrac{1}{y^{2M}} \times y^{2M}|u(s,y)|^2 dy \\
 & \lesssim \dfrac{1}{\delta^2} \dfrac{\sqrt{\eps}^{2M}}{|s|^{2M}}\|y^{M}u(s)\|_{L^2}^2,
\end{align*}
and so
\begin{equation*}
 A^{1/2} \lesssim \dfrac{\sqrt{\eps}^{M-1}}{|s|^{M}},
\end{equation*}
where $M>1$ is an integer. This gives
\begin{equation*}
 \int_{-T}^{t} A^{1/2} ds \lesssim \dfrac{\sqrt{\eps}^{M-1}}{|t|^{M-1}}.
\end{equation*}
For the other part, $B$, we first need to find a lower bound for the term $(x(s)+\sqrt{\eps}y)^2.$ We write
\begin{align*}
 \left|x(s) + \sqrt{\eps}y \right| & \geq |x(s)| - \sqrt{\eps} |y| \\
 & \geq \dfrac{|x(s)|}{2} + \dfrac{|x(s)|}{2} - \sqrt{\eps}|y| \\
 & \geq \dfrac{|x(s)|}{2} + \dfrac{\xi_0 |s|}{4} - \theta|s|,
\end{align*}
since $|y| \leq \theta|s|/\sqrt{\eps}$. For a small $\theta>0$, we have 
\begin{equation*}
 \left|x(s) + \sqrt{\eps}y \right| \geq \dfrac{|x(s)|}{2},
\end{equation*}
and so 
\begin{align*}
 \int_{-T}^{t} B^{1/2} ds &\lesssim \int_{-T}^{t} \dfrac{\delta}{\dfrac{x(s)^2}{4}+\delta^2}dt \\
 &\lesssim \int_{-T}^{t} \dfrac{\delta}{\dfrac{\xi_0^2s^2}{16}+\delta^2}dt\\
 &\lesssim \int_{-\infty}^{t} \dfrac{1}{\delta}\dfrac{1}{\dfrac{\xi_0^2s^2}{16\delta^2}+1}dt\\
 &\lesssim \dfrac{4}{\xi_0}\int_{-\infty}^{\widetilde{t}} \dfrac{1}{r^2+1}dr,
\end{align*}
with $\widetilde{t}=\xi_0t/(4\delta)$. We finally find
\begin{align*}
  \int_{-T}^{t} B^{1/2} & \lesssim  \dfrac{4}{\xi_0} \left[\arctan\left(\dfrac{\xi_0t}{4\delta}\right)
  +\dfrac{\pi}{2} \right] \\
  & \lesssim -\arctan\left(\dfrac{4\delta}{\xi_0t}\right) \lesssim \dfrac{\sqrt{\eps}}{|t|}+ o(1).
\end{align*}
So, choosing $M$ big enough, we can write
\begin{equation}
 \label{lin2}
 \|\eps \d_x \varphi^\eps. d\chi_\delta\|_{L^1([-T,t];L^2)} \lesssim \dfrac{\sqrt{\eps}}{|t|}.
\end{equation}
\\[2mm]
In order to estimate the derivative of this term, we write
\begin{multline*}
 \left|\eps \d_x \left(\eps \d_x \varphi^\eps. d\chi_\delta\right)\right| \lesssim 
 \eps^{-1/4}\left(|\eps \d_x^2 u.d\chi_\delta| + |\sqrt{\eps}\d_x u.d\chi_\delta |+|u.d\chi_\delta|\right.\\
 \left.+ |\eps\sqrt{\eps}\d_x u.d^2\chi_\delta| + |\eps u.\delta^2 \chi_\delta|\right).
\end{multline*}
Using \eqref{eigenvectors}, and \eqref{lin2}, we argue as in the previous computations, keeping only ``worst'' terms and find
\begin{equation}
 \label{dlin2}
 \|\eps\d_x\left(\eps \d_x \varphi^\eps. d\chi_\delta\right)\|_{L^1([-T,t];L^2)} \lesssim \dfrac{\sqrt{\eps}}{|t|}.
\end{equation}
\noindent
\textbf{$\bullet$ Analysis of $\eps^{-1}\mathcal{R}^\eps_\delta\varphi^\eps \chi_\delta$}
\\[1mm]
\begin{align*}
\left|\mathcal{R}^\eps_\delta(t,x) \right|& =  \left|\dfrac{\lambda^{(3)}(g(t,x))}{6}(x-x(t))^3\right|\\
& \lesssim \dfrac{\delta^2(x-x(t))^4+\delta^2x(t)(x-x(t))^3}{\left[\left(\tau(x-x(t))+x(t)\right)^2+\delta^2\right]^{5/2}}.
\end{align*}
We notice that $\quad \displaystyle \dfrac{1}{\left(\left[\tau(x-x(t))+x(t) \right]^2+\delta^2 \right)^{5/2}}\lesssim
\dfrac{1}{\left(x(t)^2+\delta^2\right)^{5/2}}, \quad $ 
since 
\\ $x(t)=t.\xi_0+\mathcal{O}(t^2)$, for $T$ sufficiently small, 
$\exists C_1, C_2$ independent of $\delta$ (they depend on $T$) such that $$C_1|t|\leq |x(t)|\leq C_2|t|.$$
Then
$$ |\mathcal{R}^\eps_\delta(t,x) | \lesssim \dfrac{\delta^2(x-x(t))^4}{\left(t^2+\delta^2\right)^{5/2}} 
+ \dfrac{\delta^2t(x-x(t))^3}{\left(t^2+\delta^2\right)^{5/2}}.$$
We can write
\\
$\|\eps^{-1} \mathcal{R}^\eps_\delta \varphi^\eps \chi_\delta\|_{L^1([-T,t],L^2)} $
\begin{align*}
& \lesssim  \eps^{-1}\int_{-T}^{t} \left[\int_{x\in\R} \eps^{-1/2}|u(s,...)|^2
\left( \dfrac{\delta^4(x-x(s))^8}{\left(s^2+\delta^2\right)^{5}} 
+ \dfrac{\delta^4s^2(x-x(s))^6}{\left(s^2+\delta^2\right)^{5}} \right) dx\right]^{1/2} ds \\
& \lesssim \eps^{-1}\int_{-T}^{t} \dfrac{\delta^2\eps^2}{(s^2+\delta^2)^{5/2}} ds \times \|y^4 u \|_{L^\infty(L^2)} 
+ \eps^{-1}\int_{-T}^{t} \dfrac{s\delta^2\eps^{3/2}}{(s^2+\delta^2)^{5/2}} ds \times \|y^3 u \|_{L^\infty(L^2)}\\
&\lesssim C(u) \left(I + II\right).
\end{align*}
\begin{align*}
 I \lesssim \int_{-T}^{t}  \dfrac{\delta^2\eps}{(s^2+\delta^2)^{5/2}} ds & \lesssim
 \eps\delta^{-3} \int_{-T}^{t}  \dfrac{1}{\left(\left(\dfrac{s}{\delta}\right)^2+1\right)^{5/2}} ds\\
 & \lesssim \eps \delta^{-2}\int_{-T/\delta}^{t/\delta} \dfrac{1}{\left(r^2+1\right)^{5/2}} dr\\
 & \lesssim \eps \delta^{-2}\int_{-\infty}^{t/\delta} \dfrac{1}{r^5}dr\\
 & \lesssim \dfrac{\eps^2}{t^4},
\end{align*}
and
\begin{align*}
 II \lesssim \int_{-T}^{t}  \dfrac{s\delta^2\eps^{1/2}}{(s^2+\delta^2)^{5/2}} ds & \lesssim
 \eps^{1/2}\delta^{-3} \int_{-T}^{t}  \dfrac{s}{\left(\left(\dfrac{s}{\delta}\right)^2+1\right)^{5/2}} ds\\
 & \lesssim \eps^{1/2} \delta^{-1}\int_{-T/\delta}^{t/\delta} \dfrac{r}{\left(r^2+1\right)^{5/2}}dr\\
 & \lesssim \eps^{1/2} \delta^{-1}\int_{-\infty}^{t/\delta} \dfrac{r}{\left(r^2+1\right)^{5/2}}dr \\
 & \lesssim \dfrac{\eps^{3/2}}{|t|^3},
\end{align*}
which finally gives
\begin{equation}
 \label{lin3}
 \|\eps^{-1} \mathcal{R}^\eps_\delta \varphi^\eps \chi_\delta\|_{L^1([-T,t],L^2)} \lesssim \dfrac{\eps^{3/2}}{|t|^3}.
\end{equation}
\\[2mm]
For its derivative, we write
\begin{align*}\eps \d_x \left(\eps^{-1} \mathcal{R}^\eps_\delta \varphi^\eps \chi_\delta \right) & =
(\d_x \mathcal{R}^\eps_\delta) \varphi^\eps \chi_\delta 
+ \eps^{-1} \mathcal{R}^\eps_\delta \eps \d_x(\varphi^\eps) \chi_\delta
+ \mathcal{R}^\eps_\delta \varphi^\eps d\chi_\delta \\
& = (1) + (2) + (3).
\end{align*}

\noindent
 \textbf{Study of $(1)$:} $\left|\d_x \mathcal{R}^\eps_\delta \right|
 =\left|\dfrac{\lambda^{(3)}(g(t,x))}{2}(x-x(t))^2\right|+\dfrac{\tau}{6}|\lambda^{(4)}(g(t,x))(x-x(t))^2|$. 
 We write $g(t,x) = \tau x + (1-\tau)x(t), \; \tau \in ]0,1[$, and since $x(t)=t.\xi_0+\mathcal{O}(t^2)$, 
 for $T$ sufficiently small, $\exists C_1, C_2$ independent of $\delta$ (they depend on $T$) such that
$$C_1|t|\leq |x(t)|\leq C_2|t|,$$ we can write
 \begin{align*}
\left|\d_x \mathcal{R}^\eps_\delta \right| & \lesssim \dfrac{\delta^2(x-x(t))^5+\delta^2x(t)^2(x-x(t))^3+\delta^4(x-x(t))^3}
{\left[\left(\tau(x-x(t))+x(t)\right)^2+\delta^2\right]^{7/2}}\\
& \lesssim \dfrac{\delta^2(x-x(t))^5+\delta^2 t^2(x-x(t))^3+\delta^4(x-x(t))^3}
{\left[t^2+\delta^2\right]^{7/2}}.
 \end{align*}
Arguing as before, we find
\\
$ \|(\d_x \mathcal{R}^\eps_\delta) \varphi^\eps \chi_\delta\|_{L^1([-T,t],L^2)}$
\begin{align*}
 & \lesssim 
 C \left(\int_{-T}^{t} \dfrac{\delta^2\eps^{5/2}}{\left(s^2+\delta^2\right)^{7/2}} ds
 +\int_{-T}^{t} \dfrac{\delta^2s^2\eps^{3/2}}{\left(s^2+\delta^2\right)^{7/2}}ds 
 +\int_{-T}^{t} \dfrac{\delta^4\eps^{3/2}}{\left(s^2+\delta^2\right)^{7/2}}ds\right) \\
 & \lesssim C \left(I+II+III \right),
\end{align*}
where $C$ is a constant depending on $L^2-$norms of momenta of the profile. We write:
\begin{align*}
 I & \lesssim \int_{-T}^{t} \dfrac{\delta^2\eps^{5/2}}{\left(s^2+\delta^2\right)^{7/2}} ds \\
 & \lesssim \int_{-T/\delta}^{t/\delta} \dfrac{\delta^{-4}\eps^{5/2}}{\left(r^2+1\right)^{7/2}} dr \\
 & \lesssim \delta^{-4}\eps^{5/2}\int_{-\infty}^{t/\delta} \dfrac{1}{r^7} dr
 \lesssim 
 \dfrac{\eps^{7/2}}{t^6};
\end{align*}
and
\begin{align*}
II & \lesssim \int_{-T}^{t} \dfrac{\delta^2s^2\eps^{3/2}}{\left(s^2+\delta^2\right)^{7/2}}ds \\
 & \lesssim \int_{-T/\delta}^{t/\delta} \dfrac{\delta^{-2}r^2\eps^{3/2}}{\left(r^2+1\right)^{7/2}}dr\\
 & \lesssim \delta^{-2}\eps^{3/2}\int_{-\infty}^{t/\delta} \dfrac{r^2}{r^2+1}\times \dfrac{1}{\left(r^2+1\right)^{5/2}}dr\\
 & \lesssim \delta^{-2}\eps^{3/2}\int_{-\infty}^{t/\delta} \dfrac{1}{r^5}dr \lesssim \dfrac{\eps^{5/2}}{t^4}.
\end{align*}
Using same computations as for $I$, we find for $III$
\begin{equation*}
 III \lesssim \int_{-T}^{t} \dfrac{\delta^4\eps^{3/2}}{\left(s^2+\delta^2\right)^{7/2}}ds \lesssim \dfrac{\eps^{7/2}}{t^6}
\end{equation*}
So, we only keep the worst term and find: 
\begin{equation*}
 \|(\d_x \mathcal{R}^\eps_\delta) \varphi^\eps \chi_\delta\|_{L^1([-T,t],L^2)} \lesssim \dfrac{\eps^{5/2}}{t^4}.
\end{equation*}
\\[1mm]
\noindent
\textbf{Study of $(2)$:} We then write 
\begin{align*}
  \left|\eps^{-1}\mathcal{R}^\eps_\delta \eps \d_x(\varphi^\eps) \right| 
  & = \left|\eps^{-1}\mathcal{R}^\eps_\delta \eps^{-1/4}\left(\sqrt{\eps}\d_x u(t,\ldots)+ i \xi(t)u(t,\ldots) \right)\right|\\
  &\lesssim \left|\eps^{-1}\mathcal{R}^\eps_\delta \eps^{-1/4}\sqrt{\eps}\d_x u(t,\ldots)\right| 
  + \left|\eps^{-1}\mathcal{R}^\eps_\delta \varphi^\eps\right|.
\end{align*}
 Using \eqref{lin3}, and since $\|y^\alpha \d_x^\beta u(t)\|_{L^2}$ are bounded for all $\alpha, \beta$, we can easily find
 \begin{align*}
  \|\eps^{-1} \mathcal{R}^\eps_\delta \eps\d_x\varphi^\eps \chi_\delta\|_{L^1([-T,t],L^2)} & \lesssim 
  \sqrt{\eps} \|\eps^{-1} \mathcal{R}^\eps_\delta \eps^{-1/4}\d_x u\|_{L^1([-T,t],L^2)}\\
  & \quad+\|\eps^{-1} \mathcal{R}^\eps_\delta \eps\d_x\varphi^\eps\|_{L^1([-T,t],L^2)}\\
  & \lesssim \dfrac{\eps^{3/2}}{|t|^3}.
 \end{align*}
 \\[1mm]
\noindent
\textbf{Study of $(3)$:} We finally use \eqref{eigenvectors} and see 
 $$\left|\mathcal{R}^\eps_\delta \varphi^\eps d\chi_\delta \right|\lesssim \eps\delta^{-1} 
 \left| \eps^{-1}\mathcal{R}^\eps_\delta \varphi^\eps\right|.$$
 Using \eqref{lin3}, we can write:
 \begin{align*}
  \|\mathcal{R}^\eps_\delta \varphi^\eps d\chi_\delta\|_{L^1([-T,t],L^2)} & \lesssim 
  \eps\delta^{-1} \|\eps^{-1}\mathcal{R}^\eps_\delta \varphi^\eps\|_{L^1([-T,t],L^2)} \\
  & \lesssim \dfrac{\eps^2}{|t|^3}.
 \end{align*}
 So we obtain
 \begin{equation}
 \label{dlin3}
 \|\eps \d_x \left(\eps^{-1} \mathcal{R}^\eps_\delta \varphi^\eps \chi_\delta\right)\|_{L^1([-T,t],L^2)} 
 \lesssim \dfrac{\eps^{3/2}}{|t|^3}.
\end{equation}
 \\[1mm]
\noindent \textbf{Conclusion:} Combining \eqref{lin1},\eqref{lin2} and \eqref{lin3}, and dropping better contributions, we obtain
\begin{equation}
 \label{L}
 \|L^\eps\|_{L^1([-T,t],L^2)}\lesssim \dfrac{\sqrt{\eps}}{|t|},
\end{equation}
and thanks to \eqref{dlin1}, \eqref{dlin2} and \eqref{dlin3}:
\begin{equation}
 \label{dL}
 \|\eps \d_xL^\eps\|_{L^1([-T,t],L^2)}\lesssim \dfrac{\sqrt{\eps}}{|t|},
\end{equation}
and taking $C_L$ as the largest constant in front of these terms, we obtain Lemma~\ref{Leps}.

\subsection{Nonlinear terms and end of the proof}
\label{endproofw}
We recall the equation \eqref{w} satisfied by the remainder $w^\eps$:
\begin{equation*}
\left\lbrace \begin{array}{l} 
               i\eps\d_t w^\eps+ \dfrac{\eps^2}{2}\d_x^2 w^\eps - V_\delta(x) w^\eps = \eps {NL}^\eps+ \eps {L}^\eps; \\
               w^\eps(-T,x)= 0
\end{array}\right. ,
\end{equation*}
where
\begin{align*}
{NL}^\eps & = \kappa \eps^{1/2} \left(|\psi^\eps|^2\psi^\eps - |\varphi^\eps|^2 \varphi^\eps \chi_\delta \right) \\
{L}^\eps & = \eps^{-1}\mathcal{R}^\eps_\delta(t,x)\varphi^\eps \chi_\delta + \eps \d_x \varphi^\eps. d\chi_\delta 
 + \eps/2 \varphi^\eps d^2 \chi_\delta.
\end{align*}
Thanks to the Duhamel formula, we use a standard $L^2-$estimate and find
\begin{equation*}
\|w^\eps(t)\|_{L^2} \leq \|w^\eps(-T)\|_{L^2} + \int_{-T}^{t} \|NL^\eps(s)\|_{L^2}ds
+ \int_{-T}^{t}\|L^\eps(s)\|_{L^2}ds.
\end{equation*}
We now focus on the nonlinear terms; we have the following pointwise estimate:
\begin{equation*}
 \left||\psi^\eps|^2\psi^\eps - |\varphi^\eps|^2 \varphi^\eps \chi_\delta \right| 
 \lesssim \left(|w^\eps|^2+|\varphi^\eps|^2 \right)|w^\eps|,
\end{equation*}
and so
\begin{align*}
 \|NL^\eps\|_{L^2}& \lesssim \sqrt{\eps} \left\|\left(|w^\eps|^2+|\varphi^\eps|^2\right)|w^\eps| \right\|_{L^2} \\
 & \lesssim \sqrt{\eps}\left( \|w^\eps\|_{L^\infty}^2+\|\varphi^\eps\|_{L^\infty}^2\right) \|w^\eps\|_{L^2}.
\end{align*}
We recall that thanks to Corollary \ref{phiLinfty}, we have $$\|\varphi^\eps(t) \|_{L^\infty} \lesssim \eps^{-1/4},$$
and we perform the following bootstrap argument:
\begin{equation}
 \label{bootstrap}
 \|w^\eps(t)\|_{L^\infty}\leq \dfrac{M}{|t|}, \quad M>0.
\end{equation}
So
\begin{equation*}
 \|NL^\eps(s)\|_{L^2} \leq K\left(\dfrac{M^2\sqrt{\eps}}{s^2}+1 \right)\|w^\eps(s)\|_{L^2},
\end{equation*}
and we infer by Lemma \ref{Leps}
\begin{equation*}
 \|w^\eps(t)\|_{L^2} \leq C_L \dfrac{\sqrt{\eps}}{|t|} + 
 K \int_{-T}^{t} \left(\dfrac{M^2\sqrt{\eps}}{s^2}+1 \right)\|w^\eps(s)\|_{L^2} ds.
\end{equation*}
Using Gronwall lemma we obtain
\begin{equation*}
 \|w^\eps(t)\|_{L^2} \leq C_L \dfrac{\sqrt{\eps}}{|t|} - KC_L \int_{-T}^{t} \left(\dfrac{\eps M^2}{s^3}+\dfrac{\sqrt{\eps}}{s} \right)
 \exp\left({K\int_{-T}^{s}\left( 1+ \dfrac{\sqrt{\eps}M^2}{r^2}\right)dr}\right) ds
\end{equation*}
We notice that for $-T\leq t \leq -\Lambda \sqrt{\eps}$
\begin{align*}
 \exp\left({K\int_{-T}^{s}\left( 1+ \dfrac{\sqrt{\eps}M^2}{r^2}\right)dr}\right)& =
 e^{K\left(T-\frac{\sqrt{\eps}M^2}{T}\right)}e^{K\left(s-\frac{\sqrt{\eps}M^2}{s}\right)}\\
 & \leq C_T e^{K\left(\frac{M^2}{\Lambda}-\Lambda\sqrt{\eps}\right)}\leq C_T e^{\frac{KM^2}{\Lambda}},
\end{align*}
and this bound is independent of $\eps$, $t$ and $s$. We then write
\begin{align*}
 \|w^\eps(t)\|_{L^2} & \leq C_L \dfrac{\sqrt{\eps}}{|t|} - 
 C_T e^{\frac{KM^2}{\Lambda}}\; KC_L \int_{-T}^{t} \left(\dfrac{\eps M^2}{s^3}+\dfrac{\sqrt{\eps}}{s} \right)\\
 & \leq C_L \dfrac{\sqrt{\eps}}{|t|} + C_T e^{\frac{KM^2}{\Lambda}}\;
 KC_L \left[\sqrt{\eps} \log(|t|)+ \dfrac{\eps M^2}{2t^2} \right],
\end{align*}
where we have dropped all terms in $T$ since they present a $\mathcal{O}(\sqrt{\eps})$ contribution. 
Since $|t| \geq \Lambda \sqrt{\eps}$ we finally find
\begin{equation}
 \label{wL2}
 \|w^\eps(t)\|_{L^2} \leq C_L \dfrac{\sqrt{\eps}}{|t|} + KC_L\;C_T e^{\frac{KM^2}{\Lambda}} \dfrac{M^2}{2\Lambda}\;
 \dfrac{\sqrt{\eps}}{|t|}.
\end{equation}
\\[2mm]
It remains to check the validity of the bootstrap assumption \eqref{bootstrap}. We study the $\eps-$derivative of $w^\eps$:
\begin{equation*}
\left\lbrace \begin{array}{l} 
               i\eps\d_t (\eps \d_x w^\eps)+ \dfrac{\eps^2}{2}\d_x^2 (\eps \d_x w^\eps) - V_\delta(x)(\eps \d_x w^\eps) = 
               (\eps \d_x V_\delta) w^\eps + \eps^2 \d_x (NL^\eps+ L^\eps); \\
               \eps \d_x w^\eps(-T,x)= 0,
\end{array}\right.
\end{equation*}
A standard $L^2-$ estimate allows us to find
\begin{align*}
 \|\eps \d_x w^\eps(t)\|_{L^2} & \lesssim \|\eps \d_x w^\eps(-T)\|_{L^2} + \int_{-T}^{t} \|\eps \d_x NL^\eps (s)\|_{L^2}ds
 + \int_{-T}^{t} \|\eps \d_x L^\eps (s)\|_{L^2}ds\\
 & \quad + \int_{-T}^{t} \|(\eps \d_x V_\delta) w^\eps(s)\|_{L^2}ds.
\end{align*}
We first notice that since $|\d_x V_\delta(x)| \leq C$ for all $x\in \R$, where $C$ is independent of $\delta$ and $x$; 
we have, using \eqref{wL2} and keeping only the worst contributions:
\begin{equation*}
 \|(\eps \d_x V_\delta) w^\eps\|_{L^1([-T,t],L^2)} \lesssim 
 \eps \sqrt{\eps} \left[C_L+K C_L C_T e^{\frac{KM^2}{\Lambda}}\dfrac{M^2}{2\Lambda}\right] \log |t|,
\end{equation*}
Then, we write
\begin{align*}
 \left|\eps \d_x NL^\eps\right| & \lesssim \sqrt{\eps}
 \left| |w^\eps + \varphi^\eps \chi_\delta|^2 ( \eps \d_x w^\eps + \eps \d_x \varphi^\eps \chi_\delta 
 + \varphi^\eps \eps d\chi_\delta)\right. \\
 & \quad \left.- |\varphi^\eps|^2 \left( \eps \d_x \varphi^\eps \chi_\delta + \varphi^\eps \eps d\chi_\delta\right)\right| \\
 & \lesssim  \sqrt{\eps} \left|w^\eps+\varphi^\eps \chi_\delta \right|^2 |\eps \d_x w^\eps| \\
 & \quad + \sqrt{\eps}\left(|w^\eps+\varphi^\eps\chi_\delta|^2-|\varphi^\eps|^2 \right) 
 \left|\eps \d_x \varphi^\eps \chi_\delta + \varphi^\eps \eps d\chi_\delta \right|.
\end{align*}
The first part will be handled as before. For the second part of this term, we write
\\
 $\left||w^\eps+\varphi^\eps\chi_\delta|^2-|\varphi^\eps|^2 \right| \times 
 \left|\eps \d_x \varphi^\eps \chi_\delta + \varphi^\eps \eps d\chi_\delta \right| $
 \begin{align*}
 & \lesssim 
 \left(|w^\eps|+ |\varphi^\eps| \right)|w^\eps|\; |\eps \d_x \varphi^\eps | 
 + \left(|w^\eps|+ |\varphi^\eps| \right)|w^\eps|\; |\varphi^\eps\eps d\chi_\delta | \\
 &\lesssim \left(|w^\eps|+ |\varphi^\eps| \right)|w^\eps|\; |\eps \d_x \varphi^\eps | 
 + \left(|w^\eps|+ |\varphi^\eps| \right)|w^\eps|\; \left|\varphi^\eps\dfrac{\eps}{\delta} \right|,
\end{align*}
where we have used \eqref{eigenvectors}.
So 
\begin{align*}
 \|\eps \d_x NL^\eps (s)\|_{L^2} & \lesssim 
 \sqrt{\eps}\left( \|w^\eps\|_{L^\infty}^2+\|\varphi^\eps\|_{L^\infty}^2\right)\|\eps \d_x w^\eps\|_{L^2}  \\
 & \quad + \sqrt{\eps} \left( \|w^\eps\|_{L^\infty}+\|\varphi^\eps\|_{L^\infty}\right)
 \|\eps\d_x \varphi^\eps\|_{L^\infty}\|w^\eps\|_{L^2}\\
 & \quad + \sqrt{\eps} \left( \|w^\eps\|_{L^\infty}+\|\varphi^\eps\|_{L^\infty}\right)
 \left\|\dfrac{\eps}{\delta}\varphi^\eps\right\|_{L^\infty}\|w^\eps\|_{L^2}\\
 & \lesssim \left[\dfrac{M^2\sqrt{\eps}}{s^2}+1 \right] \|\eps \d_x w^\eps(s) \|_{L^2} \\
 & \quad + \left(\dfrac{\eps^{3/4}M}{s^2}+\dfrac{\sqrt{\eps}}{|s|} \right)\; C_L
 \left( 1 + K\;C_T e^{\frac{KM^2}{\Lambda}} \dfrac{M^2}{2\Lambda}\right),
\end{align*}
where we have used Corollary \ref{phiLinfty}, \eqref{bootstrap}, and \eqref{wL2}. 
We recall that by Lemma \ref{Leps}, we have
$$\| \eps \d_x L^\eps\|_{L^1([-T,t],L^2)} \leq C_L \dfrac{\sqrt{\eps}}{|t|}, $$
and it is easily seen that this contribution is the most important. So we find, dropping better terms
\begin{equation*}
 \|\eps \d_x w^\eps(t)\|_{L^2} \leq C_L \dfrac{\sqrt{\eps}}{|t|} + 
 \widetilde{K}\int_{-T}^{t} \left(1+ \dfrac{\sqrt{\eps}M^2}{s^2} \right)ds.
\end{equation*}
Applying Gronwall lemma again, and arguing as before, as long as \eqref{bootstrap} holds, we finally get
\begin{equation}
 \label{wH1}
 \|\eps \d_x w^\eps(t)\|_{L^2} \leq C_L \dfrac{\sqrt{\eps}}{|t|} 
 + \widetilde{K}\widetilde{C_T}C_L e^{\frac{\widetilde{K}M^2}{\Lambda}}\; \dfrac{M^2}{2\Lambda}\;\dfrac{\sqrt{\eps}}{|t|}.
\end{equation}
Note that $t$ has to satisfy $\eps^{1/2-\gamma} \lesssim |t|$, for $0<\gamma<1/2$, 
if we want $w^\eps$ to become smaller in $H^1_\eps(\R)$, as $\eps$ tends to zero.
\\
\noindent
We can now use Gagliardo-Nirenberg inequality, and obtain thanks to \eqref{wL2} and \eqref{wH1}
\begin{align*}
 \|w^\eps(t)\|_{L^\infty} & \leq C_{GN} \; \eps^{-1/2}\|w^\eps(t)\|_{L^2}^{1/2}\; \|\eps \d_x w^\eps (t)\|_{L^2}^{1/2}\\
 & \leq \dfrac{C_L}{|t|} \left[C_1+C_1\; e^{C_2\frac{M^2}{\Lambda}}\dfrac{M^2}{\Lambda} \right],
\end{align*}
where $C_1$ and $C_2$ are both nonnegative constants, depending on $K, \widetilde{K}, C_T, \widetilde{C_T}, C_{GN}$ only. 
Taking $$M=2C_1\; C_L, $$ and choosing $\Lambda$ large enough to have
$$e^{C_2\frac{M^2}{\Lambda}}\dfrac{M^2}{\Lambda} \leq \dfrac{1}{2}, $$
we have
\begin{equation*}
 \|w^\eps(t)\|_{L^\infty}  \leq \dfrac{C_L}{|t|}\; \dfrac{3}{2}C_1 \leq \dfrac{3}{4}\;\dfrac{M}{|t|},
\end{equation*}
so the remainder $w^\eps$ is small in the appropriate spaces and the bootstrap assumption \eqref{bootstrap} holds 
for $-T\leq t \leq -c_0 \eps^{1/2-\gamma}$, for any $c_0>0$ and $0<\gamma<1/2$. 
And the proof is complete. 
\\[5mm]
\noindent
\textbf{Conclusion: }Since this approximation is valid until a time $-t^\eps = -c_0 \eps^{1/2-\gamma}$, where $c_0$ is a fixed constant, 
for any $\gamma \in ]0, 1/2[$, we can choose the exponent $\gamma$ as small as we wish. 
In order to study the propagation through the crossing point, 
we can now work on a time interval of the form $[-c_0 \eps^{1/2-\gamma}, c_0 \eps^{1/2-\gamma}]$ without restriction,
as it is done in the linear case, in \cite{HJ98}.

\section{Approximation in the crossing region.}
\label{inner}
This section is devoted to the proof of Theorem \ref{thm:Inner}.
\subsection{Introduction of a new regime}
\label{match}
We recall the free trajectories, introduced in Section \ref{introInnerRegime} 
 $$ \widetilde{\xi}(t)=\xi_0>0 \quad ; \quad  \widetilde{x}(t) = \xi_0t,$$
and the following rescaled variables:
$$\left\lbrace \begin{array}{c}
y = \left(x-t\xi_0\right)/\sqrt{\eps}\\
                s=t/\sqrt{\eps}.
               \end{array}\right. $$
We then recall that for $t\in [-c_0 \eps^{1/2-\gamma}, c_0 \eps^{1/2-\gamma}]$, where $0< \gamma < 1/6$, $v^\eps$ is defined by
$$\psi^\eps(t,x)= \eps^{-1/4}v^\eps \left(\dfrac{t}{\sqrt{\eps}}, \dfrac{x-t\xi_0}{\sqrt{\eps}} \right)
e^{\frac{i\xi_0^2t}{2\eps}+\frac{i\xi_0.(x-t\xi_0)}{\eps}}, $$
and the associated Schr\"odinger equation \eqref{NLS1} is
\begin{equation*}
\left\lbrace \begin{array}{l}
               i\d_s v^\eps - V_{\delta/\sqrt{\eps}}\left(y+s\xi_0\right) v^\eps = 
 -\dfrac{\sqrt{\eps}}{2}\d_y^2 v^\eps + \kappa \sqrt{\eps} |v^\eps|^2v^\eps, \\
              v^\eps\left(-s^\eps,y\right)=v^\eps_{init.}(y),
             \end{array}
 \right.
\end{equation*}
where $s^\eps=c_0\eps^{-\gamma}$ and
$$V_{\delta/\sqrt{\eps}}\left(y+s\xi_0\right)= \begin{pmatrix}
                                                y+s\xi_0 & c \\
                                                c & -(y+s\xi_0)
                                               \end{pmatrix}.$$ 
\\[2mm]
Our aim is to construct an approximation of $v^\eps$ on $[-s^\eps,s^\eps]$. 
In the linear case, the solution is approached by $f$, solution to \eqref{f}-\eqref{dataf}:
\begin{equation*}
\left\lbrace \begin{array}{l}
  i\d_s f - \left(  \begin{array}{cc}
            y+s\xi_0 & c \\
            c & -(y+s\xi_0)
           \end{array}
\right)f = 0 \\
f(-s^\eps,y)=u_\delta(-\sqrt{\eps}s^\eps,y)\left(\begin{array}{c}
                                                                  0 \\1
                                                                 \end{array}
 \right)e^{\frac{i}{\eps}\phi^\eps(y)}.
\end{array}\right.
\end{equation*}
In fact, assuming the approximation has an expansion of the form
$$v^\eps_{app}(s,y) = \sum_{j=0}^{\infty}c_j\eps^{j/2}f_j(s,y), $$
where $c_j$ and $f_j$ are vector-valued constants and functions, we can easily see that the lowest order terms satisfies \eqref{f}. 
Since our power of $\eps$ in front of the nonlinearity is 1/2, whereas there are smaller powers of $\eps$, 
we claim that the nonlinearity does not affect the evolution at leading order. So we introduce $r^\eps$ the difference
$$r^\eps(s,y) = v^\eps(s,y)-f(s,y),$$
which satisfies
\begin{equation*}
   \left\lbrace \begin{array}{l} i\d_s r^\eps + \dfrac{\sqrt{\eps}}{2}\d_y^2 r^\eps - \begin{pmatrix}
                                                        y+s\xi_0 & c \\
                                                        c & -(y+s\xi_0)
                                                       \end{pmatrix}r^\eps = \dfrac{-\sqrt{\eps}}{2}\d_y^2 f
                                                       +\sqrt{\eps}|r^\eps+f|^2(r^\eps+f)
                                                       \\
                                                       \\
                                                       r^\eps(-s^\eps,y)=v^\eps_{init.}(y)-f(-s^\eps,y).
  \end{array}\right.\end{equation*}
 \noindent                   
The aim is to prove that this term is small in $L^2$ in the semiclassical limit. We first need to study $v^\eps_{init.}$ 
in order to match it with the approximation $\varphi^\eps$ at the initial time with the new rescaled variables. 
Then we need some results on the behaviour of $f$, and to prove the validity of it as an approximation at leading order of $v^\eps$, 
which will allow us to deduce the behaviour of the exact solution on the time interval that we consider.                     
\\
This section is divided into three parts:
\begin{enumerate}
 \item Analysis of the initial data
 \item Analysis of $f$
 \item Proof of Theorem \ref{thm:Inner}
\end{enumerate}

\subsection{Initial data and matching}  
The aim of this section is to prove the following proposition:
\begin{proposition}
 \label{prop:match}
 Let $v^\eps$ satisfying \eqref{rescaled}. Then $v^\eps$ solves \eqref{NLS1}
 with  $$v^\eps_{init.}\left (y\right) = u_\delta\left(-c_0\eps^{1/2-\gamma},y\right)e^{\frac{i\phi^\eps(y)}{\eps}}
 \left( \begin{array}{c}
         0 \\ 1
        \end{array}\right) + W^\eps(y), $$
where $u_\delta$ is the profile, solution to \eqref{profile}, the phase is the difference between both phases 
\begin{align}
\label{phasephi}
\dfrac{i}{\eps}\phi^\eps(y)   &=\dfrac{i}{\eps}\left(S(-\sqrt{\eps}s^\eps)+\dfrac{\sqrt{\eps}}{2} \xi_0^2 s^\eps\right) \\
\notag
 & \quad +\dfrac{i}{\eps}\left[\xi(-\sqrt{\eps}s^\eps).
 (\sqrt{\eps}y-\sqrt{\eps}s^\eps\xi_0-x(-s^\eps\sqrt{\eps}))-\xi_0\sqrt{\eps}y\right],
 \end{align}
and where $W^\eps$ satisfies
$$\|W^\eps \|_{L^2}+\|\sqrt{\eps}\d_y W^\eps \|_{L^2} \lesssim \eps^{\gamma}.$$
\end{proposition}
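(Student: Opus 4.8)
The plan is to evaluate the outer approximation of Theorem~\ref{thm:IncomingOuter} at the matching time $t=-t^\eps$, with $t^\eps:=c_0\eps^{1/2-\gamma}=\sqrt\eps\,s^\eps$, and to rewrite it in the rescaled variables of~\eqref{rescaled}. Imposing~\eqref{rescaled} at $t=-t^\eps$ and solving for $v^\eps(-s^\eps,\cdot)$ produces
\begin{equation*}
 v^\eps_{init.}(y)=\eps^{1/4}\,\psi^\eps\!\left(-t^\eps,\ \sqrt\eps\,y-t^\eps\xi_0\right)\,
 e^{\frac{i}{\eps}\left(\frac{\xi_0^2 t^\eps}{2}-\xi_0\sqrt\eps\,y\right)}.
\end{equation*}
By Theorem~\ref{thm:IncomingOuter}, $\psi^\eps(-t^\eps)=\varphi^\eps(-t^\eps)\,\chi_\delta^+ + w^\eps(-t^\eps)$ with $\|w^\eps(-t^\eps)\|_{L^2}+\|\eps\d_x w^\eps(-t^\eps)\|_{L^2}\lesssim\eps^\gamma$. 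Accordingly I split $v^\eps_{init.}=\mathcal V_1+\mathcal V_2$, the two summands coming respectively from $\varphi^\eps(-t^\eps)\chi_\delta^+$ and from $w^\eps(-t^\eps)$. The conclusion will follow once I show that $\mathcal V_1$ coincides with $u_\delta(-t^\eps,y)\,(0,1)^\top\,e^{i\phi^\eps(y)/\eps}$ up to $\mathcal O(\eps^\gamma)$ in $L^2$ and in the $\sqrt\eps\,\d_y$ norm, while $\mathcal V_2$ is itself $\mathcal O(\eps^\gamma)$ in those norms; $W^\eps$ is then defined as $\mathcal V_2$ plus the $\mathcal V_1$-error.

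\textbf{The remainder piece $\mathcal V_2$.}
Up to the modulus-one factor above, the map $h(x)\mapsto\eps^{1/4}h(\sqrt\eps\,y-t^\eps\xi_0)$ is an isometry $L^2(\R_x)\to L^2(\R_y)$, so $\|\mathcal V_2\|_{L^2}=\|w^\eps(-t^\eps)\|_{L^2}\lesssim\eps^\gamma$. For the weighted norm, $\sqrt\eps\,\d_y$ either falls on $w^\eps$, producing in the new variables $\eps\d_x w^\eps(-t^\eps)$, again isometrically and hence $\lesssim\eps^\gamma$, or on the phase, producing the bounded factor $\sqrt\eps\cdot(\xi_0/\sqrt\eps)=\xi_0$ times $w^\eps(-t^\eps)$; in both cases the contribution is $\lesssim\eps^\gamma$.

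\textbf{The principal piece $\mathcal V_1$.}
Insert the explicit form~\eqref{phi} of $\varphi^\eps$ at $t=-t^\eps$, $x=\sqrt\eps\,y-t^\eps\xi_0$. Three things happen. First, a direct substitution shows that the total phase carried by this term in the variable $y$ is exactly $e^{i\phi^\eps(y)/\eps}$ --- this is precisely what~\eqref{phasephi} encodes, via $\sqrt\eps\,s^\eps=t^\eps$ and $(\sqrt\eps\,y-t^\eps\xi_0)+t^\eps\xi_0=\sqrt\eps\,y$. Second, the argument of $u_\delta$ becomes $y-\big(t^\eps\xi_0+x^+(-t^\eps)\big)/\sqrt\eps$; since $x^+(t)=\xi_0 t+\mathcal O(t^2)$ near $t=0$ (cf.\ Section~\ref{introInnerRegime}), this argument shift is $\mathcal O\big((t^\eps)^2/\sqrt\eps\big)=\mathcal O(\eps^{1/2-2\gamma})$, and \emph{this is exactly where the hypothesis $\gamma<1/6$ is used}, since $\eps^{1/2-2\gamma}=o(\eps^\gamma)$ iff $\gamma<1/6$. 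Using $\|\d_y u_\delta(-t^\eps)\|_{L^2}+\|\d_y^2 u_\delta(-t^\eps)\|_{L^2}\le C$ from Theorem~\ref{thm:profile}, one replaces $u_\delta$ at the shifted argument by $u_\delta(-t^\eps,y)$ at the cost of an $o(\eps^\gamma)$ error in $L^2$, and in the $\sqrt\eps\,\d_y$ norm after accounting for the factor $|\d_y(\phi^\eps(y)/\eps)|=|\xi^+(-t^\eps)-\xi_0|/\sqrt\eps\lesssim\eps^{-\gamma}$ produced when the derivative hits the phase. Third, the eigenvector $\chi_\delta^+(\sqrt\eps\,y-t^\eps\xi_0)$ must be replaced by $(0,1)^\top$: split $\R_y$ into $\{|y|\le\tfrac{c_0\xi_0}{2}\eps^{-\gamma}\}$, on which $x:=\sqrt\eps\,y-t^\eps\xi_0<0$ satisfies $|x|\ge t^\eps\xi_0/2$ so that, with $\delta=c\sqrt\eps$ and~\eqref{theta12}, $|\Theta_1^+(x)|\lesssim\delta/|x|\lesssim\eps^\gamma$ and $|\Theta_2^+(x)-1|\lesssim(\Theta_1^+(x))^2\lesssim\eps^{2\gamma}$, hence $\big|\chi_\delta^+(x)-(0,1)^\top\big|\lesssim\eps^\gamma$ there; and the complement $\{|y|>R\}$, $R:=\tfrac{c_0\xi_0}{2}\eps^{-\gamma}$, on which this difference is $\le 2$ but $\|u_\delta(-t^\eps)\,\mathbf 1_{\{|y|>R\}}\|_{L^2}\le R^{-N}\|y^N u_\delta(-t^\eps)\|_{L^2}\lesssim\eps^{N\gamma}$ for every $N$, again by Theorem~\ref{thm:profile}. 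This gives $\big\|\big(\chi_\delta^+(\sqrt\eps\cdot-t^\eps\xi_0)-(0,1)^\top\big)u_\delta(-t^\eps,\cdot)\big\|_{L^2}\lesssim\eps^\gamma$, and its $\sqrt\eps\,\d_y$ analogue follows by the same dichotomy, using in addition the refined estimates~\eqref{eigenvectors1}--\eqref{eigenvectors2} to bound $\sqrt\eps\,\d_y\big(\chi_\delta^+(\sqrt\eps\cdot-t^\eps\xi_0)\big)$. Collecting the three points, $\mathcal V_1=u_\delta(-t^\eps,y)\,(0,1)^\top\,e^{i\phi^\eps(y)/\eps}+\mathcal O(\eps^\gamma)$ in both norms.

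\textbf{Conclusion and main difficulty.}
Defining $W^\eps$ as $\mathcal V_2$ plus the collected $\mathcal O(\eps^\gamma)$ error terms yields $\|W^\eps\|_{L^2}+\|\sqrt\eps\,\d_y W^\eps\|_{L^2}\lesssim\eps^\gamma$, which is the assertion. The main obstacle is the third point of the previous paragraph: reconciling the $\eps$-dependent eigenvector, evaluated at a point whose ratio to $\delta$ diverges like $\eps^{-\gamma}$, with the fixed vector $(0,1)^\top$, uniformly when tested against the localized but not compactly supported profile $u_\delta$ --- this requires, simultaneously, the closed forms~\eqref{theta12}, the decay estimates~\eqref{eigenvectors1}--\eqref{eigenvectors2}, and the uniform moment and derivative bounds on $u_\delta$ up to $t=0$ provided by Theorem~\ref{thm:profile}. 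A secondary delicate point is making the classical-trajectory Taylor remainder in the second point genuinely negligible, which is precisely what the restriction $\gamma<1/6$ secures.
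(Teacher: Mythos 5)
Your proposal is correct and follows essentially the same route as the paper's proof: both rewrite the outer approximation at $t=-t^\eps$ in the rescaled variables, split $v^\eps_{init.}$ into the part coming from $\varphi^\eps\chi_\delta^+$ and the part coming from $w^\eps$ (which the paper calls $\omega^\eps$), and estimate the former by controlling three errors --- the $\mathcal O(\eps^{1/2-2\gamma})$ argument shift of $u_\delta$ (the paper's $\theta(\eps)$, treated via the same mean-value device), the $y$-derivative of the phase $\phi^\eps/\eps$ (giving $\mathcal O(\eps^{-\gamma})$), and the replacement of $\chi_\delta^+(\sqrt\eps\,y-t^\eps\xi_0)$ by $(0,1)^\top$ via the same dichotomy $|y|\lessgtr C\eps^{-\gamma}$ using the closed-form arctan expressions on one side and the moment bounds of Theorem~\ref{thm:profile} on the other. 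Your identification of where $\gamma<1/6$ enters and of the eigenvector reconciliation as the delicate step both match the paper's Steps $(1)$--$(3)$.
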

\begin{proof}
We first notice that thanks to Theorem \ref{thm:IncomingOuter}, we have for $\gamma \in ]0,1/6[$,
$$\psi^\eps(-t^\eps,x)=\eps^{-1/4}u_\delta\left(-t^\eps, \dfrac{x-x(-t^\eps)}{\sqrt{\eps}}\right)
e^{\frac{i}{\eps}\Phi^\eps(-t^\eps,x)}\chi_\delta(x) + w^\eps(-t^\eps,x),$$                                               
where $t^\eps=c_0\eps^{1/2-\gamma}$, and
$\Phi^\eps(-t^\eps,x)=S^+(-t^\eps)+\xi^+(-t^\eps).(x-x^+(-t^\eps))$. But, since
$$\psi^\eps(-t^\eps,x)=\eps^{-1/4}v^\eps\left(\dfrac{-t^\eps}{\sqrt{\eps}},\dfrac{x+t^\eps\xi_0}{\sqrt{\eps}}\right)
e^{\frac{i}{\eps}\widetilde{\Phi}^\eps(-t^\eps,x)}, $$
with $\widetilde{\Phi}^\eps(t,x) = \xi_0^2 t/2+\xi_0(x-t\xi_0)$, 
we have to match these two expressions of the same term. 
\\[2mm]
In the rescaled variables $(s,y)$, we have for $-s^\eps=-c_0\eps^{-\gamma}$:
\begin{equation*}
v^\eps(-s^\eps,y)=u_\delta \left(-\sqrt{\eps}s^\eps,y+\theta(\eps) \right)\chi_\delta(\sqrt{\eps}(y-s^\eps\xi_0))
e^{\frac{i}{\eps}\phi^\eps(y)} + \omega^\eps\left(-s^\eps,y\right), 
\end{equation*}
where 
\begin{itemize}
\item $\theta(\eps)=-s^\eps\xi_0-\dfrac{x(-s^\eps\sqrt{\eps})}{\sqrt{\eps}}$,
\item the phase is given by \eqref{phasephi}:
\begin{align*}
\dfrac{i}{\eps}\phi^\eps(y) &= \dfrac{i}{\eps}(\Phi^\eps-\widetilde{\Phi}^\eps)
\left(-\sqrt{\eps}s^\eps,\sqrt{\eps}(y-s^\eps\xi_0)\right) 
\\
  &=\dfrac{i}{\eps}\left(S(-\sqrt{\eps}s^\eps)+\dfrac{\sqrt{\eps}}{2}\xi_0^2 s^\eps\right) \\
 & \quad +\dfrac{i}{\eps}\left[\xi(-\sqrt{\eps}s^\eps).
 (\sqrt{\eps}y-\sqrt{\eps}s^\eps\xi_0-x(-s^\eps\sqrt{\eps}))-\xi_0\sqrt{\eps}y\right]
 \end{align*}
 \item $\omega^\eps$ is obtained writing
 $$w^\eps(t,x)=\eps^{-1/4}\omega^\eps\left(\dfrac{t}{\sqrt{\eps}},\dfrac{x-t\xi_0}{\sqrt{\eps}} \right)
 e^{\frac{i}{\eps}\widetilde{\Phi}(t,x)}, $$ where $w^\eps$, 
 the difference between the exact solution and the approximation is introduced in Section \ref{subsec:profile}, by \eqref{w}.
\end{itemize}
Thus we can write
\begin{equation}
 \label{datav}
 v^\eps(-c_0\eps^{-\gamma},y) = u_\delta (-c_0\eps^{1/2-\gamma},y)e^{\frac{i}{\eps}\phi^\eps(y)}\left(\begin{array}{c}
                                                                                                   0\\1
                                                                                                  \end{array}\right)
+W^\eps(y),
\end{equation}
where
\begin{align*}
& W^\eps(y) = \omega^\eps(-c_0\eps^{-\gamma},y)\\
& \quad + \theta(\eps)\left(\int_{0}^{1}\d_y u_\delta (-c_0\eps^{1/2-\gamma},\zeta \theta(\eps)+y)\;d\zeta \right) 
e^{\frac{i}{\eps}\phi^\eps(y)} \; 
\chi_\delta^+(\sqrt{\eps}(y-c_0\xi_0\eps^{-\gamma}))\\
& \quad +u_\delta(-c_0\eps^{1/2-\gamma},y)e^{\frac{i}{\eps}\phi^\eps(y)}
\left[\chi_\delta^+(\sqrt{\eps}(y-c_0\eps^{-\gamma}))-\left(\begin{array}{c}
 0\\1
\end{array}\right)\right].
\end{align*}
Our next steps are
\begin{description}
 \item[Step $(1)$]To check that $\theta(\eps)$ is small. 
 \item[Step $(2)$]To prove that $\phi^\eps$, the difference between both phases is not troublesome.
 \item[Step $(3)$]To prove that the three last terms of \eqref{datav} are small in the semi-classical limit, 
 and in the appropriate space.
\end{description}
\vspace{2mm}
\textbf{Step $(1)$: Analysis of $\theta(\eps)$.} Using \eqref{trajectories}, we compute higher derivatives of the classical trajectories and the action, and find 
those we will need:
\begin{align*}
   & \dot{x}=\xi;\quad \ddot{x}=\dot{\xi}= \dfrac{-x}{\sqrt{x^2+\delta^2}}=\mathcal{O}(1) \\
   &x^{(3)}= \ddot{\xi}= \dfrac{-\dot{x}\delta^2}{(x^2+\delta^2)^{3/2}}=\mathcal{O}\left(\dfrac{1}{\delta}\right)\\
   &\dot{S}=\dfrac{|\xi|^2}{2}-\sqrt{x^2+\delta^2};  \quad \ddot{S}= 2\xi \dot{\xi}=\mathcal{O}(1) \\
   & S^{(3)}=2\dot{\xi}^2+2\xi\ddot{\xi} = \mathcal{O}\left(\dfrac{1}{\delta}\right),
  \end{align*}
and so
\begin{equation}
\label{taylor}
\begin{array}{l|c|l}
 x(0)=0 & \xi(0)=\xi_0 & S(0)=0 \\
 \dot{x}(0)=\xi_0 & \dot{\xi}(0)=0 & \d_t S(0) = \dfrac{\xi_0^2}{2}-\delta \\
 x^{(2)}(0)=0 & \xi_0^{(2)}= \dfrac{-\xi_0}{\delta} & \d_t^2 S(0)=0 \\
 x^{(3)}(0)=\dfrac{-\xi_0}{\delta}& \xi^{(3)}(0)=x^{(4)}(0)=0 & \d_t^3 S(0)= \dfrac{-2\xi_0^2}{\delta}.
\end{array}.\end{equation}
An easy computation, using Taylor expansions of the classical trajectories around $-\sqrt{\eps}s^\eps$, which is small, 
allows us to write for $\tau=\tau(\eps)\in ]0,1[$:
\begin{equation}
\label{theta}
\theta(\eps) = -s^\eps\xi_0 -\dfrac{1}{\sqrt{\eps}} \left[-\xi_0 s^\eps\sqrt{\eps}
+ \dot{\xi}(-\tau\sqrt{\eps}s^\eps).\dfrac{\eps}{2}(s^\eps)^2\right] = \mathcal{O}(\eps^{\frac{1}{2}-2\gamma}),
\end{equation}
where we choose $\gamma$ small enough to have $1/2-2\gamma>0$. 
\\[2mm]
\textbf{Step $(2)$: Analysis of the phase $\phi^\eps(y)$.} We prove the following Lemma:
\begin{lemma}
 We consider $\phi^\eps$ the difference between both phases, given by \eqref{phasephi}. Then 
 \begin{equation}
  \label{phase}
  \dfrac{i}{\eps}\phi^\eps(y) = -ic_0\eps^{-\gamma}\dot{\xi}(-c_0\eps^{1/2-\gamma}\tau').y  + \mathcal{O}(\eps^{-3\gamma})
  = \mathcal{O}(\eps^{-\gamma}).y+\mathcal{O}(\eps^{-3\gamma}),
 \end{equation}
where the last term does not depend on $y$, and where $\dot{\xi}(-c_0\eps^{1/2-\gamma}\tau')=\mathcal{O}(1)$ for a 
$\tau'=\tau'(\eps)\in ]0,1[$.
\\
We deduce that 
\begin{equation}
\label{phase2}
\sqrt{\eps} \dfrac{d}{dy} \left(\dfrac{i}{\eps}\phi^\eps(y)\right) = \mathcal{O}(\eps^{\frac{1}{2}-\gamma}).
\end{equation}
\end{lemma}
\begin{proof}
We need to estimate the difference of both phases $\phi^\eps$ at time $s=-s^\eps$. This term will depend on 
$\eps$ and $y$ and will induce a loss of $\eps$ at each derivative for the initial data. In fact, we compute:
\begin{align*}\dfrac{i}{\eps}\phi^\eps(y)
  &=\dfrac{i}{\eps}\left(S(-\sqrt{\eps}s^\eps)-\xi_0^2 (-s^\eps)\sqrt{\eps}/2\right) \\
 & \quad +\dfrac{i}{\eps}\left[\xi(-\sqrt{\eps}s^\eps).
 (\sqrt{\eps}y-\sqrt{\eps}s^\eps\xi_0-x(-s^\eps\sqrt{\eps}))-\xi_0\sqrt{\eps}y\right]\\
 & = (i)+(ii).
 \end{align*}
We write, using Taylor expansion of the classical action, and \eqref{taylor}
\begin{align*}
 (i) & = \dfrac{i}{\eps}\left(S(-\sqrt{\eps}s^\eps)-\xi_0^2 (-s^\eps)\sqrt{\eps}/2\right) \\
 & = \dfrac{i\delta s^\eps\sqrt{\eps}}{\eps}+\dfrac{ig^\eps(-s^\eps)}{\eps}\\
 & = ic \; c_0 \eps^{-\gamma}+ \dfrac{ig^\eps(-s^\eps)}{\eps},
\end{align*}
where $$g^\eps(s)=\dfrac{S^{(3)}(\zeta)}{6}.(s\sqrt{\eps})^3, \quad \zeta = \tau s\sqrt{\eps}; \; \tau=\tau(\eps)\in]0,1[, \quad \textrm{and}\;
\left|S^{(3)}(\zeta)\right|\lesssim \dfrac{1}{\delta}.$$
We deduce
\begin{equation}
 \label{i}
 (i) = \mathcal{O}(\eps^{-3\gamma}).
\end{equation}
We then study the second term:
\begin{align}
\notag
 (ii) &= \dfrac{i}{\eps}\left[\xi(-\sqrt{\eps}s^\eps).
 (\sqrt{\eps}y-\sqrt{\eps}s^\eps\xi_0-x(-s^\eps\sqrt{\eps}))-\xi_0\sqrt{\eps}y\right]\\
 \notag
 &= \dfrac{i}{\sqrt{\eps}}\left(\xi(-s^\eps\sqrt{\eps})-\xi_0 \right).y+ \dfrac{i}{\eps}\xi(-s^\eps\sqrt{\eps}).
 \left(-s^\eps \sqrt{\eps}\xi_0-x(-s^\eps \sqrt{\eps}) \right)\\
 \notag
 & = \dfrac{i}{\sqrt{\eps}} (-s^\eps\sqrt{\eps}).\dot{\xi}(-s^\eps\sqrt{\eps}\tau').y 
 + \dfrac{i}{\eps}
 \left[\xi_0+\dot{\xi}(-s^\eps\sqrt{\eps}\tau')\right].\dot{\xi}(-s^\eps\sqrt{\eps}\tau'').\dfrac{(s^\eps)^2.\eps}{2}\\
 \label{ii}
 & = -ic_0 \eps^{-\gamma}\dot{\xi}(-s^\eps\sqrt{\eps}\tau').y +\mathcal{O}(\eps^{-2\gamma}) 
 = \mathcal{O}(\eps^{-\gamma}).y+\mathcal{O}(\eps^{-2\gamma}),
\end{align}
where $0<\tau',\tau''<1$ depend on $\eps$ and where we have used the estimates on the classical trajectories. 
Using \eqref{i} and \eqref{ii}, we can write \eqref{phase}.
\\
For \eqref{phase2}, we notice that only one part of the phase depends on $y$, and so 
the $\sqrt{\eps}-$derivative of the phase gives 
$$-\sqrt{\eps} \;i c_0 \; \eps^{-\gamma} \dot{\xi}(-s^\eps\sqrt{\eps}\tau') = \mathcal{O}(\eps^{\frac{1}{2}-\gamma}).$$
\end{proof}
\vspace{2mm}
\noindent
\textbf{Step $(3)$: Analysis of $W^\eps$.} 
\\
$(i)$ We first study $\omega^\eps(-c_0\eps^{-\gamma},y)$: 
it will allow us to see in which weighted space we need to estimate the derivative.
\\
We recall
$$\omega^\eps(s,y)=\eps^{1/4}w^\eps\left(s\sqrt{\eps},\sqrt{\eps}(y+s\xi_0) \right)
e^{\frac{-i\xi_0^2 s}{\sqrt{\eps}}-\frac{i\xi_0.y}{\sqrt{\eps}}},$$
where $w^\eps$ is defined in Section \ref{subsec:profile}, by \eqref{w}, for $s\sqrt{\eps} \in [-T,-c_0\eps^{1/2-\gamma}]$. 
Thanks to Theorem \ref{thm:IncomingOuter}, we deduce
\begin{itemize}
 \item $\|\omega^\eps(-s^\eps)\|_{L^2}=\|w^\eps(-s^\eps)\|_{L^2}\lesssim \eps^{\gamma}$,
 \item Since $\sqrt{\eps}\d_y \omega^\eps = \eps^{1/4}
 \left[\eps\d_x w^\eps+i\xi_0 w^\eps \right]e^{\frac{-i\xi_0^2 s}{\sqrt{\eps}}-\frac{i\xi_0.y}{\sqrt{\eps}}}$ we infer
 $$\|\sqrt{\eps}\d_y\omega^\eps(-s^\eps)\|_{L^2}\lesssim \|\eps \d_x w^\eps(-s^\eps)\|_{L^2}
 +\|w^\eps(-s^\eps)\|_{L^2}\lesssim \eps^{\gamma}.$$
\end{itemize}
$(ii)$ We then study the second term:
\begin{itemize}
\item We first notice that for the $L^2-$norm of this term, we have 
\\
$\displaystyle \left(\int_\R |\theta(\eps)|^2 \left|\int_{0}^{1} \d_y u_\delta (-t^\eps,y+\zeta \theta(\eps)) \; d\zeta \right|^2 dy\right)$
\begin{align*}
 &  \leq |\theta(\eps)|^2 \left(\int_\R \int_0^1 |\d_y u_\delta (-t^\eps,y+\zeta \theta(\eps)) |^2 d\zeta \; dy \right)\\
 &\leq |\theta(\eps)|^2 \|\d_y u_\delta (-t^\eps) \|_{L^2}^2
\end{align*}
 and using \eqref{theta} and Theorem \ref{thm:profile}, we deduce that the second term of $W^\eps$ 
 is an $\mathcal{O}(\eps^{1/2-2\gamma})$ in $L^2$.
 \vspace{2mm}
\item Then for the derivative, we need to study the following terms
\begin{align*}
&\theta(\eps)\sqrt{\eps}\d_y \left(\int_{0}^{1}\d_y u_\delta (-c_0\eps^{1/2-\gamma},\zeta \theta(\eps)+y)\;d\zeta \right)
e^{\frac{i}{\eps}\phi^\eps(y)} \; 
\chi_\delta(\sqrt{\eps}(y-c_0\eps^{-\gamma}))\\
& \quad  + \theta(\eps)\left(\int_{0}^{1}\d_y u_\delta (-c_0\eps^{1/2-\gamma},\zeta \theta(\eps)+y)\;d\zeta \right)e^{\frac{i}{\eps}\phi^\eps(y)}
\eps \; d\chi_\delta(\ldots)\\
& \quad  + \theta(\eps)\left(\int_{0}^{1}\d_y u_\delta (-c_0\eps^{1/2-\gamma},\zeta \theta(\eps)+y)\;d\zeta \right)e^{\frac{i}{\eps}\phi^\eps(y)}
\dfrac{i}{\sqrt{\eps}}\d_y\phi^\eps(y)\chi_\delta(\ldots).
\end{align*}
Using similar arguments as before, we find that the first term presents a  $\mathcal{O}(\eps^{1-2\gamma})$ contribution in $L^2$.
\\
We then use \eqref{eigenvectors} for the second term and obtain that it is a $\mathcal{O}(\eps^{1-2\gamma})$ contribution too.
\\
We finally use \eqref{phase2} for the third term and deduce that the $\sqrt{\eps}-$derivative of the second term of $W^\eps$ 
presents a $\mathcal{O}(\eps^{1-3\gamma})$ contribution in $L^2$.
\end{itemize}
\vspace{2mm}
$(iii)$ For the last term, we write
\begin{multline*}
 \left\|u_\delta(-c_0\eps^{1/2-\gamma},y)e^{\frac{i}{\eps}\phi^\eps(y)}
\left[\chi_\delta(\sqrt{\eps}(y-\xi_0c_0\eps^{-\gamma}))-\left(\begin{array}{c}
 0\\1
\end{array}\right)\right] \right\|_{L^2}^2 =\\
\int_{|y|\leq \theta c_0 \xi_0 \eps^{-\gamma}} \ldots \quad + \int_{|y|\geq \theta c_0 \xi_0 \eps^{-\gamma}} \ldots\quad ,
\end{multline*}
where $\theta <1$ is independent of $\eps$. For the first integral, 
we first recall that in the new variables the eigenvector $\chi_\delta(x)$, associated with $\lambda_\delta(x) =\sqrt{x^2+\delta^2}$ 
is given by (Section \ref{eigen}):
$$\chi_\delta(\sqrt{\eps}(y+s\xi_0))=\left(\begin{array}{c}
                                              -\sin \left(\dfrac{1}{2}{\arctan\left(\dfrac{c}{y+s\xi_0}\right)}\right)\\
                                               \cos \left(\dfrac{1}{2}{\arctan\left(\dfrac{c}{y+s\xi_0}\right)} \right)
                                             \end{array}
 \right) ,$$
for $s \in [-T/\sqrt{\eps}, -c_0\eps^{-\gamma}]$. For $s=-c_0\eps^{-\gamma}$, we compute
$$\chi_\delta(\sqrt{\eps}(y-c_0\xi_0\eps^{-\gamma}))-\left(\begin{array}{c}
                                                       0\\1
                                                       \end{array}\right) = 
                                  \left(\begin{array}{c}
                                 -\sin \left(\dfrac{1}{2}{\arctan\left(\dfrac{c}{y-c_0\xi_0\eps^{-\gamma}}\right)}\right)\\
                                  \cos \left(\dfrac{1}{2}{\arctan\left(\dfrac{c}{y-c_0\xi_0\eps^{-\gamma}}\right)} \right)-1
                                  \end{array}\right),$$
and since $|y|\leq \theta c_0 \xi_0 \eps^{-\gamma}$, $\theta <1$, we have
$$|y- c_0 \xi_0 \eps^{-\gamma} | \geq c_0 \xi_0 \eps^{-\gamma}-|y| \geq (1-\theta)c_0 \xi_0 \eps^{-\gamma},$$
so $$\dfrac{c}{y-c_0\xi_0\eps^{-\gamma}} \leq \dfrac{c}{(1-\theta)c_0\xi_0}\eps^{\gamma}\lesssim \eps^{\gamma}.$$
Then, using Taylor expansions, we have
$$\arctan\left(\dfrac{c}{y-c_0\xi_0\eps^{-\gamma}}\right) = \dfrac{c}{y-c_0\xi_0\eps^{-\gamma}}+ 
\mathcal{O}\left(\left(\dfrac{c}{y-c_0\xi_0\eps^{-\gamma}}\right)^3\right), $$
which gives
$$\sin \left(\dfrac{1}{2}{\arctan\left(\dfrac{c}{y-c_0\xi_0\eps^{-\gamma}}\right)}\right) = 
\dfrac{c}{y-c_0\xi_0\eps^{-\gamma}}+ 
\mathcal{O}\left(\left(\dfrac{c}{y-c_0\xi_0\eps^{-\gamma}}\right)^3\right) = \mathcal{O}(\eps^\gamma),
$$
and
$$\cos \left(\dfrac{1}{2}{\arctan\left(\dfrac{c}{y-c_0\xi_0\eps^{-\gamma}}\right)} \right) = 
1 + \mathcal{O}\left(\left(\dfrac{c}{y-c_0\xi_0\eps^{-\gamma}}\right)^2\right)= 1+ \mathcal{O}(\eps^{2\gamma}).$$
Finally, for $|y|\leq \theta c_0 \xi_0 \eps^{-\gamma}$, we can write
$$\left|\chi_\delta(\sqrt{\eps}(y-c_0\xi_0\eps^{-\gamma}))-\left(\begin{array}{c}
                                                       0\\1
                                                       \end{array}\right)\right| = \mathcal{O}(\eps^{\gamma}),$$
and so, thanks to Theorem \ref{thm:profile}
\begin{equation*}
 \left(\int_{|y|\leq \theta c_0 \xi_0 \eps^{-\gamma}}
 \left|u_\delta(-c_0\eps^{1/2-\gamma},y)\right|^2
\left|\left(\chi_\delta(\sqrt{\eps}(y-c_0\eps^{-\gamma}))-\left(\begin{array}{c}
 0\\1
\end{array}\right)\right)\right|^2 \right)^{1/2} \lesssim \eps^{\gamma}
\end{equation*}
For the second integral, we notice that, since $|y|\geq \theta c_0 \xi_0 \eps^{-\gamma}$, we have
$$\left|\dfrac{y}{c_0\xi_0\eps^{-\gamma}}\right|\geq \theta, $$ and so
$$\left(\int_{|y|\geq \theta c_0 \xi_0 \eps^{-\gamma}}
 \left|u_\delta(-c_0\eps^{1/2-\gamma},y)\right|^2 \times \left|\dfrac{y}{c_0\xi_0\eps^{-\gamma}}\right|^2 \times
 \left|\dfrac{c_0\xi_0\eps^{-\gamma}}{y}\right|^2
 \right)^{1/2}$$
\begin{align*}
 & \leq  
 \dfrac{1}{\theta}
  \left(\int_{|y|\geq \theta c_0 \xi_0 \eps^{-\gamma}}
 \left|u_\delta(-c_0\eps^{1/2-\gamma},y)\right|^2 \times \left|\dfrac{y}{c_0\xi_0\eps^{-\gamma}}\right|^2
 \right)^{1/2} \\
 &\lesssim \eps^{\gamma} \|yu_\delta(-c_0\eps^{1/2-\gamma})\|_{L^2}\lesssim \eps^{\gamma}.
\end{align*}
We then deduce
$$u_\delta(-c_0\eps^{1/2-\gamma},y)e^{\frac{i}{\eps}\phi^\eps(y)}
\left[\chi_\delta(\sqrt{\eps}(y-c_0\eps^{-\gamma}))-\left(\begin{array}{c}
 0\\1
\end{array}\right)\right]= \mathcal{O}(\eps^{\gamma}), \textrm{in }L^2$$ 
and arguing as before, using \eqref{phase2} we can also deduce
$$\sqrt{\eps}\d_y\left[u_\delta(-c_0\eps^{1/2-\gamma},y)e^{\frac{i}{\eps}\phi^\eps(y)}
\left[\chi_\delta(\sqrt{\eps}(y-c_0\eps^{-\gamma}))-\left(\begin{array}{c}
 0\\1
\end{array}\right)\right]\right] = \mathcal{O}(\eps^{\frac{1}{2}-\gamma}), \textrm{in }L^2,$$
and the proof is complete.
\end{proof}

\subsection{Study of the linear approximation $f$}
\hspace{1mm}In this subsection, we prove Corollary \ref{cor:f} and Lemma \ref{lemma:f} concerning $f$ solution to 
\eqref{f}
\begin{equation*}
 i\d_s f - \left(  \begin{array}{cc}
            y+s\xi_0 & c \\
            c & -(y+s\xi_0)
           \end{array}
\right)f = 0 ;
\end{equation*}
with data \eqref{dataf} 
\begin{equation*}
  f(-c_0\eps^{-\gamma},y) = u_\delta(-c_0\eps^{1/2-\gamma},y)\left(\begin{array}{c}
                                                                  0 \\1
                                                                 \end{array}
 \right)e^{\frac{i}{\eps}\phi^\eps(y)}
\end{equation*}
where $s,y\in \R$, $\gamma \in ]0,1/6[$, and $\phi^\eps$ is given in \eqref{phasephi}.
We first prove Corollary~\ref{cor:f}.
\begin{proof}[Proof of Corollary \ref{cor:f}]
We first make a change of variables and switch the coordinates to obtain a similar system: $ \widetilde{s} = (s\xi_0+y)/\sqrt{\xi_0}$ and 
$\widetilde{f}(\widetilde{s},y)= f(s,y)$. The system in the new variables is
\begin{equation*}
 -i\d_{\widetilde{s}} \widetilde{f} - \left(  \begin{array}{cc}
 
            -\widetilde{s} & \dfrac{-c}{\sqrt{\xi_0}} \\
             \dfrac{-c}{\sqrt{\xi_0}}& \widetilde{s}
           \end{array}
\right)\widetilde{f} = 0;
\end{equation*}
and taking $\left\lbrace \begin{array}{cl}
               \widetilde{f}_1  = & u_2 \\
               \widetilde{f}_2  = & u_1
              \end{array},
 \right.$
we obtain
\begin{equation*}
 -i\d_{\widetilde{s}} u - \left(  \begin{array}{cc}
            \widetilde{s} & \dfrac{-c}{\sqrt{\xi_0}} \\
             \dfrac{-c}{\sqrt{\xi_0}}& -\widetilde{s}
           \end{array}
\right)u = 0, \quad \textrm{with }u = (u_1,u_2).
\end{equation*}
We can apply Theorem \ref{diffuFG} to $u$, with $\eta = -c/\sqrt{\xi_0}$ and write 
\begin{align*}
 u(\widetilde{s},y) &= \alpha_1(y)g_1^{s-}(\widetilde{s},y)+\alpha_2(y)g_2^{s-}(\widetilde{s},y)\\
 &= \beta_1(y)g_1^{s+}(\widetilde{s},y)+\beta_2(y)g_2^{s+}(\widetilde{s},y),
 \end{align*}
where the $\alpha_j$ are the coordinates in $(g_1^{s-}(\widetilde{s},y),g_2^{s-}(\widetilde{s},y)) $ and
the $\beta_j$ are the coordinates in $(g_1^{s+}(\widetilde{s},y),g_2^{s+}(\widetilde{s},y)).$ 
Using asymptotics of the bases from \cite{FG02}, and since 
$$f_{1,2}(s,y)=\widetilde{f}_{1,2}(\widetilde{s},y)=u_{2,1}(\widetilde{s},y),$$ we can deduce
\begin{align*}
 f(s,y) &= \alpha_2(y) e^{-i\Lambda \left(\frac{s\xi_0+y}{\sqrt{\xi_0}},y\right)}\left(\begin{array}{c}
                                                                                1\\0
                                                                               \end{array}\right)
+\alpha_1(y) e^{i\Lambda \left(\frac{s\xi_0+y}{\sqrt{\xi_0}},y\right)}\left(\begin{array}{c}
                                                                                0\\1
                                                                               \end{array}\right)\\
& \qquad + \alpha_1(y)\rho^{s-}_2\left(\dfrac{\xi_0 s+y}{\sqrt{\xi_0}}\right) 
+\alpha_2(y)\rho^{s-}_1\left(\dfrac{\xi_0 s+y}{\sqrt{\xi_0}}\right)\\
& = \beta_2(y) e^{-i\Lambda \left(\frac{s\xi_0+y}{\sqrt{\xi_0}},y\right)}\left(\begin{array}{c}
                                                                                1\\0
                                                                               \end{array}\right)
+\beta_1(y) e^{i\Lambda \left(\frac{s\xi_0+y}{\sqrt{\xi_0}},y\right)}\left(\begin{array}{c}
                                                                                0\\1
                                                                               \end{array}\right)\\
&\qquad + \beta_1(y)\rho^{s-}_2\left(\dfrac{\xi_0 s+y}{\sqrt{\xi_0}}\right)
+\beta_2(y)\rho^{s-}_1\left(\dfrac{\xi_0 s+y}{\sqrt{\xi_0}}\right),
\end{align*}
where $$\Lambda\left(\frac{s\xi_0+y}{\sqrt{\xi_0}},y\right) =  
\dfrac{\left(\frac{s\xi_0+y}{\sqrt{\xi_0}}\right)^2}{2}+\dfrac{\eta^2}{2}\log\left| \frac{s\xi_0+y}{\sqrt{\xi_0}}\right|,
\quad \rho^{s\pm}_{1,2}(s) = \mathcal{O}\left( \dfrac{1}{s}\right) \; \textrm{in }L^2.$$
Expanding $\Lambda\left(\frac{-s^\eps\xi_0+y}{\sqrt{\xi_0}},y\right)$, 
and comparing with \eqref{dataf}, we infer
$$\alpha_2 \equiv 0, \quad \textrm{and }\; \alpha_1(y) = u_\delta(-c_0\eps^{1/2-\gamma},y) e^{i\theta^\eps(y)}$$
 \begin{align*}
 i\theta^\eps(y) &= ic\;c_0\eps^{-\gamma}+\dfrac{ic_0^2\eps^{-2\gamma}}{2}
 \left(-\xi_0+\dot{\xi}(-c_0\eps^{1/2-\gamma}\tau_\eps)(\xi_0+\dot{\xi}(-c_0\eps^{1/2-\gamma}\tau_\eps)) \right)
 \\
 &\quad -\dfrac{ic_0^3S^{(3)}(-c_0\eps^{1/2-\gamma}{\tau'}_\eps)}{6}\eps^{1/2-3\gamma}
 -ic_0\eps^{-\gamma}\left(\dot{\xi}(-c_0\eps^{1/2-\gamma}\tau_\eps)-1\right)y \\
 & \quad - \dfrac{iy^2}{\xi_0}
 +\dfrac{ic^2}{2\xi_0}\log\left|\frac{-c_0\eps^{-\gamma}\xi_0+y}{\sqrt{\xi_0}} \right|
\end{align*}
Then, using 
the relation between the coordinates in each basis, we find
\begin{align*}
\beta_1(y)& =a(\eta)\alpha_1(y)-\overline{b}(\eta)\alpha_2(y) 
= a(\eta) u_\delta(-c_0\eps^{1/2-\gamma},y) e^{i\theta^\eps(y)}\\
\beta_2(y)& = b(\eta) \alpha_1(y) +a(\eta)\alpha_2(y) = b(\eta) u_\delta(-c_0\eps^{1/2-\gamma},y) e^{i\theta^\eps(y)}.
\end{align*}
Besides, using
$$\rho^{s-}_2\left(\dfrac{\xi_0 s+y}{\sqrt{\xi_0}}\right) = g_2^{s-}\left(\dfrac{\xi_0 s+y}{\sqrt{\xi_0}}\right)-
e^{-i\Lambda\left(\dfrac{\xi_0 s+y}{\sqrt{\xi_0}},y\right)}\left(\begin{array}{c}
                                                                  0 \\1
                                                                 \end{array}
 \right), $$
 and the fact that $(g_1^{s-},g_2^{s-})$ is an orthonormal basis, we have $|\rho^{s-}|_{\C^2} \leq 2$.
Moreover, for all $y\in \R$, $\rho^{s-}$ tends to zero when $\eps$ tends to zero. 
 Then, thanks to mass-conservation of $u_\delta$ and using Lebesgue's Dominated Convergence Theorem, one can deduce that 
 $$u_\delta(-c_0\eps^{1/2-\gamma}\xi_0, y)\; \rho^{s-}\left(\dfrac{\xi_0 s+y}{\sqrt{\xi_0}}\right) = o(1) \; \textrm{in }L^2.$$
 The same argument holds for other terms containg $\rho^{s\pm}$.
So, the $L^2$-norms for each coordinates, at time $-s^\eps$ and $s^\eps$ can be computed, using that 
$\|\alpha_2\|_{L^2}=\|a\|_{L^2}$:
\begin{align*}
& \|f_1(-c_0\eps^{-\gamma}) \|_{L^2}^2=0, \hspace{3cm}\|f_2(-c_0\eps^{-\gamma}) \|_{L^2}^2= \|a\|_{L^2}^2+ o(1),\\
&\|f_1(c_0\eps^{-\gamma}) \|_{L^2}^2= |b(\eta)|^2\|a\|_{L^2}^2 + o(1), \hspace{5mm} \|f_2(c_0\eps^{-\gamma}) \|_{L^2}^2= |a(\eta)|^2\|a\|_{L^2}^2 + o(1),
\end{align*}
and the proof is complete.
\end{proof}
Since the function $f$ depends on $\eps$ because of its data, we have to study its derivatives 
in order to understand if it implies a loss of power of $\eps$.
\begin{proof}[Proof of Lemma \ref{lemma:f}]
We proceed by induction. Using \eqref{f}, it is easily seen that
\begin{equation*}
 \dfrac{d}{ds}\|f\|_{L^2}^2 = 2\; Im \left\langle{f}|i\d_s f \right\rangle =0.
\end{equation*}
So for all $s \in [-c_0\eps^{-\gamma},c_0\eps^{-\gamma}]$ we have 
$$\|f(s)\|_{L^2}=\|f(-c_0\eps^{-\gamma})\|_{L^2}=\|u_\delta(-c_0\eps^{1/2-\gamma})\|_{L^2}=C_0.$$
The following step poses no problem, since $\d_y f$ satisfies:
$$i\d_s (\d_y f) - \begin{pmatrix}
                            y+s\xi_0 & c \\
                            c & -(y+s\xi_0)
                           \end{pmatrix}\d_y f 
                           = \begin{pmatrix}
                           1 & 0 \\
                           0 & -1
                           \end{pmatrix}f,$$
and
\begin{align*}
 \dfrac{d}{ds}\|\d_y f\|_{L^2}^2&=2\|\d_y f\|_{L^2}.\dfrac{d}{ds}\|\d_yf\|_{L^2}\\
 & = 2\; Im \left\langle{\d_yf}|i\d_s \d_y f \right\rangle \\
 &\leq C \| \d_yf\|_{L^2},
\end{align*}
and so
\begin{equation*}
\|\d_y f(s)\|_{L^2} \leq C|s|+ \|\d_y f(-c_0\eps^{-\gamma})\|_{L^2},
\end{equation*}
for $|s|\leq c_0 \eps^{-\gamma}$ and where 
\begin{align*}
 \d_y f(-c_0\eps^{-\gamma})&=\d_y \left(u_\delta (s,y)e^{\frac{i}{\eps}\phi^\eps(y)}\right) 
 \\
 & = 
\left(\d_y u_\delta (s,y)+\dfrac{d}{dy}\left(\dfrac{i}{\eps}\phi^\eps(y)\right)u_\delta (s,y)\right)e^{\frac{i}{\eps}\phi^\eps(y)}.
\end{align*}
Using \eqref{phase2} we deduce
\begin{equation*}
\|\d_y f(s)\|_{L^2} \leq C_1 \eps^{-\gamma}.
\end{equation*}
\\[1mm]
Now for $k\geq 1$, an easy computation shows that $\d_y^k f$ satisfies
$$i\d_s (\d_y^k f) - \begin{pmatrix}
                            y+s\xi_0 & c \\
                            c & -(y+s\xi_0)
                           \end{pmatrix}\d_y^k f 
                           = k \begin{pmatrix}
                           1 & 0 \\
                           0 & -1
                           \end{pmatrix}\d_y^{k-1}f,$$
and that
\begin{align*}
 \dfrac{d}{ds}\|\d_y^k f\|_{L^2}^2&=2\|\d_y^k f\|_{L^2}.\dfrac{d}{ds}\|\d_y^k f\|_{L^2}\\
 & = 2\; Im \left\langle{\d_y^kf}|i\d_s \d_y^k f \right\rangle \\
 &\leq \widetilde{C}_k \| \d_y^k f\|_{L^2}\| \d_y^{k-1} f\|_{L^2} \\
 &\leq \widetilde{C}_k \| \d_y^k f\|_{L^2} \eps^{-(k-1)\gamma},
\end{align*}
where we have used the induction hypothesis. Then, since $\phi^\eps(y)$ is linear in $y$ (see \eqref{phasephi}) 
we then easily notice that for all $k \in \N$
$$\d_y^k \left(u_\delta (s,y)e^{\frac{i}{\eps}\phi^\eps(s,y)}\right) = 
\sum_{j=0}^{k}C_j \d_y^ju_\delta(s,y)\; e^{\frac{i}{\eps}\phi^\eps}\left(\dfrac{d}{dy}\dfrac{i}{\eps}\phi^\eps(y) \right)^{k-j}.$$
So using \eqref{phase2}, we deduce for $|s|\lesssim \eps^{-\gamma}$
$$\|\d_y^k f(s)\|_{L^2} \leq \|\d_y^k f(-c_0\eps^{-\gamma})\|_{L^2} +\widetilde{C}_k |s| \eps^{-(k-1)\gamma} 
\leq C_k \eps^{-k\gamma},$$
which completes the proof.                            
\end{proof}
The last step is to check the validity of the approximation on the interval we consider.

\subsection{Validity of the inner approximation and conclusion}
We now prove Theorem \ref{thm:Inner}.
\\[1mm]
We use 
$$i\d_s r^\eps + \dfrac{\sqrt{\eps}}{2}\d_y^2 r^\eps - \begin{pmatrix}
                                                        y+s\xi_0 & c \\
                                                        c & -(y+s\xi_0)
                                                       \end{pmatrix}r^\eps 
                                                       = \dfrac{\sqrt{\eps}}{2}\d_y^2f+\kappa\sqrt{\eps}|r^\eps+f|^2(r^\eps+f)$$
to compute
\begin{align*}
 \dfrac{d}{ds} \|r^\eps(s) \|_{L^2}^2 &= 2 \|r^\eps(s) \|_{L^2}. \dfrac{d}{ds}\|r^\eps(s) \|_{L^2}\\
 &=2 \; Im \left\langle \overline{r^\eps} \left| \dfrac{-\sqrt{\eps}}{2}\d_y^2 r^\eps + V_\delta(s\xi_0+y)r^\eps \right.\right. \\
 & \qquad \left.+ \dfrac{\sqrt{\eps}}{2}\d_y^2 f + \kappa\sqrt{\eps}|r^\eps+f|^2 (r^\eps+f) \right\rangle.
\end{align*}
We check that the first and second terms give no contribution since they are real-valued when 
one computes the scalar products with $r^\eps$. For the remaining terms, we observe
\begin{align*}
 2 \; Im \left\langle \overline{r^\eps} | \dfrac{\sqrt{\eps}}{2} \d_y^2 f \right\rangle &\leq \sqrt{\eps}\|r^\eps(s)\|_{L^2} \; \|\d_y^2f\|_{L^2}\\
 2 \; Im \left\langle \overline{r^\eps} | \sqrt{\eps} |r^\eps+f|^2(r^\eps+f) \right\rangle  &\lesssim \kappa \sqrt{\eps}\left(\|r^\eps\|_{L\infty}^2
 +\|f\|_{L^\infty}^2\right)\|f\|_{L^2}\|r^\eps\|_{L^2}.
\end{align*}
We perform a bootstrap argument with the following assumption
\begin{equation}
 \label{bootstrap2}
 \|r^\eps \|_{L^\infty}^2\leq \eps^{-\alpha},
\end{equation}
with $\alpha = (1-3\gamma)/2$.
Besides, we use Gagliardo-Nirenberg inequality and Lemma \ref{lemma:f} to write
$$\|f\|_{L^\infty}\lesssim \|f\|_{L^2}^{1/2}\|\d_y f\|_{L^2}^{1/2} \lesssim \eps^{-\gamma/2}.$$
These points allow us to write
\begin{align*}
\dfrac{d}{ds} \|r^\eps(s)\|_{L^2} &\lesssim \sqrt{\eps}\|\d_y^2f\|_{L^2} + \kappa\sqrt{\eps}\left(\|r^\eps\|_{L\infty}^2
 +\|f\|_{L^\infty}^2\right)\|f\|_{L^2} \\
 &\leq C \left[ \eps^{1/2-2\gamma}+\kappa \eps^{1/2-\alpha}+ \kappa \eps^{1/2-\gamma}\right]\\
 & \leq C \left[ \eps^{1/2-2\gamma}+\kappa \eps^{3\gamma/2}\right].
\end{align*}
Integrating these terms, we obtain for $s \in [-c_0 \eps^{-\gamma},c_0 \eps^{-\gamma}]$
\begin{align}
\notag
 \|r^\eps(s)\|_{L^2} &\leq \|r^\eps(-c_0 \eps^{-\gamma})\|_{L^2} + \widetilde{C}\left[
 \eps^{1/2-3\gamma} + \kappa \eps^{\gamma/2}\right] \\
 \notag
 & \leq 
 C_1\eps^{\gamma} +
C \eps^{1/2-3\gamma} + C\kappa \eps^{\gamma/2}\\
 \label{r}
 &\leq C\kappa \eps^{\gamma/2},
\end{align}
where $C_1$ is given by \eqref{wL2}, and where we only keep the worst contributions. 
\\[1mm]
We now wish to prove the validity of the bootstrap assumption \eqref{bootstrap2}. 
Thus, we look for the weighted derivative of $r^\eps$, which satisfies
\begin{multline*}i\d_s (\sqrt{\eps}\d_yr^\eps) + \dfrac{\sqrt{\eps}}{2}\d_y^2 (\sqrt{\eps}\d_yr^\eps) - \begin{pmatrix}
                                                        y+s\xi_0 & c \\
                                                        c & -(y+s\xi_0)
                                                       \end{pmatrix}(\sqrt{\eps}\d_yr^\eps) = \\
                                                       \begin{pmatrix}
                                                        1 & 0\\
                                                        0 & -1
                                                       \end{pmatrix}\sqrt{\eps} r^\eps  +
                                             \dfrac{\eps}{2}\d_y^3f+\kappa\;\eps\d_y\left(|r^\eps+f|^2(r^\eps+f)\right). 
                                             \end{multline*}
We first notice that
\begin{equation*}
 \d_y\left(|r^\eps+f|^2(r^\eps+f)\right) = (r^\eps+f)^2 \left(\overline{\d_y r^\eps}+\overline{\d_y f} \right)
 + 2 |r^\eps+f|^2 \left(\d_y r^\eps +\d_y f \right).
\end{equation*}
Proceeding in the same way as before, and thanks to \eqref{bootstrap2}, \eqref{r} and Lemma \ref{lemma:f}, we see that
\begin{align*}
 \dfrac{d}{ds}\|\sqrt{\eps}\d_y r^\eps\|_{L^2} & \lesssim \sqrt{\eps}\|r^\eps\|_{L^2}+ \eps \|\d_y^3 f\|_{L^2} 
 \\
 & \quad + 
 \kappa\sqrt{\eps}\left[\|r^\eps \|_{L^\infty}^2 + \|f \|_{L^\infty}^2 \right]\; 
 \left[\|\sqrt{\eps}\d_y r^\eps\|_{L^2}+\|\sqrt{\eps}\d_y f \|_{L^2} \right]\\
 \\
 & \lesssim \kappa \eps^{1/2+\gamma/2}+\eps^{1-3\gamma}+\kappa\eps^{1-\alpha-\gamma}+\kappa\eps^{1-2\gamma}\\
 & \quad + \left(\kappa \eps^{1/2-\alpha}+\kappa\eps^{1/2-\gamma} \right)\|\sqrt{\eps}\d_y r^\eps\|_{L^2} \\
 \\
 & \lesssim \kappa \eps^{1/2+\gamma/2}+\eps^{1-3\gamma} 
 + \left(\kappa \eps^{3\gamma/2}+\kappa\eps^{1/2-\gamma} \right)\|\sqrt{\eps}\d_y r^\eps\|_{L^2}.
\end{align*}
We then find by integration, for $|s| \leq c_0 \eps^{-\gamma}$
\begin{align*}
 \|\sqrt{\eps}\d_y r^\eps(s)\|_{L^2} &\leq  \|\d_y r^\eps(-c_0 \eps^{-\gamma})\|_{L^2}+ C \left[
  \eps^{1-4\gamma}+\kappa \eps^{1/2-\gamma/2}\right.\\
 &\quad \left. +\int_{-c_0\eps^{-\gamma}}^{s} 
 \left(\kappa \eps^{3\gamma/2}+\eps^{1/2-\gamma} \right)\|\sqrt{\eps}\d_y r^\eps(z)\|_{L^2}dz\right],
\end{align*}
and using Gronwall Lemma, we deduce for $0<\gamma<1/6$
\begin{align}
\notag
  \|\sqrt{\eps}\d_y r^\eps(s)\|_{L^2} & \leq C_2 \eps^\gamma +  C\left[\eps^{1-4\gamma}+\kappa \eps^{1/2-\gamma/2} \right]
  e^{C(\eps^{1/2-2\gamma}+\kappa\eps^{\gamma/2})}\\
\label{dr}
  & \leq C_2 \eps^{\gamma},
\end{align}
where $C_2$ is given by \eqref{wH1}, where we have only kept the worst term 
and since $C\eps^{1/2-\gamma}+\kappa\eps^{3\gamma/2}\leq 1$, for $\eps$ sufficiently small, 
the exponential term is bounded independently of $\eps$ on the time interval we consider.
\\[1mm]
Then, Gagliardo Nirenberg inequality gives us
\begin{align*}
 \|r^\eps\|_{L^\infty}^2 &\leq C_{GN}^2 \| r^\eps(s)\|_{L^2}\;\|\d_y r^\eps(s)\|_{L^2}\\
 & \leq \widetilde{C}\kappa \eps^{\gamma/2}\eps^{\gamma-1/2} \leq \widetilde{C}\kappa \eps^{\frac{3\gamma-1}{2}},
\end{align*}
and the bootstrap assumption \eqref{bootstrap2} holds as long as 
$$\widetilde{C}\kappa \eps^{\frac{3\gamma-1}{2}} \leq \eps^{\frac{3\gamma-1}{2}} \Leftrightarrow \widetilde{C}\kappa  \leq 1,$$
and for a small coefficient $\kappa < 1/\widetilde{C}$, the inequality is true and the proof of Theorem~\ref{thm:Inner} is complete.
\section{Transition between the modes}
\label{conclusion}
\noindent
In this section, we prove Corollary \ref{cor:main} to conclude.
\\
By definition of $v^\eps$, we have
$$\psi^\eps(t^\eps,x)  = \eps^{-1/4}v^\eps\left(\dfrac{t^\eps}{\sqrt{\eps}},\dfrac{x-t^\eps\xi_0}{\sqrt{\eps}} \right)
e^{i\frac{\xi_0^2t^\eps}{2\eps}+i\frac{\xi_0(x-t^\eps\xi_0)}{\eps}}$$
and from Theorem \ref{thm:Inner} we infer the following estimate in $L^2$:
\begin{equation*}
 \left( \begin{array}{c}
\psi^\eps_1(t^\eps,x) \\
\psi^\eps_2(t^\eps,x)
\end{array}\right) = \eps^{-1/4}\left( \begin{array}{c}
                             f_1 \left(\dfrac{t^\eps}{\sqrt{\eps}},\dfrac{x-t^\eps\xi_0}{\sqrt{\eps}} \right)\\
                             f_2 \left(\dfrac{t^\eps}{\sqrt{\eps}},\dfrac{x-t^\eps\xi_0}{\sqrt{\eps}} \right)
                          \end{array}\right)
e^{i\frac{\xi_0^2t^\eps}{2\eps}+i\frac{\xi_0(x-t^\eps\xi_0)}{\eps}}
+ \mathcal{O}(\eps^{\gamma/2}),
\end{equation*}
where the last term contains the $L^2-$ norm of $r^\eps$.
We then write
$$\psi^\eps_\pm(t^\eps,x)= \left\langle \psi^\eps(t^\eps,x)|\chi_\delta^\pm(x) \right\rangle, $$
and using Section \ref{eigen}:
\begin{align*}
 \psi^\eps_+ (t^\eps,x)& = \psi^\eps_1 (t^\eps,x)\cos \left(\Phi\left(\dfrac{x}{\delta}\right)\right)+ 
 \psi^\eps_2 (t^\eps,x) \sin \left(\Phi\left(\dfrac{x}{\delta}\right)\right)\\
 \psi^\eps_- (t^\eps,x)& = \psi^\eps_1 (t^\eps,x) \sin \left(\Phi\left(\dfrac{x}{\delta}\right)\right)- 
 \psi^\eps_2 (t^\eps,x) \cos \left(\Phi\left(\dfrac{x}{\delta}\right)\right),
\end{align*}
where $\Phi$ is given by 
$\Phi\left(\dfrac{x}{\delta}\right) = \dfrac{1}{2}\arctan\left(\dfrac{\delta}{x}\right).$
\\
For $\psi^\eps_+$, we have
\begin{align*}
 \|\psi^\eps_+(t^\eps) \|_{L^2}^2 & = \int_{y\in \R} \left|f_1\left(\dfrac{t^\eps}{\sqrt{\eps}},y\right) 
 \cos\left(\dfrac{1}{2}\arctan\left(\dfrac{c}{y+\frac{t^\eps}{\sqrt{\eps}}\xi_0}\right)\right) \right. \\
 & \left.\quad + f_2\left(\dfrac{t^\eps}{\sqrt{\eps}},y\right) 
 \sin\left(\dfrac{1}{2}\arctan\left(\dfrac{c}{y+\frac{t^\eps}{\sqrt{\eps}}\xi_0}\right)\right)\right|^2  dy + 
 \mathcal{O}(\eps^{\gamma/2})\\
 \\
  \|\psi^\eps_-(t^\eps) \|_{L^2}^2 & = \int_{y\in \R} \left|f_2\left(\dfrac{t^\eps}{\sqrt{\eps}},y\right) 
 \cos\left(\dfrac{1}{2}\arctan\left(\dfrac{c}{y+\frac{t^\eps}{\sqrt{\eps}}\xi_0}\right)\right) \right.\\
 & \left. \quad -
 f_1\left(\dfrac{t^\eps}{\sqrt{\eps}},y\right) 
 \sin\left(\dfrac{1}{2}\arctan\left(\dfrac{c}{y+\frac{t^\eps}{\sqrt{\eps}}\xi_0}\right)\right)\right|^2  dy + 
 \mathcal{O}(\eps^{\gamma/2}).
\end{align*}
We split up the integrals into:
$$ \int_{y\in \R} \ldots dy = 
 \int_{|y|\geq \frac{\tau\xi_0\sqrt{\eps}}{t^\eps}} \ldots+\int_{|y|\leq \frac{\tau\xi_0\sqrt{\eps}}{t^\eps}}\ldots $$
 for $0<\tau<1$ and we use the same arguments as in Section \ref{match}.
 \begin{itemize}
  \item For $|y|\geq \frac{\tau\xi_0\sqrt{\eps}}{t^\eps}$, we have $|y|\sqrt{\eps}/(\xi_0t^\eps)\geq \tau $ and with 
  Corollary \ref{cor:f} we have
  \begin{align*}
   \int_{|y|\geq \frac{\tau\xi_0\sqrt{\eps}}{t^\eps}} \ldots &\lesssim
   \int_{|y|\geq \frac{\tau\xi_0\sqrt{\eps}}{t^\eps}} |f_1|^2+|f_2|^2 \; dy \\
   & \lesssim \int_{|y|\geq \frac{\tau\xi_0\sqrt{\eps}}{t^\eps}} |u_\delta(-t^\eps,y)|^2 \times 
   \dfrac{\eps|y|^2}{\xi_0^2(t^\eps)^2} \times \dfrac{\xi_0^2(t^\eps)^2}{\eps|y|^2} dy \\
   &= \mathcal{O}\left(\left(\dfrac{t^\eps}{\sqrt{\eps}}\right)^2\right) = \mathcal{O}(\eps^{2\gamma})
  \end{align*}

  \item For $|y|\leq \frac{\tau\xi_0\sqrt{\eps}}{t^\eps}$, we have 
  $$\left|y+\dfrac{\tau t^\eps \xi_0}{\sqrt{\eps}} \right|\geq  \dfrac{\tau t^\eps \xi_0}{\sqrt{\eps}} - |y| 
  \geq (1-\tau)\dfrac{\xi_0 t^\eps}{\sqrt{\eps}}, \quad \textrm{where }\tau \in ]0,1[.$$
 We deduce $\left|y+\dfrac{\tau t^\eps \xi_0}{\sqrt{\eps}} \right|^{-1} \leq \mathcal{O}(\eps^\gamma) $ and so
 we can use the same asymptotics as in Section \ref{match} and write
  $$\arctan\left(\dfrac{c}{y+c_0\xi_0\eps^{-\gamma}}\right) = \dfrac{c}{y+c_0\xi_0\eps^{-\gamma}}+ 
\mathcal{O}\left(\left(\dfrac{c}{y+c_0\xi_0\eps^{-\gamma}}\right)^3\right), $$
which gives
$$\sin \left(\dfrac{1}{2}{\arctan\left(\dfrac{c}{y+c_0\xi_0\eps^{-\gamma}}\right)}\right) = \mathcal{O}(\eps^\gamma),
$$
and
$$\cos \left(\dfrac{1}{2}{\arctan\left(\dfrac{c}{y+c_0\xi_0\eps^{-\gamma}}\right)} \right) = 1+ \mathcal{O}(\eps^{2\gamma}).$$
We then compute for $\psi^\eps_+$:
\begin{align*}
 \int_{|y|\leq \frac{\tau\xi_0\sqrt{\eps}}{t^\eps}}\ldots & = \int_{|y|\leq \frac{\tau\xi_0\sqrt{\eps}}{t^\eps}}
 |b(c/\sqrt{\xi_0})|^2 |u_\delta(-t^\eps,y)|^2 \left(1+\mathcal{O}(\eps^{3\gamma}) \right) dy \\
 & = \left(1+ \mathcal{O}(\eps^{3\gamma}) \right)|b(c/\sqrt{\xi_0})|^2 \|a \|_{L^2}^2,
\end{align*}
and since $\|\psi^\eps_-(t^\eps)\|^2_{L^2}=\|a\|^2_{L^2}-\|\psi^\eps_+(t^\eps)\|^2_{L^2} = 
\left(e^{-\frac{\pi c^2}{\xi_0}}+\mathcal{O}(\eps^{3\gamma}) \right)\|a\|_{L^2}^2$
 \end{itemize}
and Corollary \ref{cor:main} is proved. We can then infer that the propagation through an avoided crossing point of this type 
generates transition of energy between both levels.
\\[5mm]
\textbf{Acknowledgments.} The author whishes to express his gratitude to Thomas Duyckaerts and Clotilde Fermanian for suggesting 
the problem and for many helpful and stimulating discussions.
\bibliographystyle{amsplain}
\bibliography{biblio}
\end{document}